\newtheorem{example}{Example}
\newtheorem{theorem}{Theorem}
\newtheorem{assumption}{Assumption}
\newtheorem{convassumption}{Convergence Assumption}
\newtheorem*{assumption*}{Assumption}
\newtheorem{definition}{Definition}
\newtheorem{proposition}{Proposition}
\newtheorem{lemma}{Lemma}
\newtheorem{remark}{Remark}
\newtheorem*{claim*}{Claim}
\DeclareFontFamily{U}{matha}{\hyphenchar\font45}
\DeclareFontShape{U}{matha}{m}{n}{
	<5> <6> <7> <8> <9> <10> gen * matha
	<10.95> matha10 <12> <14.4> <17.28> <20.74> <24.88> matha12
}{}
\DeclareSymbolFont{matha}{U}{matha}{m}{n}
\DeclareFontFamily{U}{mathx}{\hyphenchar\font45}
\DeclareFontShape{U}{mathx}{m}{n}{
	<5> <6> <7> <8> <9> <10>
	<10.95> <12> <14.4> <17.28> <20.74> <24.88>
	mathx10
}{}
\DeclareSymbolFont{mathx}{U}{mathx}{m}{n}
\DeclareMathDelimiter{\vvvert}{0}{matha}{"7E}{mathx}{"17}
\title{The Causal Impact of Credit Lines on Spending Distributions}
\author[1]{Yijun Li$^*$}
\author[1]{Cheuk Hang Leung$^*$}
\author[2]{Xiangqian Sun}
\author[1]{Chaoqun Wang}
\author[1]{Yiyan Huang}
\author[3]{Xing Yan}
\author[1]{Qi Wu $^\mathcal{y}$}
\author[4]{Dongdong Wang}
\author[4]{Zhixiang Huang}
\affil[1]{School of Data Science, City University of Hong Kong}
\affil[2]{Department of Financial and Actuarial Mathematics, Xi’an Jiaotong Liverpool University}
\affil[3]{Institute of Statistics and Big Data, Renmin University of China}
\affil[4]{JD Digits}
\date{}
\begin{document}
\maketitle
\def\thefootnote{*}\footnotetext{These authors contributed equally to this work.}
\def\thefootnote{$\mathcal{y}$}\footnotetext{The corresponding author (qiwu55@cityu.edu.hk).}

\begin{abstract}
	Consumer credit services offered by e-commerce platforms provide customers with convenient loan access during shopping and have the potential to stimulate sales. To understand the causal impact of credit lines on spending, previous studies have employed causal estimators, based on direct regression (DR), inverse propensity weighting (IPW), and double machine learning (DML) to estimate the treatment effect. However, these estimators do not consider the notion that an individual's spending can be understood and represented as a distribution, which captures the range and pattern of amounts spent across different orders. By disregarding the outcome as a distribution, valuable insights embedded within the outcome distribution might be overlooked. This paper develops a distribution-valued estimator framework that extends existing real-valued DR-, IPW-, and DML-based estimators to distribution-valued estimators within Rubin’s causal framework. We establish their consistency and apply them to a real dataset from a large e-commerce platform. Our findings reveal that credit lines positively influence spending across all quantiles; however, as credit lines increase, consumers allocate more to luxuries (higher quantiles) than necessities (lower quantiles). Our code is available at \url{https://github.com/lyjsilence/The-Causal-Impact-of-Credit-Lines-on-Spending-Distributions}.
\end{abstract}

\section{Introduction}
``Buy now, pay later'' (BNPL) is a FinTech credit product offered by e-commerce platforms that allow consumers to make purchases first and defer payments later. BNPL is becoming increasingly popular due to its convenience in online shopping \citep{guttman2023buy}.  In practice, e-commerce platforms assign different credit lines (the total amount of money that the platforms lends to a consumer) to potential customers according to their personal information and the history of purchases,  payments, and default behaviors. 

The primary goal of e-commerce platforms in introducing BNPL is to alter the consumption behavior of consumers, which is usually characterized as a specific \textbf{\textit{spending distribution}} formed by the consumption amounts of the consumer’s all orders. The spending distributions of various consumers are different. For example, in Figure \ref{fig:spending distribution}, the spending distribution of some consumers may exhibit a long tail, indicating a preference for both low-price necessities and high-price luxury items, whereas other consumers focus more on middle-valued products.

\begin{figure}
	\centering
	\includegraphics[width=0.5\columnwidth]{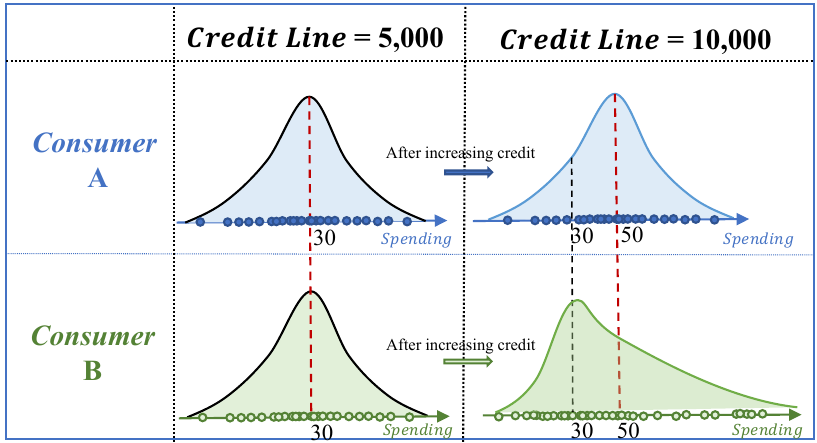}
	\caption{An example for the impact of credit lines change on spending distribution shift (one point stands for spending of one order). \label{fig:example}}
\end{figure}

An essential question for e-commerce platforms is whether and how credit lines affect the consumption behavior of consumers.  Previous studies have shown that increasing credit lines can lead to increased spending amounts, e.g., \cite{aydin2022consumption, soman2002effect}. Nevertheless, they use a scalar quantity (e.g., \textbf{\textit{average spending}} of all orders) to represent the spending of each consumer, which overlooks the complexity of consumption behaviors. For example, consider two consumers (A and B) in Figure \ref{fig:example}. When the credit lines of them both equal 5,000, their spending distributions formed by 50 orders are the same, with an average spending of 30 dollars. Supposing the platform increases their credit lines to 10,000, consumer A prefers to increase the spending of all the orders by 20 dollars, and thus the shape of spending distribution does not change but parallelly shifts to the right by 20. On the other hand, consumer B prefers to purchase more luxury goods and remains the spending amounts of orders for necessities unchanged. The shape of consumer B's spending distribution has shifted dramatically, but the average spending is the same as the first consumer (also increased from 30 to 50).  Even though these two consumers have the same average spending, their spending behaviors are distinct after the change in credit lines. In this case, focusing only on the average spending loses some of the information of distribution (e.g., the part of quantile information). To this end, we propose to investigate \textit{how the changes of credit lines affect the shift of spending distributions}.  However, this raises another question:  since classical causal inference literature targets the outcome of each individual as a scalar, \textit{how can we perform causal inference when the outcome of each individual is a distribution?}

In this paper, we employ a novel causal framework to tackle this problem, where the outcome of each unit is a distribution, and the treatment takes multiple values.  Based on Rubin's causal framework \citep{rubin1977assignment, rubin1978bayesian, rubin2005causal}, we propose three estimators of target quantities: Direct Regression (DR) estimator, Inverse Propensity Weighting (IPW) estimator, and Doubly Machine Learning (DML) estimator. We first study the statistical asymptotic properties of these estimators. Then, to implement these estimators, we develop a deep-learning-based model named \textbf{N}eural \textbf{F}unctional \textbf{R}egression \textbf{Net} (\textbf{NFR Net}) to estimate the complex relationship between functional output and scalar input.  To assess the effectiveness of our methods, we conduct a simulation study. The results reveal that all three estimators are effective, especially for the DML estimator. We finally apply our approach to investigate the causal impact of credit lines on spending distributions based on a real-world dataset collected from a large e-commerce platform. We find that when credit lines increase, consumers' spending tends to rise, which aligns with previous literature. Additionally, we reveal that the impact of credit lines is more significant in the high-quantile range of spending distribution, suggesting that the increase in credit lines is associated with greater demands for luxury goods rather than necessities.

Our contributions can be summarized as follows:
\begin{itemize}
	\item This is the first paper that explores the causal impact of credit lines on spending when the spending of each consumer is summarized as a distribution. Compared to the literature, we discover more detailed findings on the distribution quantiles.
	\item We consider the treatment takes multiple values, and we propose three estimators (i.e., DR, IPW, and DML estimators) for the target quantities. We study the statistical properties of these three estimators and compare them through a simulation experiment. 
	\item The relation between functional output and scalar input is always non-linear and complex. Compared to existing works that captured it by a parameterized function or a linear function, we develop a deep learning model named NFR Net to learn this relationship.
\end{itemize}

\section{Related Work}
Causal inference is a significant challenge in various fields, such as finance \citep{huang2021causal} and health care \citep{shi2019adapting}. %Although randomized controlled trials (RCTs) are the gold standard for establishing causal relationships, they can be costly in terms of time and money and raise ethical concerns. As a result, the use of observational data has recently attracted increasing attention. 
The key assumption of classical causal inference is that, given the treatment $D=d$, all the units have the same potential outcome distribution (unconditional). As a result, the realization of the outcome for each individual is a \textit{scalar point} drawing from that potential outcome distribution (for instance, in Figure \ref{fig:ATE and QTE} when $D=d$, the blue (red) point is a realization of the $i^{\text{th}}$ ($j^{\text{th}}$) unit). Under the assumption, several causal quantities are introduced and studied. For instance, the average treatment effect (ATE) \citep{chernozhukov2018double} is the difference between the means of any two potential outcome distributions (i.e., $\mathbb{E}[Y(D=\bar{d})]-\mathbb{E}[Y(D=d)]$, or see the left half of Figure \ref{fig:ATE and QTE}). Another quantity is the quantile treatment effect (QTE) \citep{chernozhukov2005iv} that studies the difference between two potential outcome distributions at $\tau$-quantiles (i.e., $Q(\tau, Y(D=\bar{d}))-Q(\tau, Y(D=d))$), or see the right half of Figure \ref{fig:ATE and QTE}).

Various methods have been proposed to estimate the causal effect between treatment and outcome. %One approach is applying (semi)-parametric estimation of the density functions under various treatments (e.g., \cite{kennedy2021semiparametric}). 
A common approach is constructing the estimators for the target quantities. For example,  Direct Regression (DR) incorporates all confounding factors into a single regression function. The inverse propensity weighting (IPW) method \citep{rosenbaum1983central, hirano2003efficient}, on the other hand, assigns weights to the units based on their propensity scores which mimic RCTs in the pseudo population. However, both of them require accurate estimations of the nuisance parameters, such as the regression function and propensity scores. Doubly Machine Learning (DML) \citep{chernozhukov2018double} method overcomes the shortcomings. It has the doubly robust property such that the accuracy of estimating nuisance parameters can be loosened. 

The above methods are restricted when the outcome of each unit includes many observations or points and they constitute a \textit{distribution}. For example, the shopping spending of a consumer may differ each time, and all the spending amounts form a distribution. In this case, it is impossible to infer the causal relationship via the standard framework unless we reduce the distributions to points (e.g., take the mean).  Thus, it is necessary to seek alternative frameworks for distributional outcomes.

The distributional outcome can be treated as a continuous function. It is closely related to the field of functional data analysis that analyzes data under information varying over a continuum \citep{ramsay2005fitting, wang2016functional, cai2022robust, chen2016variable}. \cite{jacobi2016bayesian} and \cite{chib2007modeling} apply the functional data analysis to study the relationship between functional outcomes and independent variables based on the panel dataset. Nevertheless, they do not focus on the causal studying. \cite{ecker2023causal} considers a causal framework to study the impact of treatment on the functional outcome. However, their work conducts causal inference in Euclidean space. It is believed that the random structure of the distributional outcome is destroyed in the Euclidean space \citep{verdinelli2019hybrid, panaretos2019statistical}. As such, \cite{lin2023causal} considers the causal study in the Wasserstein space, and we extend their framework to study the causal effect on distributional outcomes under multiple treatments and with a deep learning model NFR Net (statistical properties can be ensured as well). In this case, the realization of the outcome for each unit is a \textit{distribution} (for example, in Figure \ref{fig:causal effect map} when $D=d$, the blue (red) distribution is a realization of the $i^{\text{th}}$ ($j^{\text{th}}$) unit).

\begin{figure*}[tbh]
	\begin{minipage}{.53\textwidth}
		\centering
		\includegraphics[width=\columnwidth]{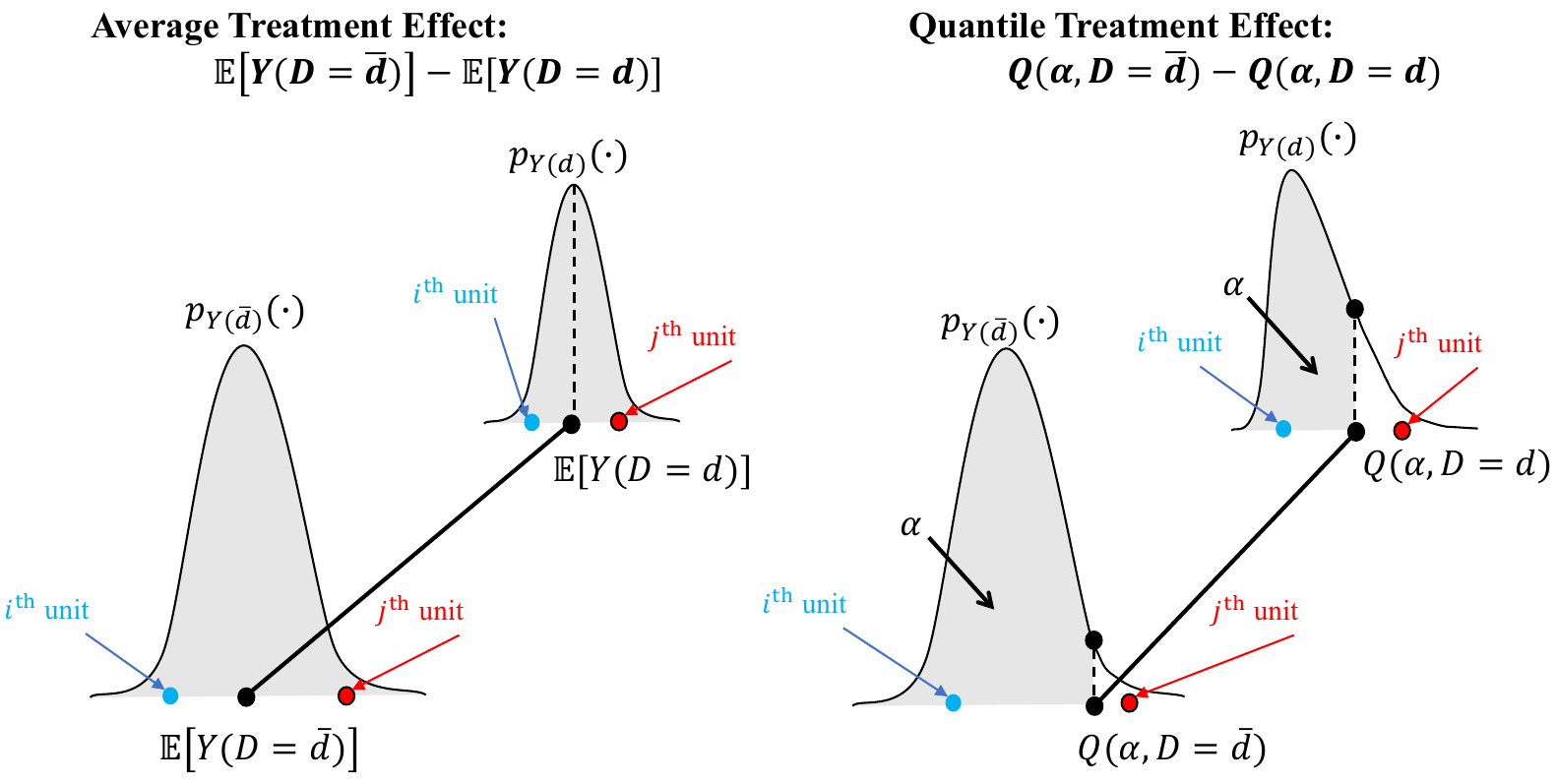}
		\caption{ATE and QTE in the literature. \label{fig:ATE and QTE}}
		\label{fig:high_order_dependencies}
	\end{minipage}
	\quad
	\begin{minipage}{.45\textwidth}
		\centering
		\includegraphics[width=\columnwidth]{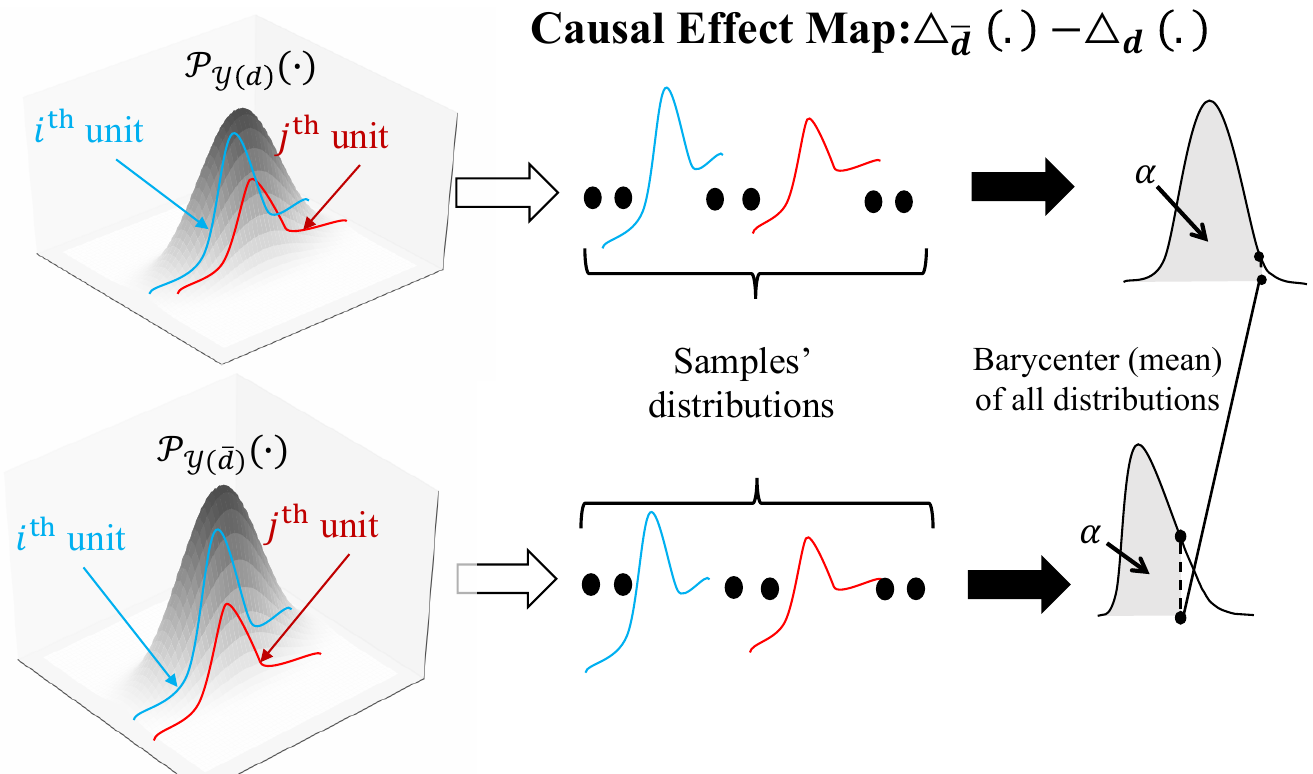}
		\caption{Causal Effect Map in our paper. \label{fig:causal effect map}}
		\label{fig:delinquency distribution}
	\end{minipage}
\end{figure*}

\section{Causal Inference Framework}\label{sec:background}
In this paper, we denote the \textit{treatment} as $D$ such that $D\in\mathfrak{D}=\{d^{1},\dots,d^{r}\}$, the \textit{covariates/confounders} as $\mathbf{X}=[X^{1},\cdots,X^{n}] \in \mathcal{X}$, where $\mathcal{X}$ is a bounded set in $\mathbb{R}^{n}$ with distribution $F_{\mathbf{X}}$.  With scalar outcomes, prior literature defines $Y$ as the outcome variable and $Y(d)$ as the potential outcome variable when receiving treatment $D=d$. Note that $Y=\sum_{i=1}^{r}Y(d^i) \cdot \mathbf{1}_{\{D=d^i\}}$. Accordingly, the \textit{potential outcome distribution and density} of ($Y$, $Y(d)$) is ($F_{Y}$, $F_{Y(d)}$) and ($P_Y, P_{Y(d)}$). In our framework, we consider the case where the outcome of each unit is a distribution that varies across units. To distinguish the differences, we use $\mathcal{Y}$ as the outcome variable and $\mathcal{Y}(d)$ as the potential outcome variable when receiving treatment $D=d$. Similarly, $\mathcal{Y}=\sum_{i=1}^{r}\mathcal{Y}(d^i) \cdot \mathbf{1}_{\{D=d^i\}}$. We can then define the \textit{potential outcome distribution and density} of ($\mathcal{Y}, \mathcal{Y}(d)$) as ($\mathcal{F}_{\mathcal{Y}}$, $\mathcal{F}_{\mathcal{Y}(d)}$) and $(\mathcal{P}_\mathcal{Y}, \mathcal{P}_{\mathcal{Y}(d)})$. We assume that there are $N$-independent units, i.e., $\{(D_{s},\mathbf{X}_{s},\mathcal{Y}_{s})\}_{s=1}^N$.

\subsection{Causal Assumptions}\label{sec:causal framework}
The following causal assumptions are standard under Rubin's framework \citep{rubin2005causal}: (1) \textit{Consistency} (i.e., if $D=d^{i}$ occurs, then $\mathcal{Y}=\mathcal{Y}(d^{i})$ a.s.); (2) \textit{Ignorability/Unconfoundness} (i.e., $\mathcal{Y}(d^{i})\perp\!\!\!\!\perp D|\mathbf{X}, \forall i\in\{1,\dots,r\}$); (3) \textit{Overlap} (i.e., $\mathbb{P}\{D=d^{i}|\mathbf{X}\}>0, \forall i\in\{1,\dots,r\}$). We defer detailed explanations about the causal assumptions and the essentialness of each assumption to Appendix A.

\begin{table*}
	\small
	\caption{Comparisons between our framework and the framework given in the literature.}
	\label{table:Comparisons}
	\centering
	\begin{tabular}{cccccccccccccccccccccc}
		\toprule
		& Our framework     &  Literature framework \\
		\midrule
		Treatment/Covariates variable & $D$/$\mathbf{X}$  &  $D$/$\mathbf{X}$ \\
		Outcome variable   & $\mathcal{Y}$, $\mathcal{Y}(d)$  & $Y$, $Y(d)$ \\
		Potential outcomes distribution (density)  & $\mathcal{F}_{\mathcal{Y}(d)}(\cdot)$ ($\mathcal{P}_{\mathcal{Y}(d)}(\cdot)$) & $F_{Y(d)}(\cdot)$ ($P_{\mathcal{Y}(d)}(\cdot)$)\\
		Metric & Wasserstein & Euclidean\\
		Space of outcome variable  & $\mathcal{W}_{2}(\mathcal{I})$    & $\mathcal{I}\in\mathbb{R}$  \\
		Realization of outcome variable  & distribution       & scalar  \\
		Target quantity  & $\bigtriangleup_{d^{i}}$, $\bigtriangleup_{d^{ij}}$  & $\mathbb{E}[Y(d^{i})]$, $\mathbb{E}[Y(d^{i})]-\mathbb{E}[Y(d^{j})]$ \\
		
		\bottomrule
	\end{tabular}
\end{table*}

\subsection{Causal Quantities on Distributions}\label{sec:causal quantities}
\begin{figure} 
	\centering
	\includegraphics[scale=0.38]{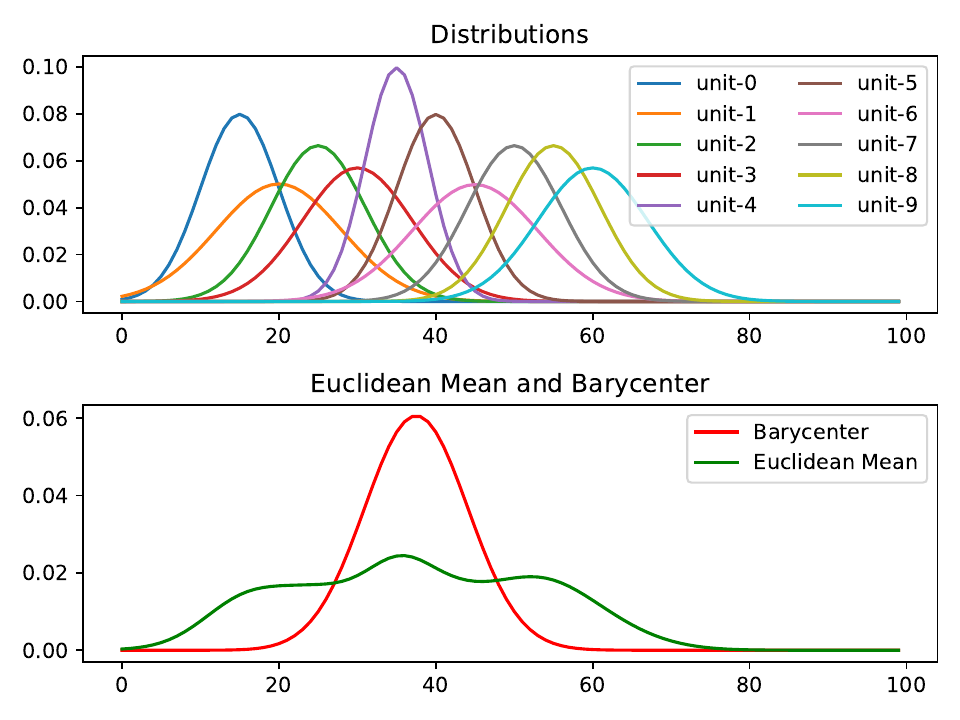}
	\caption{The Euclidean mean and Wasserstein mean (Barycenter) of 10 distributions. \label{fig:barycenter}}
\end{figure}

In our context, the realization of $\mathcal{Y}$ for each unit is a distribution. It is inappropriate to conduct causal inference in the Euclidean space as it destroys the structure of distributions. For example, Figure \ref{fig:barycenter} displays the distributions for 10 individual units (all are Gaussian distributions with different mean and variance), and the ``mean'' distribution of these 10 distributions obtained from Wasserstein metric (Barycenter) and Euclidean metric. We notice that the ``mean'' distribution cannot preserve the original Gaussian distribution structure unless the Wasserstein metric is used. We thus choose to conduct causal inference in the \textit{Wasserstein space} \citep{villani2021topics, panaretos2019statistical, feyeux2018optimal}. Here, we use the $p$-Wasserstein metric to characterize the ``distance'' between two distributions (see Definition \ref{def:Wasserstein}). In the sequel, we let the realizations of $\mathcal{Y}$, $\mathcal{Y}(d)$ reside in $\mathbb{R}$.

\begin{definition}\label{def:Wasserstein}
	Let $\mathcal{I}\subset\mathbb{R}$, $\mathcal{W}_{p}(\mathcal{I})=\{\lambda:\int_{\mathcal{I}}s^{p}d\lambda(s)<\infty\}$ ($\lambda$ is a distribution), and $\Lambda(\lambda_{1},\lambda_{2})$ be the set containing the joint distribution $\Pi(\lambda_{1}(s),\lambda_{2}(t))$ whose marginals are $\lambda_{1}$ and $\lambda_{2}$. The $p$-Wasserstein metric between two distributions $\lambda_{1}$ and $\lambda_{2}$ is
	\begin{equation*}
		{\small
			\begin{aligned}
				\mathbb{D}_{p}(\lambda_{1},\lambda_{2}) = \bigg\{\underset{\Pi\in\Lambda(\lambda_{1},\lambda_{2})}{\inf}\int_{\mathcal{I}}|s-t|^{p}d\Pi(\lambda_{1}(s),\lambda_{2}(t))\bigg\}^{\frac{1}{p}}.
			\end{aligned}
		}
	\end{equation*}
\end{definition}
$\mathbb{D}_{p}(\cdot,\cdot)$ satisfies the \textit{axioms of a metric} (i.e., non-negativity, symmetric, and triangle inequality). Usually, we set $p=2$. Next, we introduce two quantities - the \textit{causal map} and the \textit{causal effect map}.

\begin{definition}\label{def:causal map}
	The causal map of treatment $d^{i}$ is denoted as $\bigtriangleup_{d^i}$ \footnote{ $\bigtriangleup_{d^i}$ is a function and should be $\bigtriangleup_{d^i}(\cdot)$ formally. In the sequel, we use both $\bigtriangleup_{d^i}$ and $\bigtriangleup_{d^i}(\cdot)$ interchangeably.}
	such that
	\begin{equation}
		\begin{gathered}\label{eqt:causal map}
			\bigtriangleup_{d^{i}}=\mu_{d^{i}}^{-1},
		\end{gathered}
	\end{equation}
	where {\small $\mu_{d^{i}}=\underset{v\in\mathcal{W}_{2}(\mathcal{I})}{\arg\min}\;\mathbb{E}\big[\mathbb{D}_{2}(\mathcal{Y}(d^{i}),v)^{2}\big]$} is the Wasserstein barycenter/mean of units' distributions when they take the treatment $d^{i}$. The superscript ``${-1}$'' of $\mu_{d^{i}}$ is the inverse of the cumulative distribution function (CDF) or the quantile function. Hence, the causal effect map between treatment $d^{i}$ and $d^{j}$ is  
	\begin{equation}
		\begin{aligned}\label{eqt:causal effect map}
			\bigtriangleup_{d^{ij}}=\bigtriangleup_{d^{i}}-\bigtriangleup_{d^{j}}=\mu_{d^{i}}^{-1}-\mu_{d^{j}}^{-1}.
		\end{aligned}
	\end{equation}
\end{definition}
The \textit{causal effect map} in Eqn. \eqref{eqt:causal effect map} is an analogy to the ATE ($\mathbb{E}[Y(d^i)]-\mathbb{E}[Y(d^j)]$) in the literature. However,  $\bigtriangleup_{d^{i}}$, $\bigtriangleup_{d^{j}}$ and $\bigtriangleup_{d^{ij}}$ are functions, but $\mathbb{E}[Y(d^i)]$, $\mathbb{E}[Y(d^j)]$, and $\mathbb{E}[Y(d^i)]-\mathbb{E}[Y(d^j)]$ are scalars. In Table \ref{table:Comparisons}, we summarize the differences between the framework in our paper and in the literature.

\begin{remark}\ \label{remark:causal}
	\begin{enumerate}
		\item Classically, the case ``distribution over $\mathbb{R}$'' means that a realization is a point (scalar or vector) drawing from the distribution of the potential outcome, while the case ``distribution over distributions'' means that the realization is a distribution. For instance, let $\mu$ and $\sigma$ be the mean and standard deviation of a normal distribution, and $(\mu, \log\sigma) \sim \mathcal{N}(\mathbf{0}, \mathbb{I}_2)$. If the realization $(\mu, \log\sigma)$ of a unit (e.g., a consumer) is $(0.1,-0.5)$, then it means that a collection of observations (e.g., spending amounts of all orders) are drawn from $\mathcal{N}(0.1, e^{-1})$ for this unit. 
		%In the classical literature, the case ``distribution over $\mathbb{R}$'' is considered, i.e., the potential outcome is a distribution where the realization is a point drawing from that distribution. However, we consider the case ``distribution over distributions'', i.e., a realization is a distribution drawing from the potential outcome distribution. Here is a parametric example. Denote $\mu$ and $\sigma$ as the mean and standard deviation of a normal distribution. Suppose that $(\mu, \log\sigma) \sim \mathcal{N}(\mathbf{0}, \mathbb{I}_2)$. If one unit (e.g., a consumer) of $(\mu, \log\sigma)$ is $(0.1,-0.5)$, then it means that a collection of observations (e.g., spending amounts of all orders) are drawn from $\mathcal{N}(0.1, e^{-1})$ for this unit. 
		
		\item $\bigtriangleup_{d^{i}}(\cdot)$ is a quantile function (inverse of CDF), so does $\bigtriangleup_{d^{ij}}(\cdot)$. Further, we can explore the impact of between treatment $d^{i}$ and $d^{j}$ on the distributional outcome respectively at a specific $\tau$ quantile level by $\bigtriangleup_{d^{ij}}(\cdot)$, i.e.,
		\begin{equation}
			{\small
				\begin{aligned}
					\bigtriangleup_{d^{ij}}(\tau)=\bigtriangleup_{d^{i}}(\tau)-\bigtriangleup_{d^{j}}(\tau)=\mu_{d^{i}}^{-1}(\tau)-\mu_{d^{j}}^{-1}(\tau).
				\end{aligned}
			}
		\end{equation}
		Note that $\bigtriangleup_{d^{ij}}(\tau)$ differs from the quantile treatment effect (QTE) in the literature (e.g., \cite{machado2005counterfactual, chernozhukov2005iv}). $\bigtriangleup_{d^{ij}}(\tau)$ is the $\tau$-quantiles difference of the \textbf{barycenters} under treatments $d^{i}$ and $d^{j}$, but QTE is the $\tau$-quantiles difference of the \textbf{potential outcome distributions} under two treatments. It is thus inappropriate to compare them or study $\bigtriangleup_{d^{ij}}(\tau)$ using the approaches in the QTE literature. The visualized difference of the two quantities is given in Figure \ref{fig:ATE and QTE} and \ref{fig:causal effect map}.
		
		%Note that $\bigtriangleup_{d^{i}}(\cdot)$ is a quantile function (inverse of CDF) and $\bigtriangleup_{d^{ij}}(\cdot)$ is the subtraction of one from another. We can explore the impact of between treatment $d^{i}$ and $d^{j}$ on the distributional outcome respectively at a specific $\tau$ quantile level by $\bigtriangleup_{d^{ij}}(\cdot)$, i.e.,
		%\begin{equation}
		%	{\small
			%		\begin{aligned}
				%			\bigtriangleup_{d^{ij}}(\tau)=\bigtriangleup_{d^{i}}(\tau)-\bigtriangleup_{d^{j}}(\tau)=\mu_{d^{i}}^{-1}(\tau)-\mu_{d^{j}}^{-1}(\tau).
				%		\end{aligned}
			%	}
		%\end{equation}
		%It is important to note that $\bigtriangleup_{d^{ij}}(\tau)$ is different from the quantile treatment effect (QTE) in the literature (e.g., \cite{machado2005counterfactual, chernozhukov2005iv}). Indeed, $\bigtriangleup_{d^{ij}}(\tau)$ evaluates the difference in $\tau$-quantiles of the \textbf{barycenters} under different treatments. On the other hand, QTE is the difference in the $\tau$-quantiles of the \textbf{potential outcome distributions} under different treatments. It is thus inappropriate to compare them directly or study $\bigtriangleup_{d^{ij}}(\tau)$ using the approaches in the QTE literature. The visualized difference between QTE and  $\bigtriangleup_{d^{ij}}(\tau)$ is in Figure \ref{fig:ATE and QTE} and  \ref{fig:causal effect map}.
	\end{enumerate}
\end{remark}

We need to ensure $\bigtriangleup_{d^i}$ is \textit{identifiable} such that we can estimate it from an observed dataset. It is also necessary to simplify the calculation of $\mu_{d^i}$ to address the computational complexity of optimal transport. Proposition \ref{prop:causal effect expectation form} states an equivalent form of $\bigtriangleup_{d^i}$ without computing optimization and guarantees that we can estimate it from the observed dataset:

\begin{proposition}\label{prop:causal effect expectation form}
	Given the conditions in Definition \ref{def:Wasserstein} and \ref{def:causal map}, and Assumptions (1) - (3) hold, we have (1) $\bigtriangleup_{d^i}=\mathbb{E}\big[\mathcal{Y}(d^{i})^{-1}\big]$; (2) $\bigtriangleup_{d^i}$ is \textit{identifiable}.
\end{proposition}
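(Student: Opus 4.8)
The plan is to exploit the special structure of the one-dimensional Wasserstein space, in which the map sending a distribution to its quantile function is an isometry onto a convex cone of $L^{2}([0,1])$. Concretely, for any $\lambda_{1},\lambda_{2}\in\mathcal{W}_{2}(\mathcal{I})$ with $\mathcal{I}\subset\mathbb{R}$, the optimal coupling is the monotone rearrangement, so that
\begin{equation*}
	\mathbb{D}_{2}(\lambda_{1},\lambda_{2})^{2}=\int_{0}^{1}\big|\lambda_{1}^{-1}(\tau)-\lambda_{2}^{-1}(\tau)\big|^{2}\,d\tau=\big\|\lambda_{1}^{-1}-\lambda_{2}^{-1}\big\|_{L^{2}}^{2}.
\end{equation*}
This reduces the barycenter problem to a least-squares problem in a Hilbert space. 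First I would record this isometry (citing the standard one-dimensional optimal transport result) and then rewrite the defining minimization of $\mu_{d^{i}}$ as a minimization over quantile functions.

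For part (1), substituting the isometry into the definition of the barycenter gives
\begin{equation*}
	\mu_{d^{i}}=\underset{v\in\mathcal{W}_{2}(\mathcal{I})}{\arg\min}\;\mathbb{E}\big[\big\|\mathcal{Y}(d^{i})^{-1}-v^{-1}\big\|_{L^{2}}^{2}\big],
\end{equation*}
where the expectation is over the (random) potential-outcome distribution $\mathcal{Y}(d^{i})$ across units. Since $L^{2}([0,1])$ is a Hilbert space, the Fr\'echet mean coincides with the ordinary (Bochner) expectation, so the unique unconstrained minimizer is $v^{-1}=\mathbb{E}\big[\mathcal{Y}(d^{i})^{-1}\big]$. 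I would then verify admissibility: a pointwise expectation of non-decreasing functions is again non-decreasing, hence a genuine quantile function realized by an actual element of $\mathcal{W}_{2}(\mathcal{I})$. Recalling $\bigtriangleup_{d^{i}}=\mu_{d^{i}}^{-1}$ yields $\bigtriangleup_{d^{i}}=\mathbb{E}\big[\mathcal{Y}(d^{i})^{-1}\big]$.

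For part (2), I would reduce $\mathbb{E}\big[\mathcal{Y}(d^{i})^{-1}\big]$ to observable quantities via the three causal assumptions, mirroring the classical g-formula argument applied at each fixed quantile level $\tau$. By the tower property, $\mathbb{E}\big[\mathcal{Y}(d^{i})^{-1}\big]=\mathbb{E}_{\mathbf{X}}\big[\mathbb{E}[\mathcal{Y}(d^{i})^{-1}\mid\mathbf{X}]\big]$. Ignorability lets me insert the event $\{D=d^{i}\}$ into the inner conditioning, consistency replaces the unobserved $\mathcal{Y}(d^{i})$ by the observed $\mathcal{Y}$ on that event, and overlap guarantees the conditional expectation $\mathbb{E}[\mathcal{Y}^{-1}\mid\mathbf{X},D=d^{i}]$ is well defined. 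The resulting expression $\bigtriangleup_{d^{i}}=\mathbb{E}_{\mathbf{X}}\big[\mathbb{E}[\mathcal{Y}^{-1}\mid\mathbf{X},D=d^{i}]\big]$ depends only on the law of the observed triple $(D,\mathbf{X},\mathcal{Y})$, which is exactly identifiability.

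The main obstacle is the rigor of part (1) rather than part (2). I expect the delicate points to be: (a) justifying the interchange of expectation and the quantile integral via Fubini--Tonelli, using $\mathcal{Y}(d^{i})\in\mathcal{W}_{2}$ so the relevant $L^{2}$ norms have finite expectation; (b) confirming that the Hilbert-space minimizer lands in the closed convex cone of valid quantile functions, so that the abstract $L^{2}$ mean genuinely corresponds to a distribution and the identification $\bigtriangleup_{d^{i}}=\mu_{d^{i}}^{-1}$ is legitimate; and (c) existence and uniqueness of the barycenter, which follow from strict convexity of the squared $L^{2}$ distance but should be stated explicitly. Part (2) is then a routine consequence of the standard assumptions.
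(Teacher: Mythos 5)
Your proposal is correct and follows essentially the same route as the paper: the one-dimensional isometry between $\mathcal{W}_{2}(\mathcal{I})$ and the cone of quantile functions in $L^{2}([0,1])$ reduces the barycenter problem to a Hilbert-space mean (the paper carries this out via an explicit bias--variance decomposition of $\mathbb{E}\big[\int_{0}^{1}|\mathcal{Y}(d^{i})^{-1}(t)-\nu^{-1}(t)|^{2}dt\big]$ rather than citing the Fr\'echet-mean fact), and the identifiability argument is the same tower-property/ignorability/consistency chain. Your explicit check that the pointwise expectation of quantile functions remains a valid quantile function is a detail the paper leaves implicit, but it does not alter the argument.
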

The first assertion gives a simpler way to compute $\bigtriangleup_{d^i}$, while the second assertion ensures that $\bigtriangleup_{d^i}$ is identifiable. We defer the proofs to Appendix D.

\subsection{Estimators}\label{sec:estimators}
Similar to the causal inference methods in the literature \citep{horvitz1952generalization, chernozhukov2018double}, we also propose three estimators to compute the causal map $\bigtriangleup_{d^i}$, namely (1) \textit{Direct Regression (DR) estimator ($\bigtriangleup_{d^i;DR}$)}, (2) \textit{Inverse Probability Weighting (IPW) estimator ($\bigtriangleup_{d^i;IPW}$)}, and (3) \textit{Double Machine Learning (DML) estimator ($\bigtriangleup_{d^i;DML}$)}. Let $\pi_{d^i}(\mathbf{X})=\mathbb{P}\{D=d^{i}|\mathbf{X}\}$ and $m_{d^i}(\mathbf{X})=\mathbb{E}\big[\mathcal{Y}^{-1}|D=d^{i},\mathbf{X}\big]$. Given that there are $N$ units. The estimators are:
{\small
	\begin{gather}
		\bigtriangleup_{d^i;DR}=\frac{1}{n}\underset{s=1}{\overset{n}{\sum}}m_{d^i}(\mathbf{X}_s) \label{eqt:DR estimator}\\
		\bigtriangleup_{d^i;IPW}=\frac{1}{n}\underset{s=1}{\overset{n}{\sum}}\frac{\mathbf{1}_{\{D_s=d^{i}\}}}{\pi_{d^i}(\mathbf{X}_s)}(\mathcal{Y}_{s}^{-1}) \label{eqt:IPW estimator}\\
		\bigtriangleup_{d^i;DML}=\frac{1}{n}\underset{s=1}{\overset{n}{\sum}}\big[m_{d^i}(\mathbf{X}_{s})+\frac{\mathbf{1}_{\{D_s=d^{i}\}}}{\pi_{d^i}(\mathbf{X}_s)}(\mathcal{Y}_{s}^{-1}-m_{d^i}(\mathbf{X}_s))\big]. \label{eqt:DML estimator}
	\end{gather}
	
}

\subsection{Theory and Algorithm}\label{sec:algorithm}
In practical scenarios, when using all the available units to train the regression function $m_{d^i}(\mathbf{X}_s)$ and propensity score function $\pi_{d^i}(\mathbf{X}_s)$, there is a risk of over-fitting. To mitigate this issue, a cross-fitting technique, as introduced by \cite{chernozhukov2018double}, is commonly employed. Along this way, we also need to obtain the cross-fitting estimators of $\bigtriangleup_{d^i}$ according to Eqns. \eqref{eqt:DR estimator}, \eqref{eqt:IPW estimator}, and  \eqref{eqt:DML estimator}.

We split the $N$ units into $K$ disjoint groups. Let the $k^{\text{th}}$ group be $\mathcal{D}_{k}$ of size $N_k$, $\forall k=1,\cdots, K$. Denoting $\mathcal{D}_{-k}={\cup}^K_{r=1, r\neq k}\mathcal{D}_{r}$, we use $\mathcal{D}_{-k}$ to obtain  $\hat{m}_{d^i}^k(\mathbf{X})$, $\hat{\pi}^{k}_{d^i}(\mathbf{X})$, which are the estimations of $m_{d^i}^k(\mathbf{X})$, $\pi^{k}_{d^i}(\mathbf{X})$ for the $k^{\text{th}}$ group. $\mathcal{\hat{Y}}$ is the empirical estimation of $\mathcal{Y}$. We then use $\mathcal{D}_{k}$ to compute the estimation of $\bigtriangleup^k_{d^i}$ (i.e., $\hat{\bigtriangleup}_{d^i;DR}^{k}$, $\hat{\bigtriangleup}_{d^i;IPW}^{k}$, and $\hat{\bigtriangleup}_{d^i;DML}^{k}$) according to Eqns. \eqref{eqt:cross-fit estimator DR}, \eqref{eqt:cross-fit estimator IPW}, and \eqref{eqt:cross-fit estimator DML} respectively. We thus define $\hat{\bigtriangleup}_{d^i;DR}^{k}$, $\hat{\bigtriangleup}_{d^i;IPW}^{k}$, and $\hat{\bigtriangleup}_{d^i;DML}^{k}$ such that

{\small
	\begin{gather}
		\hat{\bigtriangleup}_{d^i;DR}^{k}=\frac{1}{N_{k}}\underset{s\in\mathcal{D}_{k}}{\overset{}{\sum}}\hat{m}_{d^i}^{k}(\mathbf{X}_{s}) \label{eqt:cross-fit estimator DR}\\
		\hat{\bigtriangleup}_{d^i;IPW}^{k}=\frac{1}{N_{k}}\underset{s\in\mathcal{D}_{k}}{\overset{}{\sum}}\frac{\mathbf{1}_{\{D_{s}=d^{i}\}}}{\hat{\pi}^k_{d^i}(\mathbf{X}_{s})}\mathcal{\hat{Y}}_{s}^{-1} \label{eqt:cross-fit estimator IPW}\\
		\hat{\bigtriangleup}_{d^i;DML}^{k}=\frac{1}{N_{k}}\underset{s\in\mathcal{D}_{k}}{\overset{}{\sum}}\big[\hat{m}_{d^i}^{k}(\mathbf{X}_{s})+\frac{\mathbf{1}_{\{D_{s}=d^{i}\}}}{\hat{\pi}^k_{d^i}(\mathbf{X}_{s})}(\mathcal{\hat{Y}}_{s}^{-1}-\hat{m}_{d^i}^{k}(\mathbf{X}_{s}))\big].\label{eqt:cross-fit estimator DML}
	\end{gather}
}

Denoting $w\in\{DR,IPW,DML\}$, the cross-fitting estimators are $\hat{\bigtriangleup}_{d^i;w}$ such that
\begin{equation}
	\begin{aligned}\label{eqt:cross-fit estimator final}
		\hat{\bigtriangleup}_{d^i;w}=\underset{k=1}{\overset{K}{\sum}}\frac{N_{k}}{N}\hat{\bigtriangleup}^k_{d^i;w}. 
	\end{aligned}
\end{equation}  

\noindent We study the consistency of $\hat{\bigtriangleup}_{d^i;w}$. When $w=DR$ or $IPW$, the results are deferred to Appendix C. When $w=DML$, the consistency result is given in Theorem \ref{thm:root N consistent} while the proofs and the notational meanings are deferred to Appendix D.
\begin{theorem}\label{thm:root N consistent}
	Let $\tilde{m}_{d^i}^{k}(\mathbf{X})$ ($\hat{m}_{d^i}^{k}(\mathbf{X})$) be the estimate of $\mathbb{E}[\mathcal{Y}^{-1}|D=d^{i},\mathbf{X}]$ using the true $\mathcal{Y}$ (estimated $\hat{\mathcal{Y}}$) based on $\mathcal{D}_{-k}$. Suppose that, for any $k$, $\rho_{\pi}^{4}=\mathbb{E}[|\hat{\pi}_{d^i}^{k}(\mathbf{X})-\pi_{d^i}(\mathbf{X})|^{4}]$, $\rho_{m}^{4}=\max\{\vvvert \tilde{m}_{d^i}^{k}-m_{d^i}\vvvert^{4},\;1\leq i\leq r\}=\max\{[\int \|\tilde{m}_{d^{i}}^{k}(\mathbf{x})-m_{d^{i}}(\mathbf{x})\|^{2}dF_{\mathbf{X}}(\mathbf{x})]^{2},\;1\leq i\leq r\}$. Under the convergence assumptions in Appendix D, we have
	\begin{enumerate}
		\item $\|\hat{\bigtriangleup}_{d^i;DML}-\bigtriangleup_{d^i}\|=O_{P}(N^{-\frac{1}{2}}+N^{-\frac{1}{2}}\rho_{\pi}+N^{-\frac{1}{2}}\rho_{m}+\rho_{\pi}\rho_{m})$\label{thm:root N consistent1}.
		\item If $\rho_{m}\rho_{\pi}=o(N^{-\frac{1}{2}})$, $\rho_{m}=o(1)$ and $\rho_{\pi}=o(1)$, then $\sqrt{N}\big(\hat{\bigtriangleup}_{d^i;DML}-\bigtriangleup_{d^i}\big)$ converges weakly to a centred Gaussian process.\label{thm:root N consistent2}
	\end{enumerate}
\end{theorem}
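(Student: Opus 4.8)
The plan is to follow the double-machine-learning argument of \cite{chernozhukov2018double}, but carried out in the separable Hilbert space $L^{2}(\mathcal{I})$ in which the quantile functions $\mathcal{Y}_s^{-1}$ and the causal map $\bigtriangleup_{d^i}=\mathbb{E}[\mathcal{Y}(d^i)^{-1}]$ reside. Write the uncentred per-unit score as
\begin{equation*}
\psi(\mathcal{Y},D,\mathbf{X};m,\pi)=m_{d^i}(\mathbf{X})+\frac{\mathbf{1}_{\{D=d^i\}}}{\pi_{d^i}(\mathbf{X})}\big(\mathcal{Y}^{-1}-m_{d^i}(\mathbf{X})\big),
\end{equation*}
so that $\hat{\bigtriangleup}_{d^i;DML}$ is the cross-fitted empirical average of $\psi$ at the estimated nuisances. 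By Proposition \ref{prop:causal effect expectation form} together with unconfoundedness one has $\mathbb{E}[\psi(\mathcal{Y},D,\mathbf{X};m,\pi)]=\bigtriangleup_{d^i}$, and $\psi$ is Neyman orthogonal: its Gateaux derivative in $(m,\pi)$ at the truth vanishes. First I would split
\begin{equation*}
\hat{\bigtriangleup}_{d^i;DML}-\bigtriangleup_{d^i}=\underbrace{\tfrac1N\sum_{s}\big(\psi_s-\bigtriangleup_{d^i}\big)}_{=:A}+\underbrace{\big(\hat{\bigtriangleup}_{d^i;DML}-\tfrac1N\sum_{s}\psi_s\big)}_{=:B},
\end{equation*}
where $\psi_s$ uses the \emph{true} nuisances. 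Term $A$ is a centred i.i.d. average in $L^{2}(\mathcal{I})$ supplying the leading $O_P(N^{-1/2})$ and the limiting process; term $B$ collects all nuisance-estimation error.

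Next I would expand the integrand of $B$ on each fold. A direct computation gives, for one unit,
\begin{equation*}
\psi(\cdot;\hat{m}^{k},\hat{\pi}^{k})-\psi(\cdot;m,\pi)=\big(\hat{m}^{k}_{d^i}-m_{d^i}\big)\Big(1-\tfrac{\mathbf{1}_{\{D=d^i\}}}{\hat{\pi}^{k}_{d^i}}\Big)+\mathbf{1}_{\{D=d^i\}}\big(\mathcal{Y}^{-1}-m_{d^i}\big)\Big(\tfrac{1}{\hat{\pi}^{k}_{d^i}}-\tfrac{1}{\pi_{d^i}}\Big).
\end{equation*}
Because cross-fitting makes $(\hat{m}^k,\hat{\pi}^k)$ independent of the units in $\mathcal{D}_k$, I take conditional expectations using $\mathbb{E}[\mathbf{1}_{\{D=d^i\}}\mid\mathbf{X}]=\pi_{d^i}$ and $\mathbb{E}[\mathbf{1}_{\{D=d^i\}}(\mathcal{Y}^{-1}-m_{d^i})\mid\mathbf{X}]=0$, the latter by the definition of $m_{d^i}$ and unconfoundedness. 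The second summand then has conditional mean zero, while the first collapses to the \emph{bilinear} remainder $(\hat{m}^{k}_{d^i}-m_{d^i})(1-\pi_{d^i}/\hat{\pi}^{k}_{d^i})$. Bounding $1/\hat{\pi}^k$ via overlap and applying Cauchy--Schwarz in $L^{2}(F_{\mathbf{X}};L^{2}(\mathcal{I}))$ (this is where the fourth-moment definitions of $\rho_m,\rho_\pi$ are used), the conditional-mean part of $B$ is $O_P(\vvvert\hat{m}^k-m\vvvert\,\|\hat{\pi}^k-\pi\|)=O_P(\rho_m\rho_\pi)$. The fluctuation $B-\mathbb{E}[B\mid\text{nuisances}]$ is, conditionally, a centred sum of $N$ independent terms each of $L^{2}$-size $O(\rho_m+\rho_\pi)$, hence of order $O_P(N^{-1/2}(\rho_m+\rho_\pi))$. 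Combining with $\|A\|=O_P(N^{-1/2})$ yields assertion (\ref{thm:root N consistent1}).

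For assertion (\ref{thm:root N consistent2}), the rate hypotheses force $B=o_P(N^{-1/2})$: the bias $\rho_m\rho_\pi=o(N^{-1/2})$ by assumption and the fluctuation $N^{-1/2}(\rho_m+\rho_\pi)=o(N^{-1/2})$ since $\rho_m,\rho_\pi=o(1)$. Thus $\sqrt{N}(\hat{\bigtriangleup}_{d^i;DML}-\bigtriangleup_{d^i})=\sqrt{N}\,A+o_P(1)$, and it remains to prove a functional CLT for $\sqrt{N}\,A=N^{-1/2}\sum_{s}(\psi_s-\bigtriangleup_{d^i})$. As the $\psi_s-\bigtriangleup_{d^i}$ are i.i.d. mean-zero random elements of $L^{2}(\mathcal{I})$ with finite second moment (guaranteed by the $\mathcal{W}_{2}(\mathcal{I})$ integrability of Definition \ref{def:Wasserstein} and the overlap bound on $1/\pi_{d^i}$), the Hilbert-space CLT gives weak convergence to a centred Gaussian process with the covariance operator of $\psi_1-\bigtriangleup_{d^i}$.

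I expect the main obstacle to be the functional CLT itself, since it is the genuine departure from the scalar theory: one must verify that the centred scores are bona fide square-integrable elements of $L^{2}(\mathcal{I})$ and establish \emph{tightness} for weak convergence of the whole process (not merely of finite collections of quantile marginals), which is exactly where the Wasserstein geometry and the boundedness supplied by overlap enter. A second, more bookkeeping obstacle is that the estimator uses the empirical quantile functions $\hat{\mathcal{Y}}_s^{-1}$ and the nuisance $\hat{m}^k$ trained on them, whereas $\rho_m$ is defined through $\tilde{m}^k$ built from the true $\mathcal{Y}$; closing this gap needs the per-unit quantile-estimation rate assumed in Appendix D, and I would insert a triangle-inequality step bounding $\vvvert\hat{m}^k-\tilde{m}^k\vvvert$ by that rate so it is absorbed into the stated bound.
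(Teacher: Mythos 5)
Your proposal follows essentially the same route as the paper's own proof: your term $A$ is the paper's term II (the centred i.i.d.\ average of the score at the true nuisances, giving the $O_P(N^{-1/2})$ rate and, via the Hilbert-space CLT plus Slutsky, the limiting centred Gaussian process), the conditional mean of your term $B$ is the paper's product-bias term III bounded by $\rho_\pi\rho_m$ through Cauchy--Schwarz and overlap, and the fluctuation of $B$ is the paper's term I of order $O_P(N^{-1/2}(\rho_m+\rho_\pi))$. The ``bookkeeping obstacle'' you flag at the end --- that the estimator uses the empirical quantile functions $\hat{\mathcal{Y}}_s^{-1}$ and the regression $\hat m^k_{d^i}$ trained on them, while $\rho_m$ is defined through $\tilde m^k_{d^i}$ --- is exactly what the paper isolates as its additional terms IV and V, which it bounds using the Appendix D convergence assumptions (the per-unit rates $\alpha_N,\nu_N=o(N^{-1/2})$ and the bound on $\hat m^k_{d^i}-\tilde m^k_{d^i}$), i.e., precisely the triangle-inequality absorption step you describe.
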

Theorem \ref{thm:root N consistent} not only gives the consistency of $\hat{\bigtriangleup}_{d^i;DML}$, but also gives the convergence speed of $\hat{\bigtriangleup}_{d^i;DML}$. It is indeed a $\sqrt{N}$-consistent estimator.

We can also investigate the $\sqrt{N}$-consistency of the DR or IPW estimators. In fact, we can obtain the desired results by setting $\mathbf{1}_{\{D=d^{i}\}}=0$ and $(m_{d^{i}},\hat{m}_{d^{i}}^{k},\tilde{m}_{d^{i}}^{k})=(0,0,0)$ in the proofs of Theorem \ref{thm:root N consistent} respectively. Last but not least, we summarize the steps of computing $\hat{\bigtriangleup}_{d^i;w}$ in Algorithm \ref{alg:estimators}.
{\small
	\begin{algorithm} 
		\caption{Computations of $\hat{\bigtriangleup}_{d^i;w}$}\label{alg:estimators}
		\begin{algorithmic}[1]
			\REQUIRE The observations of $(D_{s},\mathbf{X}_{s},\mathcal{Y}_{s})_{s=1}^{N}$.
			\ENSURE $\hat{\bigtriangleup}_{d^i;w}$ for $w\in\{DR,IPW,DML\}$.
			\STATE Split $(D_{s},\mathbf{X}_{s},\mathcal{Y}_{s})_{s=1}^{N}$ to $K$ disjoint units groups $\mathcal{D}_{k}$ of size $N_{k}$ and form $\mathcal{D}_{-k}$.
			\STATE Estimate $\hat{\mathcal{Y}}_s^{-1}$ for each unit $s$.
			\FOR {$k \gets 1$ to $K$}
			\STATE \parbox[t]{\dimexpr\linewidth-\algorithmicindent}{Regress $D$ w.r.t. $\mathbf{X}$ based on $\mathcal{D}_{-k}$ and obtain $\hat{\pi}_{d^i}^k$.}\label{alg:probability estimation}
			\STATE \parbox[t]{\dimexpr\linewidth-\algorithmicindent}{Regress $\hat{\mathcal{Y}}^{-1}$ w.r.t. $(D,\mathbf{X})$ based on $\mathcal{D}_{-k}$  and obtain $\hat{m}_{d^i}^k$.}\label{alg:function estimation}
			\STATE \parbox[t]{\dimexpr\linewidth-\algorithmicindent}{Compute $\hat{\bigtriangleup}^k_{d^i;w}$ using Eqns. \eqref{eqt:cross-fit estimator DR}, \eqref{eqt:cross-fit estimator IPW} and \eqref{eqt:cross-fit estimator DML} according to $w$.}
			\ENDFOR
			\STATE{Compute $\hat{\bigtriangleup}_{d^i;w}$} using Eqn. \eqref{eqt:cross-fit estimator final}.
		\end{algorithmic}
	\end{algorithm}
}

\subsection{Models}
To estimate the target quantity $\bigtriangleup_{d^i}$, we need to estimate several nuisance parameters accurately, e.g., $\mathcal{Y}^{-1}$, $\pi_{d^i}(\mathbf{X})$, and $m_{d^i}(\mathbf{X})$. First, to estimate $\mathcal{Y}^{-1}$, we can estimate $\mathcal{Y}$ empirically and invert the estimated $\mathcal{Y}$ (CDF) for each unit to get the $\mathcal{\hat{Y}}^{-1}$. Second, $\pi_{d^i}(\mathbf{X})$ is the \textit{propensity score} that can be estimated using the multi-class logistic regression, random forest classifier, or feed-forward networks. Finally, we can estimate the regression function $m_{d^i}(\mathbf{X})$ by regressing the outcome $\mathcal{Y}^{-1}$ on treatment $D$ and covariates $\mathbf{X}$ via a \textit{functional-on-scalar regression}. The first two quantities can be well estimated using the classical approaches. On the other hand, the third quantity, $m_{d^i}(\mathbf{X})$, is difficult to estimate accurately using the classical functional regression approach. Specifically, the classical functional regression \citep{ramsay2005fitting} assumes that the regression equation between outcome $\mathcal{Y}^{-1}$ and predictors $(D, \mathbf{X})$ can be approximated by a finite series of some pre-determined basis functions, i.e.,
\begin{equation} \label{eqt:functional regression}
	\mathcal{Y}^{-1}(t)=D\sum_{l=1}^v \gamma_{0 l} \phi_l(t)+\sum_{j=1}^n X^{j}\left(\sum_{l=1}^v \gamma_{j l} \phi_l(t)\right)+\epsilon(t),
\end{equation}
where $\mathcal{Y}^{-1}(t)$ is the response function; $(D, \mathbf{X})=[D, X^1, \cdots,X^j, \cdots, X^n]$ are predictors; $\{\phi_1, \dots, \phi_v\}$ are basis functions, e.g., B-spline basis;  $\gamma_{jl}$ with $0 \leq j \leq n$ and $1 \leq l \leq v$ are regression parameters; and $\epsilon(t)$ is the noise term.

\begin{figure}
	\centering
	\includegraphics[width=0.5\columnwidth]{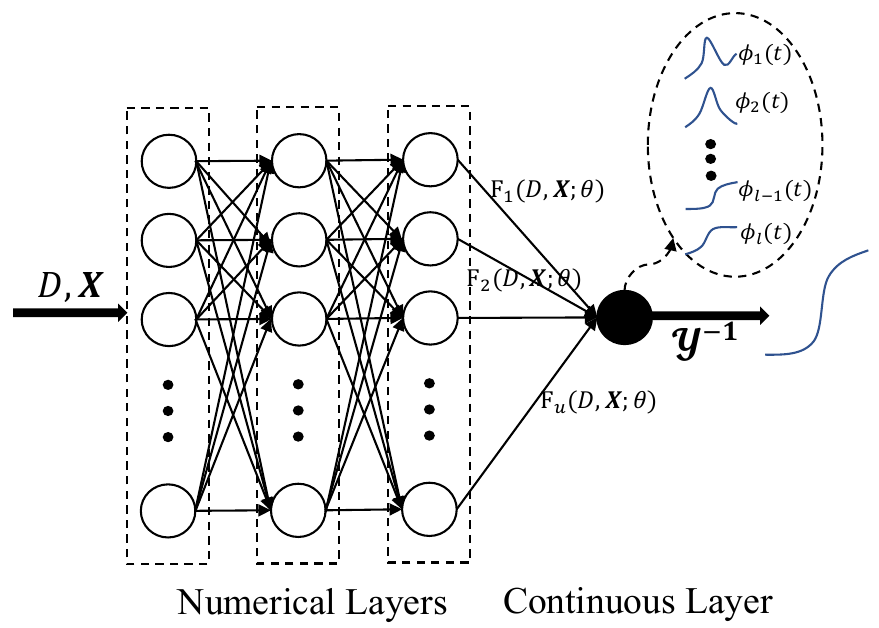}
	\caption{The proposed NFR Net. \label{fig:models}}
\end{figure}

However, the relation between $\mathcal{Y}^{-1}(t)$ and $(D, \mathbf{X})$ may not be additive as in Eqn. \eqref{eqt:functional regression}. Generally, the relationship is non-linear and complex. To this end, we design \textbf{N}eural \textbf{F}unctional \textbf{R}egression (\textbf{NFR}) Net to address this issue. The NFR Net consists of two parts: (1) the \textit{numerical layers}, and (2) the \textit{continuous layer} (see Figure \ref{fig:models}). Under our framework and settings, the numerical layers aim to learn the $u$ representations $\mathbf{\mathsf{F}}(D, \mathbf{X}; \theta)=[\mathsf{F}_1(D, \mathbf{X}; \theta), \cdots, \mathsf{F}_u(D, \mathbf{X}; \theta)]^\top$, where each $\mathsf{F}_i(D, \mathbf{X}; \theta), 1 \leq i \leq u$ is a linear coefficient to constitute the target distribution. The representations $\mathbf{\mathsf{F}}(D, \mathbf{X};\theta)$ is then processed by a continuous layer to output a function $\tilde{\mathcal{Y}}^{-1}$, i.e.,
\begin{equation}
	\tilde{\mathcal{Y}}^{-1}(t; \theta, \{\gamma_{ij}\})=\sum_{i=1}^u \mathsf{F}_i \left(D, \mathbf{X}; \theta \right) \sum_{j=1}^v \gamma_{i j} \phi_j(t), 
\end{equation}
where $\{\gamma_{ij}\}$ now are trainable parameters and $\{\phi_j(t)\}$ are pre-defined basis functions. 

The model can be trained as follows: let $L$ be the loss metric (e.g., $L_1$ or $L_2$ loss), and our task is equivalent to finding the optimal $\theta,\{\gamma_{ij}\}$ by minimizing the loss function $\mathcal{L}(\theta,\{\gamma_{ij}\})$:
%    \begin{equation}
	%    {\small
		%    \begin{aligned}\label{eqt:optimization}
			%    \min_{\theta, \{\gamma_{ij}\}}
			%    \;\int L(\tilde{\mathcal{Y}}^{-1}(t; \theta, \{\gamma_{ij}\}),\hat{\mathcal{Y}}^{-1}(t))dt\approx
			%        \min_{\theta, \{\gamma_{ij}\}}
			%    \;\sum_{k=1}^{K} L(\tilde{\mathcal{Y}}^{-1}(t_{k}; \theta, \{\gamma_{ij}\}),\hat{\mathcal{Y}}^{-1}(t_{k}))\delta_{t_{k}},
			%    \end{aligned}
		%    }
	%    \end{equation}
%where we discretize the integral at the points $\{t_{0},\cdots,t_{K}\}$ and $\delta_{t_{k}}=t_{k}-t_{k-1}$ $\forall\;1\leq k\leq K$. In practice, we can estimate the integral by taking any number of discrete points $K$.
\begin{equation}
	{\small
		\begin{aligned}\label{eqt:optimization}
			\min_{\theta, \{\gamma_{ij}\}}\mathcal{L}(\theta, \{\gamma_{ij}\}):=\int L(\tilde{\mathcal{Y}}^{-1}(t; \theta, \{\gamma_{ij}\}),\hat{\mathcal{Y}}^{-1}(t))dt.
		\end{aligned}
	}
\end{equation}
%    \begin{equation}
	%    {\small
		%    \begin{aligned}\label{eqt:optimization}
			%    \min_{\theta, \{\gamma_{ij}\}}
			%    \;\int L(\tilde{\mathcal{Y}}^{-1}(t; \theta, \{\gamma_{ij}\}),\hat{\mathcal{Y}}^{-1}(t))dt.
			%    \end{aligned}
		%    }
	%    \end{equation}
In practice, we can estimate the integral using the trapezoidal rule/Simpson's rule by taking any number of discrete quantile points $t$.

\begin{figure}
	\centering
	\begin{minipage}{.43\columnwidth}
		\centering
		\includegraphics[width=0.8\columnwidth]{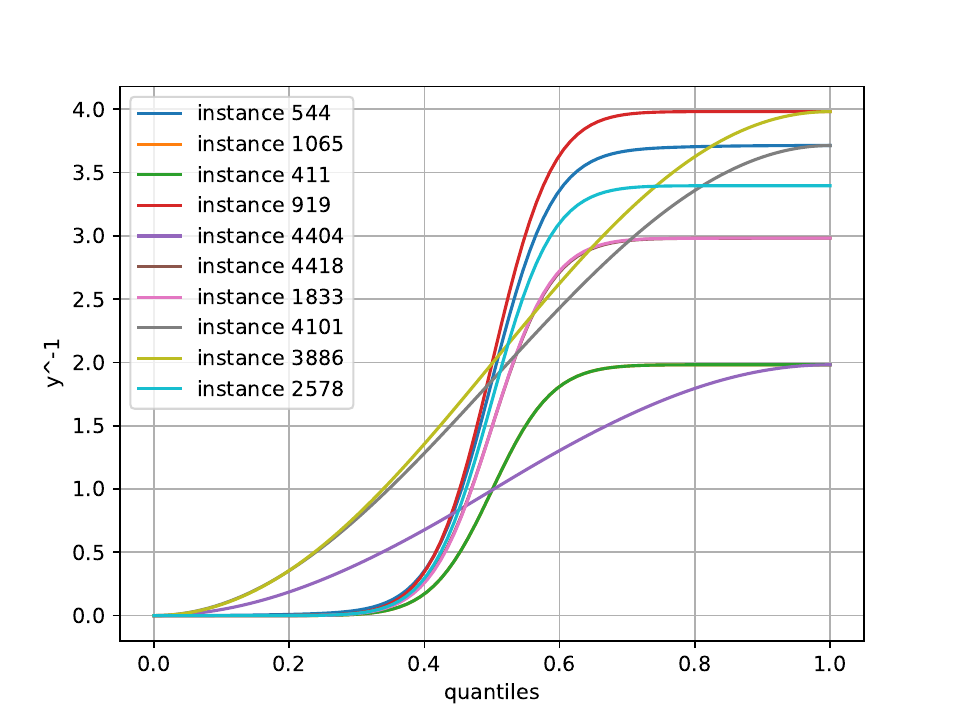}
		\caption{10 instances of simulated quantile function ($\mathcal{Y}^{-1}$). \label{fig:instance}}
	\end{minipage}
	\quad
	\begin{minipage}{.48\columnwidth}
		\centering
		\includegraphics[width=0.8\columnwidth]{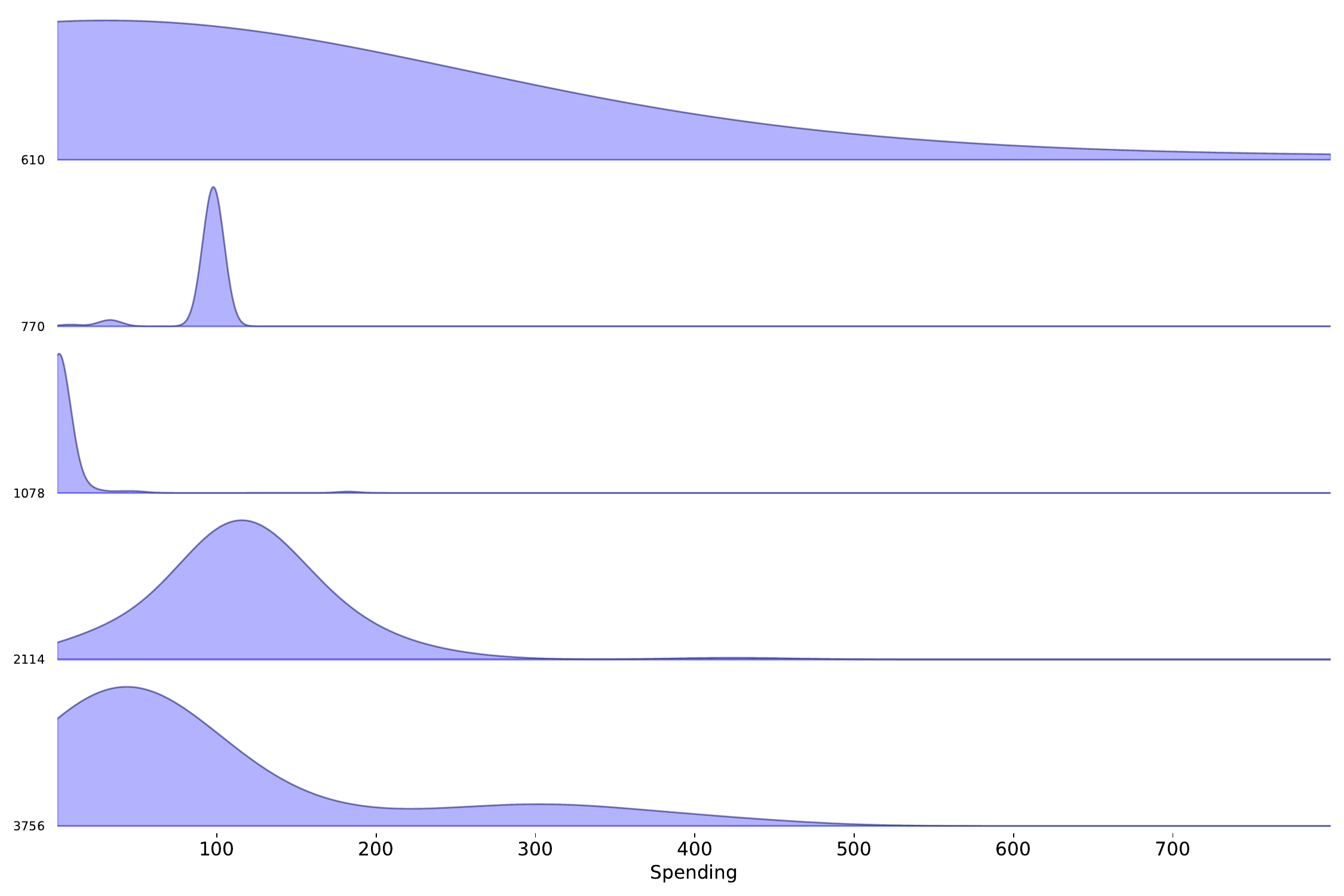}
		\caption{The spending distribution of 5 consumers from a e-commerce platform. \label{fig:spending distribution}}
	\end{minipage}
\end{figure}

\section{Synthetic Experiment} \label{sec:synthetic experiment}
\paragraph{Data Generation Process} Since the ground truth is unavailable in the real dataset, we simulate data using the following data generation process for the $s^{\mathrm{th}}$ unit to test our proposed model similar to many other causal inference studies:
\begin{subequations}
	{\small
		\begin{align}
			&\mathcal{Y}^{-1}_s(D_{s})=c+(1-c)(\mathbb{E}[D]+\sqrt{D_s})\times \nonumber\\
			&\;\;\;\;\;\;\;\;\;\;\;\;\;\;\;\;\;\;\underset{j=1}{\overset{\frac{n}{2}}{\sum}} \frac{\exp(X^{2j-1}_{s}X^{2j}_s)}{\underset{k=1}{\overset{\frac{n}{2}}{\sum}}\exp(X^{2k-1}_{s}X^{2k}_{s})}\mathbf{B}^{-1}(\alpha_j,\beta_j)+\epsilon_s,\label{eqt:DGP1}\\
			&\mathbb{P}\{D_s=d\;|\;\mathbf{X}_s\} = \frac{\exp(\gamma_{d}^{\top} \mathbf{X}_{s})}{\underset{w=1}{\overset{r}{\sum}}\exp(\gamma_{w}^{\top} \mathbf{X}_{s})}.\label{eqt:DGP2}
		\end{align}
	}\noindent
\end{subequations}
\begin{table}[t]
	\caption{MAE between true and estimated causal effect maps under various methods of regressing $\mathcal{\hat{Y}}^{-1}$ w.r.t. $(D,\mathbf{X})$ (mean $\pm$ standard deviation with 100 trials). Best results are in bold.}\label{results}
	\centering
	\resizebox{0.65\columnwidth}{!}{
		\begin{tabular}{ccccccc}
			\toprule
			\textbf{} & \textbf{Q=10\%} & \textbf{Q=30\%} & \textbf{Q=50\%} & \textbf{Q=70\%} & \textbf{Q=90\%} & \textbf{Average} \\\midrule
			\textbf{} & \multicolumn{6}{c}{\textbf{DR}} \\ \midrule
			\textbf{Lasso} 			& 0.066 & 0.064 & 0.121 & 0.183 & 0.184 & 0.124 $\pm$ 0.053 \\
			\textbf{Ridge} 			& 0.066 & 0.075 & 0.093 & 0.167 & 0.189 & 0.118 $\pm$ 0.050 \\
			\textbf{Elastic net} 	& 0.066 & 0.064 & 0.121 & 0.183 & 0.184 & 0.124 $\pm$ 0.053 \\
			\textbf{MCP} 			& 0.066 & 0.075 & 0.093 & 0.167 & 0.189 & 0.118 $\pm$ 0.050 \\
			\textbf{D/DML} 			& 0.010 & 0.058 & 0.299 & 0.517 & 0.615 & 0.300 $\pm$ 0.240 \\
			\textbf{NFR} 			& 0.025 & 0.104 & 0.039 & 0.053 & 0.040 & \textbf{0.052} $\pm$ \textbf{0.027} \\ \midrule
			\textbf{} & \multicolumn{6}{c}{\textbf{IPW}} \\ \midrule
			\textbf{Lasso} 			& 0.010 & 0.061 & 0.070 & 0.046 & 0.035 & 0.044 $\pm$ 0.021 \\
			\textbf{Ridge} 			& 0.010 & 0.061 & 0.070 & 0.045 & 0.034 & 0.044 $\pm$ 0.021 \\
			\textbf{Elastic net} 	& 0.010 & 0.061 & 0.070 & 0.046 & 0.037 & 0.045 $\pm$ 0.021 \\
			\textbf{MCP} 			& 0.010 & 0.061 & 0.069 & 0.047 & 0.035 & 0.044 $\pm$ 0.021 \\
			\textbf{D/DML} 			& 0.010 & 0.061 & 0.070 & 0.046 & 0.035 & 0.044 $\pm$ 0.021 \\
			\textbf{NFR} 			& 0.010 & 0.061 & 0.071 & 0.042 & 0.033 & 0.044 $\pm$ 0.021 \\ \midrule
			\textbf{} & \multicolumn{6}{c}{\textbf{DML}} \\ \midrule
			\textbf{Lasso} 			& 0.012 & 0.064 & 0.086 & 0.020 & 0.006 & 0.038 $\pm$ 0.032 \\
			\textbf{Ridge} 			& 0.010 & 0.063 & 0.068 & 0.024 & 0.005 & 0.034 $\pm$ 0.027 \\
			\textbf{Elastic net} 	& 0.012 & 0.064 & 0.085 & 0.020 & 0.006 & 0.038 $\pm$ 0.032 \\
			\textbf{MCP} 			& 0.010 & 0.063 & 0.068 & 0.024 & 0.005 & 0.034 $\pm$ 0.027 \\
			\textbf{D/DML} 			& 0.010 & 0.063 & 0.071 & 0.025 & 0.013 & 0.037 $\pm$ 0.026 \\
			\textbf{NFR} 			& 0.010 & 0.062 & 0.067 & 0.025 & 0.003 & \textbf{0.034} $\pm$ \textbf{0.026} \\ \bottomrule
		\end{tabular}
	}
	\begin{tablenotes} %添加此处
		\item \scriptsize The IPW results are similar because the same classifier (random forest) is used to get the propensity scores.
	\end{tablenotes} %添加此处
\end{table}

In our experiment, we set $n=10$. We assume that covariates $X^1, X^2 \sim \mathcal{N}(-2,1)$, $X^3, X^4 \sim \mathcal{N}(-1,1)$, $X^5, X^6 \sim \mathcal{N}(0,1)$, $X^7, X^8 \sim \mathcal{N}(1,1)$, $X^9, X^{10} \sim \mathcal{N}(2,1)$, and $\epsilon_s \sim \mathcal{N}(0,0.05)$. $\mathbf{B}^{-1}(\alpha,\beta)$ is the inverse CDF of Beta distribution with the shape parameters $\alpha$ and $\beta$. We select $5$ inverse Beta CDFs, where each one has different parameters to ensure the complexity of the distribution function. The treatment $D$ takes the value in $\{d^1, d^2, d^3, d^4, d^5\}$ with a softmax distribution. $c \in [0, 1]$ is the constant that controls the strength of the causal relationship between treatment $D$ and outcome distribution $\mathcal{Y}^{-1}$. In one experiment, $5,000$ instances are generated according to Eqns. \eqref{eqt:DGP1} - \eqref{eqt:DGP2}. For each unit $s$, $100$ observations are sampled from the inverse CDF using the inverse transform sampling method. Figure \ref{fig:instance} summarizes $10$ simulated instances, indicating that the inverse CDF of each instance varies widely. 

\paragraph{Baselines} In our experiment, we consider two aspects of potential baseline methods. The first aspect is from the statistical field, where approaches such as those presented in \cite{lin2023causal} assume a linear relationship between the functional output and the scalar input. They utilize regularization techniques like lasso, ridge, and elastic net to estimate the causal effect map. Additionally, \cite{chen2016variable} addresses situations with a large number of covariates by using the group minimax concave penalty (MCP) for variable selection and fitting. However, these methods inherently assume a linear form between the functional output and scalar input, possibly overlooking the presence of nonlinear relationships in the data. The second aspect is from the deep learning field, where we compare our model with classical Double/debiased machine learning (D/DML) proposed in \cite{chernozhukov2018double}. This approach introduces a DML estimator to investigate the causal impact of scalar input on scalar outcome. To model the functional outcome, we conduct independent regressions at interesting quantiles using a standard MLP. Subsequently, we concatenate all the quantile counterfactuals to form a distribution.

\paragraph{Experiment Setting} The classification and functional regression models are trained separately. $5,000$ generated instances are trained using $5$-fold cross-fitting, i.e., $4,000$ instances are used to train, and $1,000$ instances are used to obtain the three estimators (i.e., DR, IPW, and DML estimator). At last, we average the obtained estimators from the $5$ folds as the final results. In the classification task, we use the same classifier (i.e., random forest) to compute IPW for all the estimators. %In the functional regression task, apart from the proposed NFR model, we also select four statistical methods, i.e., MCP (minimax concave penalty) \cite{zhang2010nearly,chen2016variable}, Lasso, Ridge, and Elastic net, all of which assume the linear form between the functional output and the scalar input.  
The training details are given in Appendix E.

\paragraph{Evaluation Metric} Since $\mathcal{L}(\theta,\{\gamma_{ij}\})$ in \eqref{eqt:optimization} is continuous, we discretize it and compare the mean absolute error (MAE) between true causal effect map $\bigtriangleup_{d^{ij}} (1 \leq i, j \leq 5)$ (computed from Eqns. \eqref{eqt:DGP1} - \eqref{eqt:DGP2}) and estimated causal effect map $\hat{\bigtriangleup}_{d^{ij}}$ on 5 quantiles with levels ranging from 10\% to 90\%. We repeat the experiment 50 times to report the mean and standard deviation of MAE.

\paragraph{Experiment Results}  %Table \ref{results} summarizes the experiment results. We notice that (1) NFR Net outperforms all statistical models on the DR and DML method since our model can capture non-linear patterns between covariates and the outcome distribution; (2) NFR Net outperforms D/DML since we can model the outcome as a function, but the D/DML has to treat each quantile as independent scalar points, ignoring the continuous structure of a distribution; (3) the DML estimator can correct most bias of DR estimator via IPW estimator, which display the more robustness property than DR and IPW estimator. 
Table \ref{results} presents a summary of the experiment results. We observe several key findings: Firstly, NFR Net demonstrates superior performance compared to all statistical models, particularly on the DR methods. This result can be attributed to the capability of our proposed model to capture non-linear patterns between covariates and the outcome distribution effectively. Secondly, NFR Net outperforms the D/DML method. The advantage stems from our ability to model the outcome as a function. In contrast, D/DML treats each quantile as independent scalar points, overlooking the continuous structure of the distribution. Lastly, DML can utilize the IPW estimator to correct most of the bias in the DR estimator, and the DML estimator demonstrates improved robustness compared to both the DR and IPW estimators. 

\section{Empirical Experiment}
E-commerce platforms face a significant challenge in comprehending the impact of credit lines on consumer spending patterns, particularly in terms of the shift in spending distribution caused by changes in the credit lines. To address this issue, we employ our approach by leveraging data collected from a large e-commerce platform. The platform assigns distinct credit lines to users based on various factors such as income, age, and past behaviors like shopping and default behaviors. Besides, the platform provides users with an interest-free, one-month loan option for their purchases, with the condition that the total loan amount must not exceed their assigned credit lines.

We collect data from 4,043 platform users. The data comprises various variables, such as demographic information (e.g., age, income, and location), purchasing behaviors (e.g., the total number of orders, the amount paid for each order), and financial information (e.g., credit lines assigned by the platform, the total number of loans, and the presence of default records). Appendix F displays a detailed statistical description. All the paid amounts of orders constitute a unique spending distribution for each user (e.g., Figure \ref{fig:spending distribution}). In our empirical study, we investigate the causal maps when the credit lines take values as Low (from 0 to 9,000), Middle (from 9,000 to 15,000), and High (higher than 15,000). %In order to ensure the reliability of our findings, we apply the DML estimator, which outperforms the DR and IPW estimators in theory and in practice. 

\begin{table*}[tbh]
	\caption{The results of the causal map of three treatments at 9 quantiles (mean and 95\% CI).}
	\centering
	\resizebox{0.9\textwidth}{!}{% use resizebox with textwidth
		\begin{tabular}{cccccc} 
			\toprule
			\textbf{Quantiles} & \textbf{Low} & \textbf{Middle} & \textbf{High}& \textbf{Low$\rightarrow$Middle}& \textbf{Low$\rightarrow$High}  \\ \midrule
			\textbf{10\%}      & 28.0 (27.8, 28.3)      & 29.9 (29.8, 30.0)     & 30.6 (30, 31.2)        &    6.79\%$\uparrow$	&9.29\%$\uparrow$    \\
			\textbf{20\%}      & 43.6 (43.4, 43.9)      & 47.4 (47.1, 47.8)     & 48.8 (47.9, 49.5)      &    8.72\%$\uparrow$	&11.93\%$\uparrow$             \\
			\textbf{30\%}      & 58.7 (58.3, 59.0)      & 65.5 (65.1, 65.9)     & 67.5 (66.5, 68.4)      &    11.58\%$\uparrow$	&14.99\%$\uparrow$   \\
			\textbf{40\%}      & 75.2 (74.7, 75.6)      & 86.8 (86.3, 87.2)     & 91.0 (90.0, 92.1)      &    15.43\%$\uparrow$   &21.01\%$\uparrow$ \\
			\textbf{50\%}      & 94.9 (94.3, 95.6)      & 115.8 (114.9, 116.9)  & 122.4 (121.1, 123.8)   &    22.02\%$\uparrow$	&28.98\%$\uparrow$   \\
			\textbf{60\%}      & 119.0 (118.2, 119.7)   & 150.8 (149.7, 152.0)  & 170.8 (167.4, 174.7)   &    26.72\%$\uparrow$	&43.53\%$\uparrow$   \\
			\textbf{70\%}      & 155.1 (153.7, 156.4)   & 207.0 (205.6, 208.5)  & 256.0 (251.8, 261.5)   &    33.46\%$\uparrow$	&65.05\%$\uparrow$   \\
			\textbf{80\%}      & 212.9 (210.8, 214.6)   & 325.6 (323.2, 328.3)  & 433.0 (424.4, 442.7)   &    52.94\%$\uparrow$	&103.38\%$\uparrow$  \\
			\textbf{90\%}      & 381.0 (374.1, 386.7)   & 654.5 (650.7, 658.4)  & 1020.3 (1003.8, 1036.9)&    71.78\%$\uparrow$	&167.80\%$\uparrow$  \\ \bottomrule
		\end{tabular}
	}
	\label{table:quantiles causal map}
\end{table*}

In Table \ref{table:quantiles causal map}, we give 9 percentiles of the causal map $\bigtriangleup_{High}$, $\bigtriangleup_{Middle}$, and $\bigtriangleup_{Low}$ of all the consumers' spending distributions if they are assigned to High, Middle, and Low credit lines, respectively. %In Figure \ref{fig:JD_res}, we plot the causal maps of three treatments (solid lines) and the causal effect maps of $\bigtriangleup_{High}-\bigtriangleup_{Low}$ (red dash line) and $\bigtriangleup_{Middle}-\bigtriangleup_{Low}$ (purple dash line). 
Generally, the lower quantile of spending distribution stands for life necessities, while the higher quantiles represent luxury goods. Our findings support prior research \citep{aydin2022consumption, soman2002effect}, revealing a positive correlation between credit lines and spending since the causal effect maps $\bigtriangleup_{High}-\bigtriangleup_{Low}$ and $\bigtriangleup_{Middle}-\bigtriangleup_{Low}$ are always positive at all quantiles. Additionally, we observe that such an effect is heterogeneous across different quantiles. Specifically, when the credit lines increase, the spending on higher quantiles (e.g., higher than 70\%) grows significantly while the spending on lower quantiles increases relatively slowly. For example, when credit lines change from Low to High, the spending at 90\% quantile increases from 381.0 to 1020.3 (increasing about 167.8\%) while the spending at 10\% quantile only increases from 28.0 to 30.6 (increasing about 9.3\%).  This suggests that users tend to increase their spending on luxury goods or services when they are able to access higher credit.

%\begin{figure}
%	\centering
%	\includegraphics[scale=0.4]{}
%	\caption{The results for the causal map and causal effect map in the credit experiment. \label{fig:JD_res}}
%\end{figure}

\section{Conclusion}\label{sec:conclusion}
\vspace{-0.1cm}
We study the causal inference on distributional outcomes with multiple treatments in the Wasserstein space. Our target quantity, the causal effect map, is the analogy to ATE in classical causal inference literature. We then propose three estimators, i.e., DR, IPW, and DML estimators, and study their asymptotic properties. Our proposed NFR Net captures complex patterns among covariates, treatments, and functional outcomes, which is verified by the synthetic experiment. Moreover, we apply it to a credit dataset and explore the causal relationship between credit lines and spending distributions. We find that when credit lines increase, the spending at every quantile level increases, with a more significant change at higher quantiles.

Generally, the credit lines is measured continuously, and a potential future research direction involves investigating causal inference in the context of continuous treatment. Additionally, the realized distribution can be multivariate, such as the joint distribution of spending behavior and credit risk, providing an opportunity to explore such scenarios. 

%\section{What Files to Submit}
%You must submit the following items to ensure that your paper is published:
%\begin{itemize}
%\item A fully-compliant PDF file.
%\item Your \LaTeX{} source file submitted as a \textbf{single} .tex file (do not use the ``input" command to include sections of your paper --- every section must be in the single source file). (The only allowable exception is .bib file, which should be included separately).
%\item The bibliography (.bib) file(s).
%\item Your source must compile on our system, which includes only standard \LaTeX{} 2020 TeXLive support files.
%\item Only the graphics files used in compiling paper.
%\item The \LaTeX{}-generated files (e.g. .aux,  .bbl file, PDF, etc.).
%\end{itemize}

\section{Acknowledgments}
Qi Wu acknowledges the support from The CityU-JD Digits Joint Laboratory in Financial Technology and Engineering; The Hong Kong Research Grants Council [General Research Fund 11219420/9043008 and 11200219/9042900]; The HK Institute of Data Science. The work described in this paper was partially supported by the InnoHK initiative, The Government of the HKSAR, and the Laboratory for AI-Powered Financial Technologies.

\bibliographystyle{apalike}
\bibliography{template}

\clearpage
\appendix
%\renewcommand{\thesection}{A}
%\begin{appendices}
\section{Causal Assumptions}\label{sec:causal assumptions}

We introduce the assumptions that are used for causal inference in Section \ref{sec:causal framework}. Here, we explain the importance of each assumption.
\begin{description}
	\item[Consistency] It assures that the observed outcome is due to the assigned intervention that allows us to examine the target quantities from the observable dataset.
	\item[Ignorability/Unconfoundness] It has two meanings. First, if two units have the same $\mathbf{X}$, then the joint distributions of $(\mathcal{Y}(d^{1}),\cdots,\mathcal{Y}(d^{r}))$ conditioning on the covariates $\mathbf{X}$ and the treatment assignment of the two units are the same. Second, the treatment assignment mechanism should be the same if two units have the same $\mathbf{X}$. This assumption is not stronger than the
	assumption for scalar outcomes since the outcome distribution should be understood as a whole part. In our context, this assumption indicates that given one’s covariates, the spending behavior under different treatments is independent of the treatment assignment.

	\item[Positivity] It assures that every treatment has the chance to be assigned to the units. Indeed, if the probability of a particular treatment is $0$, then it is impossible to evaluate the effects due to the treatment.
\end{description}
Apart from the Consistency, Ignorability/Unconfoundness, and Overlap assumptions, we also assume the SUTVA assumption:
\begin{assumption}
	It contains two parts:
	\begin{enumerate}
		\item The potential outcome of a unit is not influenced by the treatment assignment to other units.
		\item For each unit, there are no different forms of treatment levels that lead to different potential outcomes.
	\end{enumerate}
\end{assumption}
Indeed, Statement (1) assures that the potential outcome of a unit is due to the treatment level that the unit receives but not the treatment assignment to other units. Statement (2) ensures that each treatment level should be clearly characterized. Concretely, consider the case that we are interested in the effects of taking Aspirin. If the treatment variable is binary (either taking Aspirin or not), then every patient who takes Aspirin should take the same type of Aspirin and dosage. No patients are allowed to take different dosages. If the dosage is essential, then the treatment should be the dosages of Aspirin a patient takes but not dichotomous. 

\section{Causal Quantities on Distributions}\label{sec:causal quantity difference}

We emphasize that ($\mathcal{Y}, \mathcal{Y}(d)$) and ($Y, Y(d)$) have different meanings. ($Y, Y(d)$) is the outcome variable, and its realization is scalar values, while ($\mathcal{Y}, \mathcal{Y}(d)$) is the outcome variable such that its realization is distribution. In the main paper, we give a parametric example illustrating the meaning of ($\mathcal{Y}, \mathcal{Y}(d)$). Here, we give two concrete examples illustrating the differences when the outcome is either ($Y, Y(d)$) and ($\mathcal{Y}, \mathcal{Y}(d)$).

\begin{example}[When $Y$ for each unit under treatment $d^{i}$ is scalar/vector]
	Suppose the potential outcome distribution is $\mathcal{N}(0,1)$. Then $Y$ for each unit under treatment $d^{i}$ is a realization from the distribution $\mathcal{N}(0,1)$.
\end{example}
\begin{example}[When $\mathcal{Y}$ for each unit under treatment $d^{i}$ is a distribution]
	We give a parametric example. Denote $\mu$ and $\sigma$ as the mean and standard deviation of a normal distribution. Suppose that $(\mu,\log\sigma)\sim\mathcal{N}\bigg(\begin{bmatrix}0\\0\end{bmatrix},\begin{bmatrix}0 & 1\\1 & 0\end{bmatrix}\bigg)$. If one realization of $(\mu,\log\sigma)$ is $(0.1,-0.5)$, then $(\mu,\sigma^{2})=(0.1,e^{-1})$. It means that for a given unit, there is a collection of observations that are drawn from $\mathcal{N}(0.1,e^{-1})$. If another realization of $(\mu,\log\sigma)$ is given (say $(-0.3,1)$), it means that a new unit would have a collection of observations that are drawn from $\mathcal{N}(-0.3,e^{2})$.
\end{example}
\noindent We further provide two examples to illustrate the differences between the ATE (average treatment effect) and the causal effect map. 
\begin{example}[When $Y$ for each unit under treatment $d^{i}$ is scalar/vector]
	Consider three units A, B, C, whose observations and potential outcomes are as follows:
	\begin{table}[!ht]
		\centering
		\begin{tabular}{cccc}
			\toprule
			Unit & $Y(0)$ & $Y(1)$ & $Y(1)-Y(0)$ \\ \hline
			A & 2 & 6 & 4 \\ 
			B & 1 & 5 & 4 \\ 
			C & 3 & 3 & 0 \\ 
			\bottomrule
		\end{tabular}
	\end{table}
	
	\noindent Then, the sampled average treatment effect (ATE) = (4+4+0)/3=8/3.  The sampled quantile treatment effect (QTE) at $\tau=0.5$ = median of $Y(1)$ - median of $Y(0)$ = 5 -  2 = 3.
	
\end{example}
\begin{example}[When $\mathcal{Y}$ for each unit under treatment $d^{i}$ is a distribution] Consider three units A, B, C, whose observations can be constituted as distributions (e.g., $[2, 0, 1]\sim \mathcal{N}(1, 1)$), respectively. The potential outcomes are as follows:
	\begin{table}[!ht]
		\centering
		\begin{tabular}{lll}
			\toprule
			\multicolumn{1}{c}{Unit} & \multicolumn{1}{c}{$\mathcal{Y}(0)$} & \multicolumn{1}{c}{$\mathcal{Y}(1)$}  \\ \hline
			\multicolumn{1}{c}{A} & $[2, 0, 1]\sim \mathcal{N}(1, 1)$ & $[6]\sim \mathcal{N}(5, 1)$  \\ 
			\multicolumn{1}{c}{B} & $[1, 3, 5]\sim \mathcal{N}(3, 1)$ & $[5, 10]\sim \mathcal{N}(8, 1)$  \\ 
			\multicolumn{1}{c}{C} & $[3, 4, 5]\sim \mathcal{N}(4, 1)$ & $[3, 6, 6, 6]\sim \mathcal{N}(6, 1)$  \\ 
			\bottomrule
		\end{tabular}
	\end{table}
	
	\noindent Note that for normal distributions with the same variance, the Wasserstein barycenter for these distributions is also normal distribution, with its mean equal to the averaged mean of these distributions since the Wasserstein barycenter is the distribution that has the smallest Wasserstein distance of these distributions. For example, the Wasserstein barycenter for $\mathcal{N}(a, s^2)$, $\mathcal{N}(b, s^2)$, $\mathcal{N}(c, s^2)$ is $\mathcal{N}\bigg(\frac{a+b+c}{3}, s^2\bigg)$ with the following CDF:
		\begin{equation*}
			{\small
				\begin{aligned}
					\int_{-\infty}^x \frac{1}{\sqrt{2\pi}s}\exp\left(\frac{(z-\frac{a+b+c}{3})^2}{2s^2}\right)dz.
				\end{aligned}
			}
		\end{equation*}
		Thus, in this example, the sampled barycenter for $\mathcal{Y}(0)$ is 
		\begin{equation*}
			{\small
				\begin{aligned}
					\mathrm{Erf}(x)=\int_{-\infty}^x \frac{1}{\sqrt{2\pi}}\exp\left(\frac{(z-\frac{8}{3})^2}{2}\right)dz
				\end{aligned}
			}
		\end{equation*}
		and the sampled barycenter for $\mathcal{Y}(1)$ is 
		\begin{equation*}
			{\small
				\begin{aligned}
					\mathrm{Erf}(x)=\int_{-\infty}^x \frac{1}{\sqrt{2\pi}}\exp\left(\frac{(z-\frac{19}{3})^2}{2}\right)dz.
				\end{aligned}
			}
		\end{equation*}
		Thus, the causal effect map between $\mathcal{Y}(0)$ and $\mathcal{Y}(1)$ is 
		\begin{equation*}
			{\small
				\begin{aligned}
					\mathrm{Erf}^{-1}(x)-\mathrm{Erf}^{-1}(x).
				\end{aligned}
			}
		\end{equation*}
		%{\tiny \[\left[\int_{-\infty}^x \frac{1}{\sqrt{2\pi}}\exp\left(\frac{(z-\frac{19}{3})^2}{2}\right)dx\right]^{-1}-\left[\int_{-\infty}^x \frac{1}{\sqrt{2\pi}}\exp\left(\frac{(z-\frac{8}{3})^2}{2}\right)dx\right]^{-1}\]} %The causal effect map at quantile $\tau=0.5$ is the quantile $\tau=0.5$ at $\mathcal{Y}(0)$ and quantile $\tau=0.5$ at $\mathcal{Y}(1)$, which is $11/3$. 

	\noindent Note that in real cases, the distributions for each unit are more complex than the normal distribution, and they even may not follow any known distributions.
	
\end{example}

\section{Consistency of DR and IPW estimators}\label{sec:consistency studies}
In this section, we are going to study the asymptotic properties of the DR and IPW estimators. The studies require the following Lemmas:
\begin{lemma}\label{lemma:equivalent of DR}
	Let $m_{d^{i}}(\mathbf{X})=\mathbb{E}[\mathcal{Y}^{-1}|D=d^{i},\mathbf{X}]$. We have
	\begin{equation*}
		\begin{aligned}
			\bigtriangleup_{d^{i}}(\cdot)=\mathbb{E}[m_{d^{i}}(\mathbf{X})]
		\end{aligned}
	\end{equation*}
\end{lemma}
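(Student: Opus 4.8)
The plan is to read the claim off Proposition~\ref{prop:causal effect expectation form} together with the three causal assumptions, by running the classical real-valued identification argument at each quantile level. Proposition~\ref{prop:causal effect expectation form}(1) already supplies the entry point $\bigtriangleup_{d^{i}}=\mathbb{E}\big[\mathcal{Y}(d^{i})^{-1}\big]$, where $\mathcal{Y}(d^{i})^{-1}$ denotes the random quantile function of the potential outcome distribution. So the whole content of the lemma is to show that this (infeasible) expectation over potential-outcome quantile functions coincides with the (feasible) expectation $\mathbb{E}[m_{d^{i}}(\mathbf{X})]$ of the conditional regression function $m_{d^{i}}(\mathbf{X})=\mathbb{E}[\mathcal{Y}^{-1}\mid D=d^{i},\mathbf{X}]$.

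First I would fix a quantile level $\tau\in(0,1)$ and work with the scalar random variable $\mathcal{Y}(d^{i})^{-1}(\tau)$, whose square-integrability is guaranteed by the $\mathcal{W}_{2}(\mathcal{I})$ membership in Definition~\ref{def:Wasserstein}. Applying the tower property in $\mathbf{X}$ gives $\mathbb{E}[\mathcal{Y}(d^{i})^{-1}(\tau)]=\mathbb{E}\big[\mathbb{E}[\mathcal{Y}(d^{i})^{-1}(\tau)\mid\mathbf{X}]\big]$. Ignorability, $\mathcal{Y}(d^{i})\perp\!\!\!\!\perp D\mid\mathbf{X}$, lets me insert the conditioning event $\{D=d^{i}\}$, i.e. $\mathbb{E}[\mathcal{Y}(d^{i})^{-1}(\tau)\mid\mathbf{X}]=\mathbb{E}[\mathcal{Y}(d^{i})^{-1}(\tau)\mid D=d^{i},\mathbf{X}]$, while Overlap ($\mathbb{P}\{D=d^{i}\mid\mathbf{X}\}>0$) ensures this conditional expectation is well defined. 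Consistency ($D=d^{i}\Rightarrow\mathcal{Y}=\mathcal{Y}(d^{i})$ a.s.) then replaces the potential outcome by the observed one on that event, so that $\mathbb{E}[\mathcal{Y}(d^{i})^{-1}(\tau)\mid D=d^{i},\mathbf{X}]=\mathbb{E}[\mathcal{Y}^{-1}(\tau)\mid D=d^{i},\mathbf{X}]=m_{d^{i}}(\mathbf{X})(\tau)$. Chaining these equalities yields $\mathbb{E}[\mathcal{Y}(d^{i})^{-1}(\tau)]=\mathbb{E}[m_{d^{i}}(\mathbf{X})(\tau)]$; since $\tau$ was arbitrary, the two functions agree, and combining with Proposition~\ref{prop:causal effect expectation form}(1) gives $\bigtriangleup_{d^{i}}=\mathbb{E}[m_{d^{i}}(\mathbf{X})]$.

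The only delicate point is that all objects are function-valued, so strictly speaking $\mathbb{E}[\mathcal{Y}(d^{i})^{-1}]$ and $\mathbb{E}[m_{d^{i}}(\mathbf{X})]$ are Bochner integrals in the Hilbert space of square-integrable quantile functions, and one must check that the tower property and the unconfoundedness swap remain legitimate there. I expect this to be the main (though mild) obstacle. I would sidestep the abstract Bochner machinery by carrying out the argument pointwise in $\tau$ as above: for each fixed $\tau$ everything reduces to a scalar identity requiring only integrability, and the $\mathcal{W}_{2}$ assumption furnishes uniform-in-$\tau$ square-integrability, which both guarantees the pointwise expectations exist and lets me promote the pointwise equality to an equality of $L^{2}(\mathcal{I})$ elements. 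This keeps the proof identical in structure to the classical scalar direct-regression identification, with the Wasserstein/quantile representation entering only through Proposition~\ref{prop:causal effect expectation form}.
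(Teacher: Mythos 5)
Your proposal is correct and follows essentially the same route as the paper: the paper proves this lemma by pointing to Assertion 2 of Proposition~\ref{prop:causal effect expectation form}, whose proof is exactly your chain $\bigtriangleup_{d^i}=\mathbb{E}[\mathcal{Y}(d^i)^{-1}]=\mathbb{E}[\mathbb{E}[\mathcal{Y}(d^i)^{-1}\mid\mathbf{X}]]=\mathbb{E}[\mathbb{E}[\mathcal{Y}(d^i)^{-1}\mid D=d^i,\mathbf{X}]]=\mathbb{E}[\mathbb{E}[\mathcal{Y}^{-1}\mid D=d^i,\mathbf{X}]]$ via the tower property, Ignorability, and Consistency. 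Your extra care in running the argument pointwise in $\tau$ to avoid Bochner-integral subtleties is a minor refinement of the same argument, not a different approach (the paper performs these steps directly on the function-valued objects).
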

\begin{proof}
	See the proof of Assertion 2 in the proof of Proposition \ref{prop:causal effect expectation form}.
\end{proof}
\begin{lemma}\label{lemma:equivalent of IPW}
	Let $\pi_{d^{i}}(\mathbf{X})=\mathbb{P}\{D=d^{i}|\mathbf{X}\}$. We have
	\begin{equation*}
		\begin{aligned}
			\bigtriangleup_{d^{i}}(\cdot)=\mathbb{E}\bigg[\frac{\mathbf{1}_{\{D=d^{i}\}}}{\pi_{d^{i}}(\mathbf{X})}\mathcal{Y}^{-1}\bigg]
		\end{aligned}
	\end{equation*}
\end{lemma}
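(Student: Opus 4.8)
The plan is to evaluate the right-hand side directly by conditioning on the covariates $\mathbf{X}$ and then invoking the three causal assumptions together with Proposition \ref{prop:causal effect expectation form}. Since that proposition already gives $\bigtriangleup_{d^{i}} = \mathbb{E}[\mathcal{Y}(d^{i})^{-1}]$, it suffices to show that the inverse-propensity-weighted functional equals $\mathbb{E}[\mathcal{Y}(d^{i})^{-1}]$. Throughout, the expectations are understood as Bochner integrals of the quantile-function-valued random elements in $\mathcal{W}_{2}(\mathcal{I})$ (equivalently in $L^{2}$), so the tower property and the usual factorization rules apply.

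First I would apply the law of iterated expectations, conditioning on $\mathbf{X}$, and pull the $\mathbf{X}$-measurable factor $\pi_{d^{i}}(\mathbf{X})^{-1}$ out of the inner conditional expectation; this is well defined because the Overlap assumption guarantees $\pi_{d^{i}}(\mathbf{X}) > 0$ almost surely. This leaves $\mathbb{E}\big[\pi_{d^{i}}(\mathbf{X})^{-1}\,\mathbb{E}[\mathbf{1}_{\{D=d^{i}\}}\mathcal{Y}^{-1}\mid\mathbf{X}]\big]$. Next I would use Consistency: on the event $\{D=d^{i}\}$ we have $\mathcal{Y}=\mathcal{Y}(d^{i})$ almost surely, hence $\mathbf{1}_{\{D=d^{i}\}}\mathcal{Y}^{-1}=\mathbf{1}_{\{D=d^{i}\}}\mathcal{Y}(d^{i})^{-1}$, replacing the observed quantile function by the potential one inside the indicator.

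The key step is the factorization. Because $\mathcal{Y}(d^{i})^{-1}$ is a measurable (deterministic) transform of $\mathcal{Y}(d^{i})$, the Ignorability assumption $\mathcal{Y}(d^{i})\perp\!\!\!\!\perp D\mid\mathbf{X}$ transfers to $\mathcal{Y}(d^{i})^{-1}\perp\!\!\!\!\perp D\mid\mathbf{X}$. This lets me split the inner conditional expectation as $\mathbb{E}[\mathbf{1}_{\{D=d^{i}\}}\mid\mathbf{X}]\,\mathbb{E}[\mathcal{Y}(d^{i})^{-1}\mid\mathbf{X}]$, where the first factor is exactly $\pi_{d^{i}}(\mathbf{X})$. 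The propensity weight then cancels the factor $\pi_{d^{i}}(\mathbf{X})^{-1}$, and a final use of the tower property yields $\mathbb{E}[\mathcal{Y}(d^{i})^{-1}]=\bigtriangleup_{d^{i}}$, as claimed.

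I expect the main obstacle to be formal rather than conceptual: one must verify that the cancellation and factorization arguments, elementary for scalar outcomes, remain valid when $\mathcal{Y}^{-1}$ and $\mathcal{Y}(d^{i})^{-1}$ are function-valued, so that the conditional expectations are Hilbert-space-valued integrals and the conditional independence is genuinely preserved under the quantile map. Checking the integrability needed for these Bochner integrals to exist follows from $\mathcal{Y}(d^{i})\in\mathcal{W}_{2}(\mathcal{I})$ in Definition \ref{def:Wasserstein}, after which the manipulations carry over unchanged from the scalar case.
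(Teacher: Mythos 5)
Your proof is correct and takes essentially the same approach as the paper's: both reduce the claim to showing $\mathbb{E}\big[\mathbf{1}_{\{D=d^i\}}\mathcal{Y}^{-1}/\pi_{d^i}(\mathbf{X})\big]=\mathbb{E}\big[\mathcal{Y}(d^i)^{-1}\big]$ via Proposition \ref{prop:causal effect expectation form}, then apply the tower property conditioning on $\mathbf{X}$, Consistency, and Ignorability to cancel the propensity weight. The only cosmetic difference is that you invoke Consistency first and then factor the inner conditional expectation using conditional independence, whereas the paper expands it as a sum over the treatment levels $d^j$ and applies Consistency and Ignorability at the end; the two orderings are interchangeable.
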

\begin{proof}
	From the proof of Assertion 1 in the proof of Proposition \ref{prop:causal effect expectation form}, we notice that $\bigtriangleup_{d^{i}}(\cdot)=\mathbb{E}[\mathcal{Y}(d^{i})^{-1}]$. We only need to show that
	\begin{equation*}
		\begin{aligned}
			\mathbb{E}[\mathcal{Y}(d^{i})^{-1}]=\mathbb{E}\bigg[\frac{\mathbf{1}_{\{D=d^{i}\}}}{\pi_{d^{i}}(\mathbf{X})}\mathcal{Y}^{-1}\bigg].
		\end{aligned}
	\end{equation*}
	Now, we have
	\begin{equation*}
		\begin{aligned}
			&\mathbb{E}\bigg[\frac{\mathbf{1}_{\{D=d^{i}\}}}{\pi_{d^{i}}(\mathbf{X})}\mathcal{Y}^{-1}\bigg]=\mathbb{E}\bigg[\mathbb{E}\bigg[\frac{\mathbf{1}_{\{D=d^{i}\}}}{\pi_{d^{i}}(\mathbf{X})}\mathcal{Y}^{-1}|\mathbf{X}\bigg]\bigg]\\
			=&\mathbb{E}\bigg[\frac{1}{\pi_{d^{i}}(\mathbf{X})}\mathbb{E}\bigg[\mathbf{1}_{\{D=d^{i}\}}\mathcal{Y}^{-1}|\mathbf{X}\bigg]\bigg]\\
			=&\mathbb{E}\bigg[\frac{1}{\pi_{d^{i}}(\mathbf{X})}\underset{1\leq j\leq r}{\sum}\mathbb{E}[\mathbf{1}_{\{D=d^{i}\}}\mathcal{Y}^{-1}|D=d^{j},\mathbf{X}]\mathbb{P}\{D=d^{j}|\mathbf{X}\}\bigg]\\
			=&\mathbb{E}\bigg[\frac{1}{\pi_{d^{i}}(\mathbf{X})}\mathbb{E}[\mathcal{Y}^{-1}|D=d^{i},\mathbf{X}]\mathbb{P}\{D=d^{i}|\mathbf{X}\}\bigg]\\=&\mathbb{E}[\mathbb{E}[\mathcal{Y}^{-1}|D=d^{i},\mathbf{X}]]\\
			\overset{\ast}{=}&\mathbb{E}[\mathbb{E}[\mathcal{Y}(d^{i})^{-1}|D=d^{i},\mathbf{X}]]\overset{\star}{=}\mathbb{E}[\mathbb{E}[\mathcal{Y}(d^{i})^{-1}|\mathbf{X}]]\\=&\mathbb{E}[\mathcal{Y}(d^{i})^{-1}].
		\end{aligned}
	\end{equation*}
	Here, $\ast$ follows from Consistency Assumption, and $\star$ follows from Ignorability Assumption.
\end{proof}
From Lemmas \ref{lemma:equivalent of DR} and \ref{lemma:equivalent of IPW}, we can study the asymptotic properties of $\hat{\bigtriangleup}_{d^i;DR}$ and $\hat{\bigtriangleup}_{d^i;IPW}$. Recall the estimators $\hat{\bigtriangleup}_{d^i;DR}$ and $\hat{\bigtriangleup}_{d^i;IPW}$ here:

\noindent \underline{\textbf{DR estimator $\hat{\bigtriangleup}_{d^i;DR}$}}:
\begin{equation}
	\begin{aligned}\label{eqt:cross-fit estimator final DR}
		\hat{\bigtriangleup}_{d^i;DR}=\underset{k=1}{\overset{K}{\sum}}\frac{N_{k}}{N}\frac{1}{N_{k}}\underset{s\in\mathcal{D}_{k}}{\overset{}{\sum}}\hat{m}_{d^i}^{k}(\mathbf{X}_{s}). 
	\end{aligned}
\end{equation} 
\underline{\textbf{IPW estimator $\hat{\bigtriangleup}_{d^i;IPW}$}}:
\begin{equation}
	\begin{aligned}\label{eqt:cross-fit estimator final IPW}
		\hat{\bigtriangleup}_{d^i;IPW}=\underset{k=1}{\overset{K}{\sum}}\frac{N_{k}}{N}\frac{1}{N_{k}}\underset{s\in\mathcal{D}_{k}}{\overset{}{\sum}}\frac{\mathbf{1}_{\{D_{s}=d^{i}\}}}{\hat{\pi}^k_{d^i}(\mathbf{X}_{s})}\mathcal{Y}_{s}^{-1}. 
	\end{aligned}
\end{equation}  
We first study the asymptotic property of $\hat{\bigtriangleup}_{d^i;DR}$. Write
\begin{equation*}
	\begin{aligned}
		\hat{\bigtriangleup}_{d^i;DR}=&\underset{k=1}{\overset{K}{\sum}}\frac{N_{k}}{N}\frac{1}{N_{k}}\underset{s\in\mathcal{D}_{k}}{\overset{}{\sum}}\hat{m}_{d^i}^{k}(\mathbf{X}_{s})\\
		=&\underbrace{\underset{k=1}{\overset{K}{\sum}}\frac{N_{k}}{N}\frac{1}{N_{k}}\underset{s\in\mathcal{D}_{k}}{\overset{}{\sum}}[\hat{m}_{d^i}^{k}(\mathbf{X}_{s})-m_{d^i}(\mathbf{X}_{s})]}_{\text{I}}+\underbrace{\underset{k=1}{\overset{K}{\sum}}\frac{N_{k}}{N}\frac{1}{N_{k}}\underset{s\in\mathcal{D}_{k}}{\overset{}{\sum}}m_{d^i}(\mathbf{X}_{s})}_{\text{II}}.
	\end{aligned}
\end{equation*}
$\text{II}$ is the sample averaging of $\mathbb{E}[m_{d^{i}}(\mathbf{X})]$, so $\text{II}$ converges to $\mathbb{E}[m_{d^{i}}(\mathbf{X})]$ in probability. By Lemma \ref{lemma:equivalent of DR}, $\text{II}$ converges to $\bigtriangleup_{d^{i}}(\cdot)$ in probability. Hence, the consistency of the estimator $\hat{\bigtriangleup}_{d^i;DR}$ depends on the convergence of $\hat{m}_{d^i}^{k}(\mathbf{X}_{s})$ on $m_{d^i}(\mathbf{X}_{s})$.

We then study the asymptotic property of $\hat{\bigtriangleup}_{d^i;IPW}$. Similarly, we can write
\begin{equation*}
	\begin{aligned}
		\hat{\bigtriangleup}_{d^i;IPW}=&\underset{k=1}{\overset{K}{\sum}}\frac{N_{k}}{N}\frac{1}{N_{k}}\underset{s\in\mathcal{D}_{k}}{\overset{}{\sum}}\frac{\mathbf{1}_{\{D_{s}=d^{i}\}}}{\hat{\pi}^k_{d^i}(\mathbf{X}_{s})}\mathcal{Y}_{s}^{-1}\\
		=&\underbrace{\underset{k=1}{\overset{K}{\sum}}\frac{N_{k}}{N}\frac{1}{N_{k}}\underset{s\in\mathcal{D}_{k}}{\overset{}{\sum}}\biggl[\frac{1}{\hat{\pi}^k_{d^i}(\mathbf{X}_{s})}-\frac{1}{\pi_{d^i}(\mathbf{X}_{s})}\biggl]\mathbf{1}_{\{D_{s}=d^{i}\}}\mathcal{Y}_{s}^{-1}}_{\text{I}}+\underbrace{\underset{k=1}{\overset{K}{\sum}}\frac{N_{k}}{N}\frac{1}{N_{k}}\underset{s\in\mathcal{D}_{k}}{\overset{}{\sum}}\frac{\mathbf{1}_{\{D_{s}=d^{i}\}}}{\pi_{d^i}(\mathbf{X}_{s})}\mathcal{Y}_{s}^{-1}}_{\text{II}}.
	\end{aligned}
\end{equation*}
Again, $\text{II}$ is the sample averaging of $\mathbb{E}\bigg[\frac{\mathbf{1}_{\{D=d^{i}\}}}{\pi_{d^{i}}(\mathbf{X})}\mathcal{Y}^{-1}\bigg]$, so $\text{II}$ converges to $\mathbb{E}\bigg[\frac{\mathbf{1}_{\{D=d^{i}\}}}{\pi_{d^{i}}(\mathbf{X})}\mathcal{Y}^{-1}\bigg]$ in probability. By Lemma \ref{lemma:equivalent of IPW}, $\text{II}$ converges to $\bigtriangleup_{d^{i}}(\cdot)$ in probability. Hence, the consistency of the estimator $\hat{\bigtriangleup}_{d^i;DR}$ depends on the convergence of $\hat{\pi}_{d^i}^{k}(\mathbf{X}_{s})$ on $\pi_{d^i}(\mathbf{X}_{s})$.

\section{Missing Proofs}\label{sec:missing proofs}

\subsection{Proof of Proposition \ref{prop:causal effect expectation form}}
\begin{proof}[Proof of Proposition \ref{prop:causal effect expectation form}]
	Proof of Assertion 1: If we can prove that $\mathbb{E}\big[\mathcal{Y}(d^{i})^{-1}\big]=\mu_{d^i}^{-1}$, then we have $\bigtriangleup_{d^i}=\mu_{d^i}^{-1}=\mathbb{E}\big[\mathcal{Y}(d^{i})^{-1}\big]$. Let $\mathcal{Q}$ be the set containing all the left-continuous non-decreasing functions on $(0,1)$. If we view $\mathcal{Q}$ as a subspace of $L^{2}([0,1])$, then it is isometric to $\mathcal{W}_{2}(\mathcal{I})$ \citep{panaretos2020invitation}. Indeed, $\mu_{d^i}=\underset{\nu\in\mathcal{W}_{2}(\mathcal{I})}{\arg\min}\mathbb{E}\big[\mathbb{D}_{2}(\mathcal{Y}(d^{i}),\nu)^{2}\big]\overset{\bullet}{=}\underset{\nu\in\mathcal{Q}}{\arg\min}\;\mathbb{E}\big[\int_{0}^{1}|\mathcal{Y}(d^{i})^{-1}(t)-\nu^{-1}(t)|^{2}dt\big]$. Here, $\overset{\bullet}{=}$ follows from Theorem 2.18 of \cite{villani2021topics}. Since we can interchange the integral sign $\int$ and $\mathbb{E}$, we notice that $\mathbb{E}\big[\int_{0}^{1}|\mathcal{Y}(d^{i})^{-1}(t)-\nu^{-1}(t)|^{2}dt\big]=\int_{0}^{1}\mathbb{E}\big[|\mathcal{Y}(d^{i})^{-1}(t)-\nu^{-1}(t)|^{2}\big]dt=\int_{0}^{1}(\mathbb{E}\big[\mathcal{Y}(d^{i})^{-1}(t)\big]-\nu^{-1}(t))^{2}dt+\int_{0}^{1}\mathbb{E}[(\mathbb{E}\big[\mathcal{Y}(d^{i})^{-1}(t)\big]-\mathcal{Y}(d^{i})^{-1}(t))^{2}]dt$, and $\mathbb{E}\big[\int_{0}^{1}|\mathcal{Y}(d^{i})^{-1}(t)-\nu^{-1}(t)|^{2}dt\big]$ attains its minimum when $\nu^{-1}(t)=\mathbb{E}\big[\mathcal{Y}(d^{i})^{-1}(t)\big]$. We can therefore conclude that $\mu_{d^i}=\big(\mathbb{E}\big[\mathcal{Y}(d^{i})^{-1}(t)\big]\big)^{-1}$.
	
	Proof of Assertion 2: From Proposition \ref{prop:causal effect expectation form}, $\bigtriangleup_{d^i}(\cdot)=\mathbb{E}\big[\mathcal{Y}(d^{i})^{-1}\big]$. Consistency assumption assures that $\mathcal{Y}^{-1}=\mathcal{Y}(d^{i})^{-1}$ since $\mathcal{Y}=\mathcal{Y}(d^{i})$ when $D=d^{i}$. Consequently, we have $\mathbb{E}\big[\mathcal{Y}(d^{i})^{-1}\big]=\mathbb{E}\big[\mathbb{E}\big[\mathcal{Y}(d^{i})^{-1}|\mathbf{X}\big]\big]\overset{\star}{=}\mathbb{E}\big[\mathbb{E}\big[\mathcal{Y}(d^{i})^{-1}|D=d^{i},\mathbf{X}\big]\big]\overset{\ast}{=}\mathbb{E}\big[\mathbb{E}\big[\mathcal{Y}^{-1}|D=d^{i},\mathbf{X}\big]\big]$. $\ast$ follows from Ignorability Assumption while $\star$ follows from Consistency Assumption. Thus, we conclude that $\bigtriangleup_{d^i}(\cdot)$ is identified.
\end{proof}

Before starting the proof of Theorem \ref{thm:root N consistent}, we introduce some notations that are useful in the proof of Theorem \ref{thm:root N consistent}.

We put a $\hat{}$ on top of a random variable which represents the estimate of the random variable. For example, we use $\hat{\mathcal{Y}}$ and $\hat{\pi}_{d^i}$ to represent the estimate of $\mathcal{Y}$ and $\pi_{d^i}$. Since we spit the observed data into $K$ disjoint union sets $\mathcal{D}_{1},\cdots,\mathcal{D}_{K}$, we also use $\hat{\pi}^{k}_{d^i}$ to represent the estimate $\pi_{d^i}$ based on the set $\mathcal{D}_{-k}$ and evaluate using the set $\mathcal{D}_{k}$.

We use $m_{d^i}(\mathbf{X})=\mathbb{E}[\mathcal{Y}(d^{i})^{-1}|\mathbf{X}]=\mathbb{E}[\mathcal{Y}^{-1}|D=d^{i},\mathbf{X}]$. Further, denote $\tilde{m}_{d^i}^{k}$ as the estimate using true $\mathcal{Y}$ based on the set $\mathcal{D}_{-k}$ and evaluate using the set $\mathcal{D}_{k}$. Define an inner product space with inner product $\langle h_{1},h_{2}\rangle=\int h_{1}(t)h_{2}(t)dt$ such that $\|h\|^{2}=\int h^{2}(t)dt$. Further, given two functions $h_{1}(\cdot,\mathbf{X})$ and $h_{2}(\cdot,\mathbf{X})$, we define $\vvvert h_{1}-h_{2}\vvvert^{2}$ such that $\vvvert h_{1}-h_{2}\vvvert^{2}=\int \|h_{1}(\cdot,\mathbf{x})-h_{2}(\cdot,\mathbf{x})\|^{2}dF_{\mathbf{X}}(\mathbf{x})=\int \int |h_{1}(t,\mathbf{x})-h_{2}(t,\mathbf{x})|^{2} dt\;dF_{\mathbf{X}}(\mathbf{x})=\mathbb{E}[\|h_{1}-h_{2}\|^{2}]$, where $F_{\mathbf{X}}(\mathbf{x})$ is the cumulative distribution of $\mathbf{X}$. 
Besides, let $\rho_{\pi}^{4}=\mathbb{E}[|\hat{\pi}_{d^i}(\mathbf{X})-\pi_{d^i}(\mathbf{X})|^{4}]$;
$\rho_{m}^{4}=\max\{\vvvert\tilde{m}_{d^i}-m_{d^i}\vvvert^{4},\;1\leq i\leq r\}$. The expectation $\mathbb{E}$ are conditional expectation conditioning on the estimated nuisance functions.
We also assume that $\mathbb{E}[\|m_{d^i}(\mathbf{X})\|^{4}]$, $\mathbb{E}[\|\mathcal{Y}^{-1}\|^{4}]$, and $\mathbb{E}[\|\mathcal{Y}(d^{i})^{-1}\|_{\lambda}^{4}]<\infty$ $\forall\;1\leq i\leq r$.

\noindent Next, we state the convergence assumptions which are needed in the proof of Theorem \ref{thm:root N consistent}.

\begin{convassumption}\label{ass:assumption1}
	The estimates $\hat{\mathcal{Y}}_{1},\cdots,\hat{\mathcal{Y}}_{N}$ are independent of each other. Furthermore, there are two sequences of constants $\alpha_{N}=o(1)$ and $\nu_{N}=o(1)$ such that
	\begin{equation*}
		\begin{aligned}
			\underset{1\leq s\leq N}{\sup}\underset{v\in\mathcal{W}(\mathcal{I})}{\sup}\mathbb{E}[\mathbb{D}_{2}^{2}(\hat{\mathcal{Y}}_{s},\mathcal{Y}_{s})|\mathcal{Y}_{s}=v]&=O(\alpha_{N}^{2})\\
			\underset{1\leq s\leq N}{\sup}\underset{v\in\mathcal{W}(\mathcal{I})}{\sup}\mathbb{V}[\mathbb{D}_{2}^{2}(\hat{\mathcal{Y}}_{s},\mathcal{Y}_{s})|\mathcal{Y}_{s}=v]&=O(\nu_{N}^{4}).
		\end{aligned}
	\end{equation*}
\end{convassumption}

\begin{convassumption}\ \label{ass:assumption3}
	\begin{enumerate}
		\item There exists such that
		\begin{equation*}
			\begin{aligned}
				\mathbb{P}\{\epsilon<\hat{\pi}_{d^i}^{k}<1-\epsilon,\;\forall\;x\}=1.
			\end{aligned}
		\end{equation*}
		\item The outcome regression and propensity score estimates converge: $\forall\;i\in\{1,\cdots,r\}$ and $\forall\;1\leq k\leq K$, we have
		\begin{equation*}
			\begin{aligned}
				\underset{x}{\sup}\|\tilde{m}_{d^i}^k-m_{d^i}\|&=o_{P}(1)\quad\textit{and} \quad
				\underset{x}{\sup}\|\hat{\pi}_{d^i}^{k}-\pi_{d^i}\|&=o_{P}(1).
			\end{aligned}
		\end{equation*}
		
	\end{enumerate}
\end{convassumption}

\begin{convassumption}\label{ass:assumption5}
	$\vvvert \hat{m}_{d^i}^k-\tilde{m}_{d^i}^k\vvvert=O_{P}(N^{-1}+\alpha_{N}^{2}+\nu_{N}^{2})$ $\forall\;i\in\{1,\cdots,r\}$ and $1\leq k\leq K$.
\end{convassumption}
\begin{convassumption}\label{ass:assumption6}
	There exist constants $c_{1}$ and $c_{2}$ such that $0<c_{1}\leq \frac{N_{k}}{N}\leq c_{2}<1$ for all $N$ and $1\leq k\leq K$.
\end{convassumption}
\noindent To prove Theorem 1, we require all the Convergence Assumptions. Further, we need to assume that the two sequences $\alpha_{N}$ and $\nu_{N}$ in Convergence Assumption \ref{ass:assumption1} satisfy $\alpha_{N}=o(N^{-\frac{1}{2}})$ and $\nu_{N}=o(N^{-\frac{1}{2}})$. Note that $\alpha_{N}=o(N^{-\frac{1}{2}})$ and $\nu_{N}=o(N^{-\frac{1}{2}})$ holds imply that $\alpha_{N}=o(1)$ and $\nu_{N}=o(1)$ automatically.

\subsection{Proof of Theorem \ref{thm:root N consistent}}

The proof requires two Lemmas.
\begin{lemma}\label{lemma:1Lemma}
	For $G_{1},\;G_{2}\in\mathcal{W}_{2}(\mathcal{I})$, we have $\|G_{1}-G_{2}\|=\mathbb{D}_{2}(G_{1},G_{2})$.
\end{lemma}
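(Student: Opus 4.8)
The plan is to recognize this as the one-dimensional closed form for the quadratic Wasserstein distance, unwound through the quantile-function representation already introduced in the proof of Proposition \ref{prop:causal effect expectation form}. There we identified $\mathcal{W}_2(\mathcal{I})$ isometrically with the set $\mathcal{Q}$ of left-continuous non-decreasing functions inside $L^2([0,1])$ via the map $G \mapsto G^{-1}$, and the ambient inner product is $\langle h_1, h_2\rangle = \int h_1(t)h_2(t)\,dt$ with $\|h\|^2 = \int h^2(t)\,dt$. Under this identification the symbol $\|G_1 - G_2\|$ is to be read as $\|G_1^{-1} - G_2^{-1}\|$, the $L^2([0,1])$ distance between the two quantile functions, so the assertion is precisely that this $L^2$ distance coincides with $\mathbb{D}_2(G_1,G_2)$.

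First I would invoke the classical one-dimensional optimal transport result (Theorem 2.18 of \cite{villani2021topics}, the same reference used in the proof of Proposition \ref{prop:causal effect expectation form}): for distributions on $\mathbb{R}$ under quadratic cost the monotone coupling is optimal, which yields the closed form
\begin{equation*}
\mathbb{D}_2(G_1,G_2)^2 = \int_0^1 |G_1^{-1}(t) - G_2^{-1}(t)|^2\,dt.
\end{equation*}
Concretely, one takes the comonotone coupling $\Pi$ obtained by pushing $G_1$ forward through $T = G_2^{-1}\circ G_1$, verifies that $\Pi \in \Lambda(G_1,G_2)$, and computes via the change of variables $s = G_1^{-1}(t)$ that its transport cost equals the displayed integral; optimality of this coupling among all $\Pi \in \Lambda(G_1,G_2)$ is exactly the content of the cited theorem, so the infimum in Definition \ref{def:Wasserstein} is attained there.

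Then I would simply read off the right-hand side as $\|G_1^{-1} - G_2^{-1}\|^2$ under the norm convention above and take square roots, which under the isometric identification is $\|G_1 - G_2\|^2$, completing the argument.

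I do not expect a genuine obstacle here: the lemma is essentially a restatement of the isometry between $(\mathcal{W}_2(\mathcal{I}), \mathbb{D}_2)$ and the quantile functions in $L^2([0,1])$, and its only substantive ingredient---optimality of the monotone coupling in one dimension---is a standard result already relied upon earlier in the paper. The one point that warrants care is purely notational: making explicit that $\|\cdot\|$ applied to elements of $\mathcal{W}_2(\mathcal{I})$ is understood through the quantile representation, so that both sides of the claimed identity live in the same space and the equality is meaningful.
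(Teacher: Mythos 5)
Your proof is correct and follows the standard route: the paper itself does not prove Lemma~\ref{lemma:1Lemma} but defers it to \cite{lin2023causal}, and the argument there (and the one the paper implicitly relies on, e.g.\ in the proof of Proposition~\ref{prop:causal effect expectation form} via Theorem~2.18 of \cite{villani2021topics}) is exactly your quantile-function isometry plus optimality of the monotone coupling. Your notational caveat --- that $\|G_1-G_2\|$ must be read through the quantile representation $\|G_1^{-1}-G_2^{-1}\|$ in $L^2([0,1])$ --- is the right reading, since this is precisely how the lemma is applied to quantities like $\hat{\mathcal{Y}}_s^{-1}-\mathcal{Y}_s^{-1}$ in the proof of Theorem~\ref{thm:root N consistent}.
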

\begin{lemma}\label{lemma:2Lemma}
	Under Convergence Assumption \ref{ass:assumption1}, we have $\frac{1}{N}\underset{s=1}{\overset{N}{\sum}}\|\hat{\mathcal{Y}}_{s}^{-1}-\mathcal{Y}_{s}^{-1}\|^{2}=O_{P}(\alpha_{N}^{2}+\nu_{N}^{2})$.
	%\begin{equation*}
	%\begin{aligned}
	%\underset{\lambda}{\sup}\frac{1}{N}\underset{s=1}{\overset{N}{\sum}}\|\mathcal{L}_{\lambda}\hat{Y}_{s}-\mathcal{L}_{\lambda}Y_{s}\|_{\lambda}^{2}=O_{P}(\alpha_{N}^{2}+\nu_{N}^{2}).
	%\end{aligned}
	%\end{equation*}
\end{lemma}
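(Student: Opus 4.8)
The plan is to reduce the average $L^2$ quantile-function error to an average of squared Wasserstein distances using Lemma \ref{lemma:1Lemma}, and then to control this nonnegative average by bounding its expectation and invoking Markov's inequality. First I would apply Lemma \ref{lemma:1Lemma} termwise with $G_1=\hat{\mathcal{Y}}_s$ and $G_2=\mathcal{Y}_s$ (read in the quantile-function representation), which yields $\|\hat{\mathcal{Y}}_s^{-1}-\mathcal{Y}_s^{-1}\|=\mathbb{D}_2(\hat{\mathcal{Y}}_s,\mathcal{Y}_s)$ for each $s$. Squaring and averaging gives
\[
\frac{1}{N}\sum_{s=1}^N\|\hat{\mathcal{Y}}_s^{-1}-\mathcal{Y}_s^{-1}\|^2=\frac{1}{N}\sum_{s=1}^N\mathbb{D}_2^2(\hat{\mathcal{Y}}_s,\mathcal{Y}_s)=:\bar Z_N,
\]
so it suffices to show $\bar Z_N=O_P(\alpha_N^2+\nu_N^2)$.

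Next, writing $Z_s=\mathbb{D}_2^2(\hat{\mathcal{Y}}_s,\mathcal{Y}_s)\ge 0$, I would pass to the unconditional mean via the tower property: $\mathbb{E}[Z_s]=\mathbb{E}[\mathbb{E}[Z_s\mid\mathcal{Y}_s]]$. The first display in Convergence Assumption \ref{ass:assumption1} bounds the inner conditional expectation uniformly, $\sup_s\sup_v\mathbb{E}[Z_s\mid\mathcal{Y}_s=v]=O(\alpha_N^2)$, so $\mathbb{E}[Z_s]=O(\alpha_N^2)$ for each $s$ and hence $\mathbb{E}[\bar Z_N]=\frac1N\sum_s\mathbb{E}[Z_s]=O(\alpha_N^2)$. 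Since $\bar Z_N\ge0$ and $\mathbb{E}[\bar Z_N]\le C\alpha_N^2\le C(\alpha_N^2+\nu_N^2)$ for some constant $C$, Markov's inequality gives $\mathbb{P}\big(\bar Z_N>M(\alpha_N^2+\nu_N^2)\big)\le C/M$ for every $M>0$, and taking $M$ large (uniformly in $N$) makes this arbitrarily small, which is precisely $\bar Z_N=O_P(\alpha_N^2+\nu_N^2)$, as required.

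The only genuinely delicate step is the first one: the identity $\|\hat{\mathcal{Y}}_s^{-1}-\mathcal{Y}_s^{-1}\|=\mathbb{D}_2(\hat{\mathcal{Y}}_s,\mathcal{Y}_s)$ rests on reading the $L^2$ norm as acting on quantile functions and on the isometry between $\mathcal{W}_2(\mathcal{I})$ and the cone of quantile functions inside $L^2([0,1])$—exactly what Lemma \ref{lemma:1Lemma} supplies—after which everything reduces to a routine first-moment estimate. I note that neither the variance control nor the independence of the $\hat{\mathcal{Y}}_s$ stated in Convergence Assumption \ref{ass:assumption1} is needed for this lemma; they would only enter if one wanted the sharper fluctuation bound $\bar Z_N-\mathbb{E}[\bar Z_N]=O_P\big(N^{-1/2}(\alpha_N^2+\nu_N^2)\big)$ via Chebyshev's inequality, using $\mathbb{V}[\bar Z_N]=N^{-2}\sum_s\mathbb{V}[Z_s]=O\big(N^{-1}(\alpha_N^4+\nu_N^4)\big)$. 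For the stated order, the expectation bound alone is enough.
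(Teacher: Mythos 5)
Your proof is correct. Note, though, that the paper does not actually prove this lemma itself --- it defers the proofs of both Lemma \ref{lemma:1Lemma} and Lemma \ref{lemma:2Lemma} to \cite{lin2023causal} --- so your argument is a self-contained alternative rather than a reproduction of the paper's reasoning. Your three steps are all sound: the termwise isometry $\|\hat{\mathcal{Y}}_s^{-1}-\mathcal{Y}_s^{-1}\| = \mathbb{D}_2(\hat{\mathcal{Y}}_s,\mathcal{Y}_s)$ is exactly what Lemma \ref{lemma:1Lemma} supplies in the quantile-function reading; the tower property combined with the uniform conditional-mean bound in Convergence Assumption \ref{ass:assumption1} gives
\begin{equation*}
\mathbb{E}\Big[\tfrac{1}{N}\textstyle\sum_{s=1}^{N}\mathbb{D}_2^2(\hat{\mathcal{Y}}_s,\mathcal{Y}_s)\Big]=O(\alpha_N^2)
\end{equation*}
without any appeal to independence; and Markov's inequality converts this into $O_P(\alpha_N^2)$, which is slightly stronger than the claimed $O_P(\alpha_N^2+\nu_N^2)$. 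Your closing remark also correctly diagnoses why the assumption nevertheless carries the variance bound and the independence of the $\hat{\mathcal{Y}}_s$: a Chebyshev-type argument (presumably the route taken in \cite{lin2023causal}, and the reason $\nu_N$ appears in the lemma's statement at all) needs both to control the fluctuation of the average around its mean at scale $N^{-1/2}(\alpha_N^2+\nu_N^2)$, whereas for the stated one-sided order bound on a nonnegative average the first moment alone suffices. The citation buys the paper brevity; your route buys a proof that makes explicit which parts of the assumption actually drive the conclusion.
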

\noindent The proofs of Lemmas \ref{lemma:1Lemma} and \ref{lemma:2Lemma} can be found in \cite{lin2023causal}. Now, we are ready to prove Theorem \ref{thm:root N consistent}.

\begin{proof}[Proof of Theorem \ref{thm:root N consistent}]
	In the following proof, we assume that $K=2$. The general case is similar. For simplicity, we define four operators $\mathbb{P}_{N}$, $\mathbb{P}_{N_{k}}$, $\mathbb{E}_{N}$, and $\mathbb{E}_{N_{k}}$ such that given a random quantity $\mathcal{O}$, $\mathbb{P}_{N}\mathcal{O}=\frac{1}{N}\underset{s=1}{\overset{N}{\sum}}\mathcal{O}_{s}$, $\mathbb{P}_{N_{k}}=\frac{1}{N_{k}}\underset{s\in\mathcal{D}_{k}}{\overset{}{\sum}}\mathcal{O}_{s}$, $\mathbb{E}_{N}=\frac{1}{N}\underset{s=1}{\overset{N}{\sum}}\mathbb{E}[\mathcal{O}_{s}]$, and $\mathbb{E}_{N_{k}}=\frac{1}{N_{k}}\underset{s\in\mathcal{D}_{k}}{\overset{}{\sum}}\mathbb{E}[\mathcal{O}_{s}]$. Given the distributions $\lambda$. Define $\mathcal{L}\lambda=\lambda^{-1}$. Let $Z_{s}=\mathcal{L}\mathcal{Y}_{s}$, and if the $s^{\mathrm{th}}$ subject belongs to the $k$ partition, then $\hat{Z}_{s}=\mathcal{L}\hat{\mathcal{Y}}_{s}$ and $R_{s}=\hat{Z}_{s}-Z_{s}$. Define $D_{d^i}^k(x)=\hat{m}_{d^i}^k(x)-\tilde{m}_{d^i}^k(x)$.  Under the causal assumptions, we have
	\begin{equation*}
		\begin{aligned}
			\bigtriangleup_{d^i}=\mathbb{E}\bigg[\frac{\mathbf{1}_{\{D=d^{i}\}}\mathcal{L}\mathcal{Y}}{\pi_{d^i}(\mathbf{X})}-\bigg(\frac{\mathbf{1}_{\{D=d^{i}\}}}{\pi_{d^i}(\mathbf{X})}-1\bigg)m_{d^i}(\mathbf{X})\bigg].
		\end{aligned}
	\end{equation*}
	Denote the corresponding sampled version using $\mathcal{D}_{k}$ as
	\begin{equation*}
		{\small
			\begin{aligned}
				\hat{\bigtriangleup}_{d^i;DML}^{k}=\mathbb{P}_{N_{k}}\bigg[\frac{\mathbf{1}_{\{D=d^{i}\}}\mathcal{L}\hat{\mathcal{Y}}}{\hat{\pi}^{k}_{d^i}(\mathbf{X})}-\bigg(\frac{\mathbf{1}_{\{D=d^{i}\}}}{\hat{\pi}_{d^i}^{k}(\mathbf{X})}-1\bigg)\hat{m}_{d^i}^k(\mathbf{X})\bigg].
			\end{aligned}
		}
	\end{equation*}
	As a result, we have the cross-fitting estimator $\hat{\bigtriangleup}_{d^i;DML}$ such that
	\begin{equation*}
		\begin{aligned}
			\hat{\bigtriangleup}_{d^i;DML}&=\underset{k=1}{\overset{2}{\sum}}\frac{N_{k}}{N}\hat{\bigtriangleup}_{d^i;DML}^{k}\\&=\frac{1}{N}(N_{1}\hat{\bigtriangleup}_{d^i;DML}^{1}+N_{2}\hat{\bigtriangleup}_{d^i;DML}^{2}).
		\end{aligned}
	\end{equation*}
	
	Next, we consider the difference $\hat{\bigtriangleup}_{d^i;DML}-\bigtriangleup_{d^i}^{\lambda}$. Indeed, we have
	\begin{equation*}
		{\small
			\begin{aligned}
				&\frac{1}{N}(N_{1}\hat{\bigtriangleup}_{d^i;DML}^{1}+N_{2}\hat{\bigtriangleup}_{d^i;DML}^{2})-\bigtriangleup_{d^i}=\frac{1}{N}\underset{k=1,2}{\overset{}{\sum}}N_{k}\mathcal{A}_{k}-\bigtriangleup_{d^i},
			\end{aligned}
		}
	\end{equation*}
	where
	\begin{equation*}
		\begin{aligned}
			\mathcal{A}_{k}=\mathbb{P}_{N_{k}}\bigg[\frac{\mathbf{1}_{\{D=d^{i}\}}Z+\mathbf{1}_{\{D=d^{i}\}}R}{\hat{\pi}^{k}_{d^i}(\mathbf{X})}
			-\bigg(\frac{\mathbf{1}_{\{D=d^{i}\}}}{\hat{\pi}^{k}_{d^i}(\mathbf{X})}-1\bigg)(\tilde{m}_{d^i}^k(\mathbf{X})+D_{d^i}^k(\mathbf{X}))\bigg].
		\end{aligned}
	\end{equation*}
	
	We then decompose $\frac{1}{N}\underset{k=1,2}{\overset{}{\sum}}N_{k}\mathcal{A}_{k}-\bigtriangleup_{i}$ into the sum of five quantities as follows:
	\begin{equation*}
		\begin{aligned}
			\frac{1}{N}\underset{k=1,2}{\overset{}{\sum}}N_{k}(\text{I}+\text{II}+\text{III}+\text{IV}+\text{V})\end{aligned}
	\end{equation*}
	where
	\begin{equation*}
		\begin{aligned}
			\text{I}&=(\mathbb{P}_{N_{k}}-\mathbb{E}_{N_{k}})\bigg[\frac{\mathbf{1}_{\{D=d^{i}\}}(Z-\tilde{m}_{d^i}^k(\mathbf{X}))}{\hat{\pi}_{d^i}^{k}(\mathbf{X})}+\tilde{m}_{d^i}^k(\mathbf{X})-\frac{\mathbf{1}_{\{D=d^{i}\}}(Z-m_{d^i}(\mathbf{X}))}{\pi_{d^i}(\mathbf{X})}-m_{d^i}(\mathbf{X})\bigg]\\
			\text{II}&=(\mathbb{P}_{N_{k}}-\mathbb{E}_{N_{k}})\bigg[\frac{\mathbf{1}_{\{D=d^{i}\}}(Z-m_{d^i}(\mathbf{X}))}{\pi_{d^i}(\mathbf{X})}+m_{d^i}(\mathbf{X})\bigg]\\
			\text{III}&=\mathbb{E}_{N_{k}}\bigg[\frac{(\tilde{m}_{d^i}^k(\mathbf{X})-m_{d^i}(\mathbf{X}))(\hat{\pi}^{k}_{d^i}(\mathbf{X})-\mathbf{1}_{\{D=d^{i}\}})}{\hat{\pi}^{k}_{d^i}(\mathbf{X})}\bigg]\\
			\text{IV}&=\mathbb{P}_{N_{k}}\bigg\{\bigg(1-\frac{\mathbf{1}_{\{D=d^{i}\}}}{\hat{\pi}^{k}_{d^i}(\mathbf{X})}\bigg)D_{d^i}^k(\mathbf{X})\bigg\}\\
			\text{V}&=\mathbb{P}_{N_{k}}\bigg\{\frac{\mathbf{1}_{\{D=d^{i}\}}R}{\hat{\pi}^{k}_{d^i}(\mathbf{X})}\bigg\}.
		\end{aligned}
	\end{equation*}
	\textit{Proof of Theorem \ref{thm:root N consistent}.\ref{thm:root N consistent1}}: The proof follows from the bounds of I - V (see more details in the sequel). 
	
	\noindent\textit{Proof of Theorem \ref{thm:root N consistent}.\ref{thm:root N consistent2}}: The proof follows from the Central Limit Theorem and Slutsky's Lemma after we get the bounds of I - V.
	
	\noindent In the sequel, we bound I - V subsequently.
	
	\noindent\underline{Boundness of I:} Let 
	\begin{equation*}
		{\small
			\begin{aligned}
				G(D,\mathbf{X},Z)&=\frac{\mathbf{1}_{\{D=d^{i}\}}\{Z-\tilde{m}_{d^i}^k(\mathbf{X})\}}{\hat{\pi}^{k}_{d^i}(\mathbf{X})}+\tilde{m}_{d^i}^k(\mathbf{X})-m_{d^i}(\mathbf{X})\\
				H_{1}(D,\mathbf{X},Z)&=\frac{\mathbf{1}_{\{D=d^{i}\}}Z\{\pi_{d^i}(\mathbf{X})-\hat{\pi}^{k}_{d^i}(\mathbf{X})\}}{\hat{\pi}^{k}_{d^i}(\mathbf{X})\pi_{d^i}(\mathbf{X})}\\
				H_{2}(D,\mathbf{X},Z)&=\frac{\mathbf{1}_{\{D=d^{i}\}}\{\hat{\pi}^{k}_{d^i}(\mathbf{X})m_{d^i}(\mathbf{X})-\pi_{d^i}(\mathbf{X})\tilde{m}_{d^i}^k(\mathbf{X})\}}{\hat{\pi}^{k}_{d^i}(\mathbf{X})\pi_{d^i}(\mathbf{X})}\\
				H_{3}(D,\mathbf{X},Z)&=\tilde{m}_{d^i}^k(\mathbf{X})-m_{d^i}(\mathbf{X})\\
				H(D,\mathbf{X},Z)&=H_{1}(D,\mathbf{X},Z)+H_{2}(D,\mathbf{X},Z)+H_{3}(D,\mathbf{X},Z).
			\end{aligned}
		}
	\end{equation*}
	We consider $\mathbb{E}[\|\text{I}\|^{2}]$. Note that 
	\begin{equation*}
		{\small
			\begin{aligned}
				H(D,\mathbf{X},Z)&=\bigg[\frac{\mathbf{1}_{\{D=d^{i}\}}(Z-\tilde{m}_{d^i}^k(\mathbf{X}))}{\hat{\pi}^{k}_{d^i}(\mathbf{X})}+\tilde{m}_{d^i}^k(\mathbf{X})-\\
				&\quad\quad\quad\frac{\mathbf{1}_{\{D=d^{i}\}}(Z-m_{d^i}(\mathbf{X}))}{\pi_{d^i}(\mathbf{X})}-m_{d^i}(\mathbf{X})\bigg].
			\end{aligned}
		}
	\end{equation*}
	Hence, we have
	\begin{equation*}
		\begin{aligned}
			&\mathbb{E}[\|\text{I}\|^{2}]=\mathbb{E}[\|(\mathbb{P}_{N_{k}}-\mathbb{E}_{N_{k}})H(D,\mathbf{X},Z)\|^{2}].
		\end{aligned}
	\end{equation*}
	We simplify the quantity $\mathbb{E}[\|(\mathbb{P}_{N_{k}}-\mathbb{E}_{N_{k}})H(D,\mathbf{X},Z)\|^{2}]$. Indeed, we have
	\begin{equation*}
		{\small
			\begin{aligned}
				&\mathbb{E}[\|(\mathbb{P}_{N_{k}}-\mathbb{E}_{N_{k}})H(D,\mathbf{X},Z)\|^{2}]\\&=\frac{1}{N_{k}^{2}}\mathbb{E}[\|\underset{s\in\mathcal{D}_{k}}{\sum}\{H(D_{s},\mathbf{X}_{s},Z_{s})-\mathbb{E}[H(D_{s},\mathbf{X}_{s},Z_{s})]\}\|^{2}]\\
				&=\frac{1}{N_{k}^{2}}\underset{s\in\mathcal{D}_{k}}{\sum}\mathbb{E}[\| H(D_{s},\mathbf{X}_{s},Z_{s})-\mathbb{E}[H(D_{s},\mathbf{X}_{s},Z_{s})]\|^{2}]+\frac{1}{N_{k}^{2}}\underset{\substack{s,\bar{s}\in\mathcal{D}_{k}\\s\neq \bar{s}}}{\sum}C_{s\bar{s}}:=\text{I}_{1}+\text{I}_{2},
			\end{aligned}
		}
	\end{equation*}
	where $C_{s\bar{s}}=\mathbb{E}[\langle H_{s}-\mathbb{E}[H_{s}],H_{\bar{s}}-\mathbb{E}[H_{\bar{s}}]\rangle]$ and $H_{s}=H(D_{s},\mathbf{X}_{s},Z_{s})$. Consider the term $\text{I}_{1}$. We have
	\begin{equation*}
		{\small
			\begin{aligned}
				\text{I}_{1}&\lesssim\frac{1}{N_{k}^{2}}\underset{s\in\mathcal{D}_{k}}{\sum}\mathbb{E}[\| H(D_{s},\mathbf{X}_{s},Z_{s})\|^{2}]\\
				&\leq\frac{1}{N_{k}^{2}}\underset{s\in\mathcal{D}_{k}}{\sum}\mathbb{E}[\|H_{1}(D_{s},\mathbf{X}_{s},Z_{s})\|^{2}]
				+\frac{1}{N_{k}^{2}}\underset{s\in\mathcal{D}_{k}}{\sum}\mathbb{E}[\|H_{2}(D_{s},\mathbf{X}_{s},Z_{s})\|^{2}]+\frac{1}{N_{k}^{2}}\underset{s\in\mathcal{D}_{k}}{\sum}\mathbb{E}[\|H_{3}(D_{s},\mathbf{X}_{s},Z_{s})\|^{2}].
			\end{aligned}
		}
	\end{equation*}
	Consider the bound $\mathbb{E}[\|H_{1}(D_{s},\mathbf{X}_{s},Z_{s})\|^{2}]$, Using the assumption that there exists $\epsilon>0$ such that $\hat{\pi}^{k}_{d^i}(\mathbf{X})$ and $\pi_{d^i}(\mathbf{X})$ are bounded below by $\epsilon\leq\hat{\pi}^{k}_{d^i}(\mathbf{X}),\pi_{d^i}(\mathbf{X})\leq 1-\epsilon$, we have
	\begin{equation*}
		{\small
			\begin{aligned}
				&\mathbb{E}[\|H_{1}(D_{s},\mathbf{X}_{s},Z_{s})\|^{2}]=\mathbb{E}\bigg[\bigg\|\frac{\mathbf{1}_{\{D=d^{i}\}}Z\{\pi_{d^i}(\mathbf{X})-\hat{\pi}^{k}_{d^i}(\mathbf{X})\}}{\hat{\pi}^{k}_{d^i}(\mathbf{X})\pi_{d^i}(\mathbf{X})}\bigg\|^{2}\bigg]\\
				\lesssim&\mathbb{E}[\|Z\{\pi_{d^i}(\mathbf{X})-\hat{\pi}^{k}_{d^i}(\mathbf{X})\}\|^{2}]=\mathbb{E}[|\pi_{d^i}(\mathbf{X})-\hat{\pi}^{k}_{d^i}(\mathbf{X})|^{2}\|Z\|^{2}]\\
				\leq&\big(\mathbb{E}[|\pi_{d^i}(\mathbf{X})-\hat{\pi}^{k}_{d^i}(\mathbf{X})|^{4}]\big)^{\frac{1}{2}}\big(\mathbb{E}[\|Z\|^{4}]\big)^{\frac{1}{2}}\lesssim \rho_{\pi}^{2}.
			\end{aligned}
		}
	\end{equation*}
	Next, consider the bound $\mathbb{E}[\|H_{2}(D_{s},\mathbf{X}_{s},Z_{s})\|^{2}]$. Again, using the assumption that there exists $\epsilon>0$ such that $\hat{\pi}^{k}_{d^i}(\mathbf{X})$ and $\pi_{d^i}(\mathbf{X})$ are bounded below by $\epsilon\leq\hat{\pi}^{k}_{d^i}(\mathbf{X}),\pi_{d^i}(\mathbf{X})\leq 1-\epsilon$, we have
	\begin{equation*}
		{\small
			\begin{aligned}
				&\mathbb{E}[\|H_{2}(D_{s},\mathbf{X}_{s},Z_{s})\|^{2}]\\&=\mathbb{E}\bigg[\bigg\|\frac{\mathbf{1}_{\{D=d^{i}\}}\{\hat{\pi}^{k}_{d^i}(\mathbf{X})m_{d^i}(\mathbf{X})-\pi_{d^i}(\mathbf{X})\tilde{m}_{d^i}^k(\mathbf{X})\}}{\hat{\pi}^{k}_{d^i}(\mathbf{X})\pi_{d^i}(\mathbf{X})}\bigg\|^{2}\bigg]\\
				\lesssim&\mathbb{E}\bigg[\big\|\hat{\pi}^{k}_{d^i}(\mathbf{X})m_{d^i}(\mathbf{X})-\pi_{d^i}(\mathbf{X})m_{d^i}(\mathbf{X})\big\|^{2}\bigg]+\mathbb{E}\bigg[\big\|\pi_{d^i}(\mathbf{X})m_{d^i}(\mathbf{X})-\pi_{d^i}(\mathbf{X})\tilde{m}_{d^i}^k(\mathbf{X})\big\|^{2}\bigg]\\
				\lesssim&\mathbb{E}[|\hat{\pi}^{k}_{d^i}(\mathbf{X})-\pi_{d^i}(\mathbf{X})|^{2}\|m_{d^i}(\mathbf{X})\|^{2}]+\mathbb{E}[\|m_{d^i}(\mathbf{X})-\tilde{m}_{d^i}^k(\mathbf{X})\|^{2}]\\
				\leq&(\mathbb{E}[|\hat{\pi}^{k}_{d^i}(\mathbf{X})-\pi_{d^i}(\mathbf{X})|^{4}])^{\frac{1}{2}}(\mathbb{E}[\|m_{d^i}(\mathbf{X})\|^{4}])^{\frac{1}{2}}+\mathbb{E}[\|m_{d^i}(\mathbf{X})-\tilde{m}_{d^i}^k(\mathbf{X})\|^{2}]\lesssim\rho_{\pi}^{2}+\rho_{m}^{2}.
			\end{aligned}
		}
	\end{equation*}
	Besides, we can bound $\mathbb{E}[\|H_{3}(D_{s},\mathbf{X}_{s},Z_{s})\|^{2}]$ similarly. Indeed, we have
	\begin{equation*}
		{\small
			\begin{aligned}
				&\mathbb{E}[\|H_{3}(D_{s},\mathbf{X}_{s},Z_{s})\|^{2}]=\mathbb{E}[\|\tilde{m}_{d^i}^k(\mathbf{X})-m_{d^i}(\mathbf{X})\|^{2}]\leq\rho_{m}^{2}.
			\end{aligned}
		}
	\end{equation*}
	
	As a result, we have 
	\begin{equation*}
		{\small
			\begin{aligned}
				\text{I}_{1}\lesssim\frac{1}{N_{k}}(\rho_{\pi}^{2}+\rho_{m}^{2})=\frac{N}{N_{k}}\frac{1}{N}(\rho_{\pi}^{2}+\rho_{m}^{2})\lesssim\frac{1}{N}(\rho_{\pi}^{2}+\rho_{m}^{2}).
			\end{aligned}
		}
	\end{equation*} 
	We now consider the quantity $\text{I}_{2}$. Note that
	\begin{equation*}
		{\small
			\begin{aligned}
				H(D,\mathbf{X},Z)&=\bigg[\frac{\mathbf{1}_{\{D=d^{i}\}}(Z-\tilde{m}_{d^i}^k(\mathbf{X}))}{\hat{\pi}^{k}_{d^i}(\mathbf{X})}+\tilde{m}_{d^i}^k(\mathbf{X})-\frac{\mathbf{1}_{\{D=d^{i}\}}(Z-m_{i}(\mathbf{X}))}{\pi_{d^i}(\mathbf{X})}-m_{d^i}(\mathbf{X})\bigg]\\
				&=G(D,\mathbf{X},Z)-\frac{\mathbf{1}_{\{D=d^{i}\}}(Z-m_{d^i}(\mathbf{X}))}{\pi_{d^i}(\mathbf{X})}
			\end{aligned}
		}
	\end{equation*}
	and $\mathbb{E}[H(D,\mathbf{X},Z)]=\mathbb{E}[G(D,\mathbf{X},Z)]$. Abbreviating the notation that $G(D,\mathbf{X},Z)=G$, since the $s^{\text{th}}$-unit and the $\bar{s}^{\text{th}}$-unit are independent of each other, we have
	\begin{equation*}
		{\small
			\begin{aligned}
				&\mathbb{E}[\langle H_{s}-\mathbb{E}[H_{s}],H_{\bar{s}}-\mathbb{E}[H_{\bar{s}}]\rangle]=\mathbb{E}[\langle G_{s}-\mathbb{E}[G_{s}],G_{\bar{s}}-\mathbb{E}[G_{\bar{s}}]\rangle]\\
				=&\mathbb{E}[\langle G_{s},G_{\bar{s}}\rangle]-\langle \mathbb{E}[G_{s}],\mathbb{E}[G_{\bar{s}}]\rangle\lesssim\|\mathbb{E}[G_{s}]\|\times\|\mathbb{E}[G_{\bar{s}}]\|.
			\end{aligned}
		}
	\end{equation*}
	Note that
	\begin{equation*}
		{\small
			\begin{aligned}
				&\mathbb{E}[G_{s}]=\mathbb{E}\bigg[\frac{\mathbf{1}_{\{D=d^{i}\}}\{Z-\tilde{m}_{d^i}^k(\mathbf{X})\}}{\hat{\pi}^{k}_{d^i}(\mathbf{X})}+\tilde{m}_{d^i}^k(\mathbf{X})-m_{d^i}(\mathbf{X})\bigg]\\
				=&\mathbb{E}\bigg[\mathbb{E}\bigg[\frac{\mathbf{1}_{\{D=d^{i}\}}\{Z-\tilde{m}_{d^i}^k(\mathbf{X})\}}{\hat{\pi}^{k}_{d^i}(\mathbf{X})}|\mathbf{X}\bigg]\bigg]+\mathbb{E}[\tilde{m}_{d^i}^k(\mathbf{X})-m_{d^i}(\mathbf{X})]\\
				=&\mathbb{E}\bigg[\frac{\mathbb{E}[\mathbf{1}_{\{D=d^{i}\}}|\mathbf{X}]}{\hat{\pi}^{k}_{d^i}(\mathbf{X})}\{m_{d^i}(\mathbf{X})-\tilde{m}_{d^i}^k(\mathbf{X})\}\bigg]+\mathbb{E}[\tilde{m}_{d^i}^k(\mathbf{X})-m_{d^i}(\mathbf{X})]\\
				=&\mathbb{E}\bigg[\frac{\pi_{d^i}(\mathbf{X})-\pi^{k}_{d^i}(\mathbf{X})}{\hat{\pi}^{k}_{d^i}(\mathbf{X})}\{m_{d^i}(\mathbf{X})-\tilde{m}_{d^i}^k(\mathbf{X})\}\bigg]\\
				\lesssim&\mathbb{E}[\{\pi_{d^i}(\mathbf{X})-\pi^{k}_{d^i}(\mathbf{X})\}\{m_{d^i}(\mathbf{X})-\tilde{m}_{d^i}^k(\mathbf{X})\}].
			\end{aligned}
		}
	\end{equation*}
	Hence, we have
	\begin{equation*}
		{\small
			\begin{aligned}
				\|\mathbb{E}[G_{s}]\|&\lesssim\mathbb{E}[|\pi_{d^i}(\mathbf{X})-\pi^{k}_{d^i}(\mathbf{X})|\times\|m_{d^i}(\mathbf{X})-\tilde{m}_{d^i}^k(\mathbf{X})\|]\\
				&\leq(\mathbb{E}[|\pi_{d^i}(\mathbf{X})-\pi^{k}_{d^i}(\mathbf{X})|^{2}])^{\frac{1}{2}}(\mathbb{E}[\|m_{d^i}(\mathbf{X})-\tilde{m}_{d^i}^k(\mathbf{X})\|^{2}])^{\frac{1}{2}}\\
				&\leq(\mathbb{E}[|\pi_{d^i}(\mathbf{X})-\pi^{k}_{d^i}(\mathbf{X})|^{4}])^{\frac{1}{4}}\rho_{m}=\rho_{\pi}\rho_{m}.
			\end{aligned}
		}
	\end{equation*}
	Hence, we have $C_{s\bar{s}}\lesssim\rho_{\pi}^{2}\rho_{m}^{2}$ and $\text{I}_{2}\lesssim\big(1-\frac{1}{N_{k}}\big)\rho_{\pi}^{2}\rho_{m}^{2}$. As a result, we can show that
	\begin{equation*}
		\begin{aligned}
			\mathbb{E}[\|\text{I}\|^{2}]=O(N^{-1}\rho_{\pi}^{2}+N^{-1}\rho_{m}^{2}+\rho_{\pi}^{2}\rho_{m}^{2}).
		\end{aligned}
	\end{equation*}
	Thus, we have $\text{I}=O_{P}(N^{-\frac{1}{2}}\rho_{\pi}+N^{-\frac{1}{2}}\rho_{m}+\rho_{\pi}\rho_{m})$.

	\noindent\underline{Boundness of II:} Note that the quantity II does not involve any estimation of nuisance functions based on the observed dataset (the function $\pi_{d^i}$) and $m_{d^i}$ are the limits of some estimated sequences). It is equivalent to the following problem:
	
	\begin{claim*}
		Given a random quantity $W$ such that $\mathbb{E}[W]<\infty$, the sample averaging of $W$ (i.e., $\frac{1}{N}\underset{s=1}{\overset{N}{\sum}}W_{s}$) converges to $\mathbb{E}[W]$ in probability, or $ \frac{1}{N}\underset{s=1}{\overset{N}{\sum}}W_{s}-\mathbb{E}[W]=O_{P}\big(\frac{1}{\sqrt{N}}\big)$.
	\end{claim*}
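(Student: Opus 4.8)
The plan is to read this claim as a weak law of large numbers carrying a $\sqrt{N}$ rate for the Hilbert-space-valued summands $W_s$, and to prove it by a second-moment (Chebyshev) argument rather than by any delicate empirical-process machinery. The ambient space is the inner-product space introduced above, with $\langle h_1, h_2\rangle = \int h_1(t) h_2(t)\,dt$ and $\|h\|^2 = \int h^2(t)\,dt$, so all norms below are this $L^2$ norm. Because the $N$ units are independent and identically distributed, the summands $W_1,\dots,W_N$ are i.i.d.\ copies of $W$; this is the only structural input I need.

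First I would center the variables. Writing $\bar{W}_s = W_s - \mathbb{E}[W]$, each $\bar W_s$ has mean zero, and the object of interest is $\frac{1}{N}\sum_{s=1}^N \bar W_s$. The key computation is the second moment of its norm,
\begin{equation*}
\mathbb{E}\Big[\Big\|\tfrac{1}{N}\sum_{s=1}^N \bar W_s\Big\|^2\Big] = \frac{1}{N^2}\sum_{s=1}^N\sum_{\bar s=1}^N \mathbb{E}\big[\langle \bar W_s, \bar W_{\bar s}\rangle\big].
\end{equation*}
By independence and $\mathbb{E}[\bar W_s]=0$, every off-diagonal term ($s\neq \bar s$) factorizes into $\langle \mathbb{E}[\bar W_s], \mathbb{E}[\bar W_{\bar s}]\rangle = 0$, exactly as in the treatment of $C_{s\bar s}$ under Boundness of I. Only the $N$ diagonal terms survive, each equal to $\mathbb{E}[\|\bar W\|^2]$, so the display collapses to $\frac{1}{N}\mathbb{E}[\|\bar W\|^2] \le \frac{1}{N}\mathbb{E}[\|W\|^2]$. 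Applying Markov's inequality to the nonnegative random variable $\big\|\frac{1}{N}\sum_s \bar W_s\big\|^2$ then yields, for every $M>0$,
\begin{equation*}
\mathbb{P}\Big\{\Big\|\tfrac{1}{N}\sum_{s=1}^N W_s - \mathbb{E}[W]\Big\| > \tfrac{M}{\sqrt{N}}\Big\} \le \frac{N}{M^2}\,\mathbb{E}\Big[\Big\|\tfrac{1}{N}\sum_{s=1}^N \bar W_s\Big\|^2\Big] \le \frac{\mathbb{E}[\|W\|^2]}{M^2},
\end{equation*}
which is precisely the tightness statement $\frac{1}{N}\sum_s W_s - \mathbb{E}[W] = O_{P}(N^{-1/2})$, and in particular gives convergence in probability.

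The one substantive point — and the place where the generic claim must be reconciled with its use in term II — is the finite second moment $\mathbb{E}[\|W\|^2]<\infty$, since the stated hypothesis $\mathbb{E}[W]<\infty$ alone yields only a rate-free law of large numbers. For the concrete summand $W = \frac{\mathbf{1}_{\{D=d^{i}\}}(Z - m_{d^i}(\mathbf{X}))}{\pi_{d^i}(\mathbf{X})} + m_{d^i}(\mathbf{X})$ this is where the standing assumptions enter: positivity bounds $1/\pi_{d^i}(\mathbf{X})$ by $1/\epsilon$, and a triangle inequality followed by Cauchy--Schwarz reduces $\mathbb{E}[\|W\|^2]$ to the moment bounds $\mathbb{E}[\|\mathcal{Y}^{-1}\|^{4}]<\infty$ and $\mathbb{E}[\|m_{d^i}(\mathbf{X})\|^{4}]<\infty$ assumed earlier in this appendix. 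I expect this moment bookkeeping, rather than the probabilistic core, to be the only mildly delicate part; once it is in place the Chebyshev step is automatic.
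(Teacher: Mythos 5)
Your proof is correct and takes essentially the same route as the paper: the paper never proves the Claim separately but applies it through exactly the Chebyshev/second-moment computation you carry out, bounding $\mathbb{P}\big\{\sqrt{N}\,|\text{II}|\geq M\big\}$ by $\frac{N}{N_{k}}\,\mathbb{E}\big[\big(\frac{\mathbf{1}_{\{D=d^{i}\}}(Z-m_{d^i}(\mathbf{X}))}{\pi_{d^i}(\mathbf{X})}+m_{d^i}(\mathbf{X})\big)^{2}\big]/M^{2}$ and choosing $M$ large. Your further remark that the stated hypothesis $\mathbb{E}[W]<\infty$ is too weak for the $O_{P}(N^{-1/2})$ rate --- finite second moment is what is really needed, and what the paper implicitly supplies through the overlap bound $\pi_{d^i}(\mathbf{X})\geq\epsilon$ and the moment assumptions $\mathbb{E}[\|\mathcal{Y}^{-1}\|^{4}]<\infty$, $\mathbb{E}[\|m_{d^i}(\mathbf{X})\|^{4}]<\infty$ --- is a correct sharpening of the claim as written.
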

	\noindent Using the Claim and Assumption \ref{ass:assumption6}, we have
	\begin{equation*}
		{\small
			\begin{aligned}
				&\mathbb{P}\bigg\{\bigg|\frac{\text{II}}{\frac{1}{\sqrt{N}}}\bigg|\geq M\bigg\}=\mathbb{P}\bigg\{\bigg|\frac{(\mathbb{P}_{N_{k}}-\mathbb{E}_{N_{k}})\bigg[\frac{\mathbf{1}_{\{D=d^{i}\}}(Z-m_{d^i}(\mathbf{X}))}{\pi_{d^i}(\mathbf{X})}+m_{d^i}(\mathbf{X})\bigg]}{\frac{1}{\sqrt{N}}}\bigg|\geq M\bigg\}\\
				\leq&\frac{N}{N_{k}}\frac{\mathbb{E}\bigg[\bigg(\frac{\mathbf{1}_{\{D=d^{i}\}}(Z-m_{d^i}(\mathbf{X}))}{\pi_{d^i}(\mathbf{X})}+m_{d^i}(\mathbf{X})\bigg)^{2}\bigg]}{M^{2}}.
			\end{aligned}
		}
	\end{equation*}
	We can choose sufficiently large $M$ to make {\small $\mathbb{P}\bigg\{\bigg|\frac{\text{II}}{\frac{1}{\sqrt{N}}}\bigg|\geq M\bigg\}<\epsilon$}. Hence, we have $\text{II}=O_{P}(N^{-\frac{1}{2}})$.
	
	\noindent\underline{Boundness of III:} For simplicity, we denote
	\begin{equation*}
		{\small
			\begin{aligned}
				\mathcal{A}=\mathbb{E}_{N_{k}}\bigg[\frac{(\tilde{m}_{d^i}^k(\mathbf{X})-m_{d^i}(\mathbf{X}))(\hat{\pi}^{k}_{d^i}(\mathbf{X})-\mathbf{1}_{\{D=d^{i}\}})}{\hat{\pi}^{k}_{d^i}(\mathbf{X})}\bigg].
			\end{aligned}
		}
	\end{equation*}
	
	We consider the quantity $\mathbb{E}[\|\mathcal{A}\|]$. Since $\mathcal{A}$ is an expectation already, we have $\mathbb{E}[\|\mathcal{A}\|]=\|\mathcal{A}\|$. Further, we can simplify $\|\mathcal{A}\|$ as follows:
	\begin{equation*}
		{\small
			\begin{aligned}
				\|\mathcal{A}\|&=\bigg\|\mathbb{E}_{N_{k}}\bigg[\frac{(\tilde{m}_{d^i}^k(\mathbf{X})-m_{d^i}(\mathbf{X}))(\hat{\pi}^{k}_{d^i}(\mathbf{X})-\mathbf{1}_{\{D=d^{i}\}})}{\hat{\pi}^{k}_{d^i}(\mathbf{X})}\bigg]\bigg\|\\
				&=\bigg\|\mathbb{E}\bigg[\frac{(\tilde{m}_{d^i}^k(\mathbf{X})-m_{d^i}(\mathbf{X}))(\hat{\pi}^{k}_{d^i}(\mathbf{X})-\mathbf{1}_{\{D=d^{i}\}})}{\hat{\pi}^{k}_{d^i}(\mathbf{X})}\bigg]\bigg\|\\
				&=\bigg\|\mathbb{E}\bigg[\mathbb{E}\bigg[\frac{(\tilde{m}_{d^i}^k(\mathbf{X})-m_{d^i}(\mathbf{X}))(\hat{\pi}^{k}_{d^i}(\mathbf{X})-\mathbf{1}_{\{D=d^{i}\}})}{\hat{\pi}^{k}_{d^i}(\mathbf{X})}|\mathbf{X}\bigg]\bigg]\bigg\|\\
				&=\bigg\|\mathbb{E}\bigg[\frac{(\tilde{m}_{d^i}^k(\mathbf{X})-m_{d^i}(\mathbf{X}))}{\hat{\pi}^{k}_{d^i}(\mathbf{X})}(\hat{\pi}^{k}_{d^i}(\mathbf{X})-\mathbb{E}[\mathbf{1}_{\{D=d^{i}\}}|\mathbf{X}])\bigg]\bigg\|\\
				&=\bigg\|\mathbb{E}\bigg[\frac{(\tilde{m}_{d^i}^k(\mathbf{X})-m_{d^i}(\mathbf{X}))}{\hat{\pi}^{k}_{d^i}(\mathbf{X})}(\hat{\pi}^{k}_{d^i}(\mathbf{X})-\pi_{d^i}(\mathbf{X}))\bigg]\bigg\|\\
				&\lesssim\|\mathbb{E}[(\tilde{m}_{d^i}^k(\mathbf{X})-m_{d^i}(\mathbf{X}))(\hat{\pi}^{k}_{d^i}(\mathbf{X})-\pi_{d^i}(\mathbf{X}))]\|\\
				&\leq\mathbb{E}[|\hat{\pi}^{k}_{d^i}(\mathbf{X})-\pi_{d^i}(\mathbf{X})|\|(\tilde{m}_{d^i}^k(\mathbf{X})-m_{d^i}(\mathbf{X}))\|]\\
				&\leq\big(\mathbb{E}[|\hat{\pi}^{k}_{d^i}(\mathbf{X})-\pi_{d^i}(\mathbf{X})|^{2}]\big)^{\frac{1}{2}}\big(\mathbb{E}[\|(\tilde{m}_{d^i}^k(\mathbf{X})-m_{d^i}(\mathbf{X}))\|^{2}]\big)^{\frac{1}{2}}\\
				&\leq\big(\mathbb{E}[|\hat{\pi}^{k}_{d^i}(\mathbf{X})-\pi_{d^i}(\mathbf{X})|^{4}]\big)^{\frac{1}{4}}\big(\mathbb{E}[\|(\tilde{m}_{d^i}^k(\mathbf{X})-m_{d^i}(\mathbf{X}))\|^{2}]\big)^{\frac{1}{2}}\\ &\leq\rho_{\pi}\rho_{m}.
			\end{aligned}
		}
	\end{equation*}
	\noindent\underline{Boundness of IV:} Let $\mathcal{A}=\mathbb{P}_{N_{k}}\bigg\{\bigg(1-\frac{\mathbf{1}_{\{D=d^{i}\}}}{\hat{\pi}^{k}_{d^i}(\mathbf{X})}\bigg)D_{d^i}^k(\mathbf{X})\bigg\}$.
	
	Consider $\|\mathcal{A}\|^{2}$. We have
	\begin{equation*}
		{\small
			\begin{aligned}
				&\|\mathcal{A}\|^{2}=\bigg\|\mathbb{P}_{N_{k}}\bigg\{\bigg(1-\frac{\mathbf{1}_{\{D=d^{i}\}}}{\hat{\pi}^{k}_{d^i}(\mathbf{X})}\bigg)D_{d^i}^k(\mathbf{X})\bigg\}\bigg\|^{2}\\
				=&\underbrace{\frac{1}{N_{k}^{2}}\underset{s\in\mathcal{D}_{k}}{\sum}\bigg\|\bigg(1-\frac{\mathbf{1}_{\{D_{s}=d^{i}\}}}{\hat{\pi}^{k}_{d^i}(\mathbf{X}_{s})}\bigg)D_{d^i}^k(\mathbf{X}_{s})\bigg\|^{2}}_{\text{IV}_{1}}
				+\underbrace{\frac{1}{N_{k}^{2}}\underset{\substack{s,\bar{s}\in\mathcal{D}^{k}\\s\ne\bar{s}}}{\sum}\langle(1-\frac{\mathbf{1}_{\{D_{s}=d^{i}\}}}{\hat{\pi}^{k}_{d^i}(\mathbf{X}_{s})})D_{i,k}(\mathbf{X}_{s}),(1-\frac{\mathbf{1}_{\{D_{\bar{s}}=d^{i}\}}}{\hat{\pi}^{k}_{d^i}(\mathbf{X}_{\bar{s}})})D_{d^i}^k(\mathbf{X}_{\bar{s}})\rangle}_{\text{IV}_{2}}.
			\end{aligned}
		}
	\end{equation*}
	Consider $\text{IV}_{1}$ first. Using Assumption \ref{ass:assumption3}, we see that $\text{IV}_{1}\leq\frac{c}{N_{k}^{2}}\underset{s\in\mathcal{D}^k}{\sum}\|D_{d^i}^k(\mathbf{X}_{s})\|^{2}$ for some constant $c$. Note that, for any $\delta>0$, we have
	
	\begin{equation*}
		\begin{aligned}
			&\mathbb{P}\bigg\{\frac{1}{N_{k}}\underset{s\in\mathcal{D}^{k}}{\sum}\|D_{d^i}^k(\mathbf{X}_{s})\|^{2}\geq\frac{\vvvert \hat{m}_{d^i}^k-\tilde{m}_{d^i}^k\vvvert^{2}}{\delta}\bigg\}\\
			\leq&\frac{\delta\mathbb{E}\bigg[\frac{1}{N_{k}}\underset{s\in\mathcal{D}^{k}}{\sum}\|D_{d^i}^k(\mathbf{X}_{s})\|^{2}\bigg]}{\vvvert \hat{m}_{d^i}^k-\tilde{m}_{d^i}^k\vvvert^{2}}=\frac{\delta\mathbb{E}[\|D_{d^i}^k(\mathbf{X})\|^{2}]}{\vvvert \hat{m}_{d^i}^k-\tilde{m}_{d^i}^k\vvvert^{2}}=\delta.
		\end{aligned}
	\end{equation*}
	Indeed, the inequality follows from Markov inequality. The last equality follows from the Definition of $\vvvert \cdot\vvvert^{2}$. According to the definition, we have $\mathbb{E}[\|D_{d^i}^k(\mathbf{X})\|^{2}]=\vvvert \hat{m}_{d^i}^k-\tilde{m}_{d^i}^k\vvvert^{2}$. It means that $\frac{1}{N_{k}}\underset{s\in\mathcal{D}^{k}}{\sum}\|D_{d^i}^k(\mathbf{X}_{s})\|^{2}=O_{P}\big(\vvvert \hat{m}_{d^i}^k-\tilde{m}_{d^i}^k\vvvert^{2}\big)$. Hence, we note that $\text{IV}_{1}=\frac{1}{N_{k}}\times\frac{1}{N_{k}}\underset{s\in\mathcal{D}^{k}}{\sum}\|D_{d^i}^k(\mathbf{X}_{s})\|^{2}=\frac{N}{N_{k}}\times\frac{1}{N}\times O_{P}\big(\vvvert\hat{m}_{d^i}^k-\tilde{m}_{d^i}^k\vvvert^{2}\big)$. Using Assumptions 3 and 4, we have $\text{IV}_{1}=O_{P}(N^{-2}+N^{-1}\alpha_{N}^{2}+N^{-1}\nu_{N}^{2})$.
	
	\noindent Next, we consider $\text{IV}_{2}$. Let 
	\begin{equation*}
		\begin{aligned}
			\mathcal{A}&=(\mathbb{E}[\| D_{d^i}^k(\mathbf{X}_{s})\|^{4}])^{\frac{1}{4}}(\mathbb{E}[\|D_{d^i}^k(\mathbf{X}_{\bar{s}})\|^{4}])^{\frac{1}{4}}\times\\
			&\quad(\mathbb{E}[(\hat{\pi}_{d^i}^{k}(\mathbf{X}_{s})-\pi_{d^i}(\mathbf{X}_{s}))^{4}])^{\frac{1}{4}}\mathbb{E}[(\hat{\pi}_{d^i}^{k}(\mathbf{X}_{\bar{s}})-\pi_{d^i}(\mathbf{X}_{\bar{s}}))^{4}])^{\frac{1}{4}}.
		\end{aligned}
	\end{equation*}
	Note that, for any $\delta>0$, we have
	\begin{equation*}
		{\small
			\begin{aligned}
				&\mathbb{P}\bigg\{\text{IV}_{2}\geq\frac{\mathcal{A}}{\delta}\bigg\} \\\leq&\frac{\delta\frac{1}{N_{k}^{2}}\underset{\substack{s,\bar{s}\in\mathcal{D}^{k}\\s\neq \bar{s}}}{\sum}\mathbb{E}\bigg[\langle(1-\frac{\mathbf{1}_{\{D_{s}=d^{i}\}}}{\hat{\pi}_{d^i}^{k}(\mathbf{X}_{s})})D_{d^i}^k(\mathbf{X}_{s}),(1-\frac{\mathbf{1}_{\{D_{\bar{s}}=d^{i}\}}}{\hat{\pi}^{k}_{d^i}(\mathbf{X}_{\bar{s}})})D_{d^i}^k(\mathbf{X}_{\bar{s}})\rangle\bigg]}{\mathcal{A}}\\
				\overset{\star}{\leq}&\frac{\delta\frac{1}{N_{k}^{2}}\underset{\substack{s,\bar{s}\in\mathcal{D}^{k}\\s\neq\bar{s}}}{\sum}\mathcal{A}}{\mathcal{A}}=\frac{\delta N_{k}(N_{k}-1)}{N_{k}^2}=\delta\bigg(1-\frac{1}{N_{k}}\bigg)\leq \delta.
			\end{aligned}
		}
	\end{equation*}
	Here, $\overset{\star}{\leq}$ is due to the upper bound of the quantity $\mathbb{E}\big[\langle(1-\frac{\mathbf{1}_{\{D_{s}=d^{i}\}}}{\hat{\pi}_{d^i}^{k}(\mathbf{X}_{s})})D_{d^i}^k(\mathbf{X}_{s}),(1-\frac{\mathbf{1}_{\{D_{\bar{s}}=d^{i}\}}}{\hat{\pi}^{k}_{d^i}(\mathbf{X}_{\bar{s}})})D_{d^i}^k(\mathbf{X}_{\bar{s}})\rangle\big]$. Indeed, using the fact that the unit $s$ and the unit $\bar{s}$ are independent of each other, we have
	\begin{equation*}
		{\small
			\begin{aligned}
				&\mathbb{E}\bigg[\langle(1-\frac{\mathbf{1}_{\{D_{s}=d^{i}\}}}{\hat{\pi}_{d^i}^{k}(\mathbf{X}_{s})})D_{d^i}^k(\mathbf{X}_{s}),(1-\frac{\mathbf{1}_{\{D_{\bar{s}}=d^{i}\}}}{\hat{\pi}^{k}_{d^i}(\mathbf{X}_{\bar{s}})})D_{d^i}^k(\mathbf{X}_{\bar{s}})\rangle\bigg]\\
				=&\mathbb{E}\bigg[(1-\frac{\mathbf{1}_{\{D_{s}=d^{i}\}}}{\hat{\pi}_{d^i}^{k}(\mathbf{X}_{s})})(1-\frac{\mathbf{1}_{\{D_{\bar{s}}=d^{i}\}}}{\hat{\pi}^{k}_{d^i}(\mathbf{X}_{\bar{s}})})\langle D_{d^i}^k(\mathbf{X}_{s}),D_{d^i}^k(\mathbf{X}_{\bar{s}})\rangle\bigg]\\
				=&\mathbb{E}\bigg[\langle D_{d^i}^k(\mathbf{X}_{s}),D_{d^i}^k(\mathbf{X}_{\bar{s}})\rangle\mathbb{E}\bigg[(1-\frac{\mathbf{1}_{\{D_{s}=d^{i}\}}}{\hat{\pi}_{d^i}^{k}(\mathbf{X}_{s})})(1-\frac{\mathbf{1}_{\{D_{\bar{s}}=d^{i}\}}}{\hat{\pi}^{k}_{d^i}(\mathbf{X}_{\bar{s}})})|\mathbf{X}\bigg]\bigg]\\
				=&\mathbb{E}\bigg[\langle D_{d^i}^k(\mathbf{X}_{s}),D_{d^i}^k(\mathbf{X}_{\bar{s}})\rangle\mathbb{E}\bigg[(1-\frac{\mathbf{1}_{\{D_{s}=d^{i}\}}}{\hat{\pi}_{d^i}^{k}(\mathbf{X}_{s})})|\mathbf{X}\bigg]\mathbb{E}\bigg[(1-\frac{\mathbf{1}_{\{D_{\bar{s}}=d^{i}\}}}{\hat{\pi}^{k}_{d^i}(\mathbf{X}_{\bar{s}})})|\mathbf{X}\bigg]\bigg].
			\end{aligned}
		}
	\end{equation*} 
	Now, using Assumption \ref{ass:assumption3}, we can further have
	\begin{equation*}
		{\small
			\begin{aligned}
				&\mathbb{E}\bigg[\langle(1-\frac{\mathbf{1}_{\{D_{s}=d^{i}\}}}{\hat{\pi}_{d^i}^{k}(\mathbf{X}_{s})})D_{d^i}^k(\mathbf{X}_{s}),(1-\frac{\mathbf{1}_{\{D_{\bar{s}}=d^{i}\}}}{\hat{\pi}^{k}_{d^i}(\mathbf{X}_{\bar{s}})})D_{d^i}^k(\mathbf{X}_{\bar{s}})\rangle\bigg]\\
				\lesssim&\mathbb{E}[\langle D_{d^i}^k(\mathbf{X}_{s}),D_{d^i}^k(\mathbf{X}_{\bar{s}})\rangle(\hat{\pi}_{d^i}^{k}(\mathbf{X}_{s})-\pi_{d^i}(\mathbf{X}_{s}))(\hat{\pi}_{d^i}^{k}(\mathbf{X}_{\bar{s}})-\pi_{d^i}(\mathbf{X}_{\bar{s}}))]\\
				\leq&(\mathbb{E}[\langle D_{d^i}^k(\mathbf{X}_{s}),D_{d^i}^k(\mathbf{X}_{\bar{s}})\rangle^{2}])^{\frac{1}{2}}\times\\
				&\quad(\mathbb{E}[(\hat{\pi}_{d^i}^{k}(\mathbf{X}_{s})-\pi_{d^i}(\mathbf{X}_{s}))^{2}(\hat{\pi}_{d^i}^{k}(\mathbf{X}_{\bar{s}})-\pi_{d^i}(\mathbf{X}_{\bar{s}}))^{2}])^{\frac{1}{2}}\\
				\leq&(\mathbb{E}[\| D_{d^i}^k(\mathbf{X}_{s})\|^{2}\|D_{d^i}^k(\mathbf{X}_{\bar{s}})\|^{2}])^{\frac{1}{2}}\times\\
				&\quad(\mathbb{E}[(\hat{\pi}_{d^i}^{k}(\mathbf{X}_{s})-\pi_{d^i}(\mathbf{X}_{s}))^{4}])^{\frac{1}{4}}\mathbb{E}[(\hat{\pi}_{d^i}^{k}(\mathbf{X}_{\bar{s}})-\pi_{d^i}(\mathbf{X}_{\bar{s}}))^{4}])^{\frac{1}{4}}\\
				\leq&(\mathbb{E}[\| D_{d^i}^k(\mathbf{X}_{s})\|^{4}])^{\frac{1}{4}}(\mathbb{E}[\|D_{d^i}^k(\mathbf{X}_{\bar{s}})\|^{4}])^{\frac{1}{4}}\\&(\mathbb{E}[(\hat{\pi}_{d^i}^{k}(\mathbf{X}_{s})-\pi_{d^i}(\mathbf{X}_{s}))^{4}])^{\frac{1}{4}}\mathbb{E}[(\hat{\pi}_{d^i}^{k}(\mathbf{X}_{\bar{s}})-\pi_{d^i}(\mathbf{X}_{\bar{s}}))^{4}])^{\frac{1}{4}}\\
				=&(\mathbb{E}[\| D_{d^i}^k(\mathbf{X})\|^{4}])^{\frac{1}{2}}(\mathbb{E}[(\hat{\pi}_{d^i}^{k}(\mathbf{X})-\pi_{d^i}(\mathbf{X}))^{4}])^{\frac{1}{2}}\\
				=&O_{P}\big(\rho_{\pi}^{2}\vvvert \hat{m}_{d^i}^k-\tilde{m}_{d^i}^k\vvvert^{2}\big).
			\end{aligned}
		}
	\end{equation*}
	Hence, we can conclude that $\text{IV}_{2}=O_{P}(\rho_{\pi}^{2}N^{-1}+\rho_{\pi}^{2}\alpha_{N}^{2}+\rho_{\pi}^{2}\nu^{2})$. Consequently, we obtain that $\|\text{IV}\|_{\lambda}^{2}=O_{P}(N^{-2}+N^{-1}\alpha_{N}^{2}+N^{-1}\nu_{N}^{2}+\rho_{\pi}^{2}N^{-1}+\rho_{\pi}^{2}\alpha_{N}^{2}+\rho_{\pi}^{2}\nu^{2})$, implying that $\text{IV}=O_{P}(N^{-1}+N^{-\frac{1}{2}}\alpha_{N}+N^{-\frac{1}{2}}\nu_{N}+\rho_{\pi}N^{-\frac{1}{2}}+\rho_{\pi}\alpha_{N}+\rho_{\pi}\nu)$.
	
	\noindent\underline{Boundness of V:} Observe that
	%\noindent\underline{Boundness of \eqref{eqt:term5}:} Observe that
	\begin{equation}
		{\small
			\begin{aligned}
				&\mathbb{P}_{N_{k}}\bigg\{\frac{\mathbf{1}_{\{D=d^{i}\}}R}{\hat{\pi}^{k}_{d^i}(\mathbf{X})}\bigg\}\\=&\mathbb{P}_{N_{k}}\bigg\{\frac{\mathbf{1}_{\{D=d^{i}\}}R}{\pi_{d^i}(\mathbf{X})}\bigg\}+\mathbb{P}_{N_{k}}\bigg\{\frac{\mathbf{1}_{\{D=d^{i}\}}R}{\hat{\pi}^{k}_{d^i}(\mathbf{X})}-\frac{\mathbf{1}_{\{D=d^{i}\}}R}{\pi_{d^i}(\mathbf{X})}\bigg\}\\
				=&\mathbb{P}_{N_{k}}\bigg\{\frac{\mathbf{1}_{\{D=d^{i}\}}R}{\pi_{d^i}(\mathbf{X})}\bigg\}+\mathbb{P}_{N_{k}}\bigg\{\frac{\mathbf{1}_{\{D=d^{i}\}}R(\pi_{d^i}(\mathbf{X})-\hat{\pi}^{k}_{d^i}(\mathbf{X}))}{\hat{\pi}^{k}_{d^i}(\mathbf{X})\pi_{d^i}(\mathbf{X})}\bigg\}.\label{eqt:Vterm}
			\end{aligned}
		}
	\end{equation}
	Using the fact that $0<\epsilon\leq \hat{\pi}_{k}^{i}(\mathbf{X}),\pi^{i}(\mathbf{X})\leq 1-\epsilon<1$, we have
	\begin{equation*}
		{\small
			\begin{aligned}
				\text{Eqn. \eqref{eqt:Vterm}}\lesssim&\mathbb{P}_{N_{k}}\bigg\{\frac{\mathbf{1}_{\{D=d^{i}\}}R}{\pi_{d^i}(\mathbf{X})}\bigg\}+\mathbb{P}_{N_{k}}\bigg\{\frac{\mathbf{1}_{\{D=d^{i}\}}R(\pi_{d^i}(\mathbf{X})-\hat{\pi}^{k}_{d^i}(\mathbf{X}))}{\pi_{d^i}(\mathbf{X})}\bigg\}\\
				\leq&\mathbb{P}_{N_{k}}\bigg\{\frac{\mathbf{1}_{\{D=d^{i}\}}R}{\pi_{d^i}(\mathbf{X})}\bigg\}+2\mathbb{P}_{N_{k}}\bigg\{\frac{\mathbf{1}_{\{D=d^{i}\}}R}{\pi_{d^i}(\mathbf{X})}\bigg\}\\&=3\mathbb{P}_{N_{k}}\bigg\{\frac{\mathbf{1}_{\{D=d^{i}\}}R}{\pi_{d^i}(\mathbf{X})}\bigg\}.
			\end{aligned}
		}
	\end{equation*}
	Hence, $\mathbb{P}_{N_{k}}\bigg\{\frac{\mathbf{1}_{\{D=d^{i}\}}R}{\hat{\pi}^{k}_{d^i}(\mathbf{X})}\bigg\}$ is dominated by the quantity $\mathbb{P}_{N_{k}}\bigg\{\frac{\mathbf{1}_{\{D=d^{i}\}}R}{\pi_{d^i}(\mathbf{X})}\bigg\}$. Besides, we can rewrite $\mathbb{P}_{N_{k}}\bigg\{\frac{\mathbf{1}_{\{D=d^{i}\}}R}{\pi_{d^i}(\mathbf{X})}\bigg\}$ such that
	\begin{equation*}
		{\small
			\begin{aligned}
				&\mathbb{P}_{N_{k}}\bigg\{\frac{\mathbf{1}_{\{D=d^{i}\}}R}{\pi_{d^i}(\mathbf{X})}\bigg\}=\mathbb{P}_{N_{k}}\bigg\{\frac{\mathbf{1}_{\{D=d^{i}\}}(\mathcal{L}\hat{\mathcal{Y}}-\mathcal{L}\mathcal{Y})}{\pi_{d^{i}}(\mathbf{X})}\bigg\}.
			\end{aligned}
		}
	\end{equation*}
	
	%\begin{equation*}
	%{\small
		%\begin{aligned}
		%&\mathbb{P}_{N_{k}}\bigg\{\frac{\mathbf{1}_{\{D=d^{i}\}}R}{\pi_{d^i}(\mathbf{X})}\bigg\}=\mathbb{P}_{N_{k}}\bigg\{\frac{\mathbf{1}_{\{D=d^{i}\}}(\mathcal{L}\hat{\mathcal{Y}}-\mathcal{L}\mathcal{Y})}{\pi^{i}(\mathbf{X})}\bigg\}\\
		%\end{aligned}
		%}
	%\end{equation*}
	By Lemma \ref{lemma:2Lemma} and the assumptions of $\alpha_{N}$ and $\nu_{N}$ given in Convergence Assumption \ref{ass:assumption1}, we have $\text{V}=O_{P}(\alpha_{N}+\nu_{N})$.
\end{proof}

\section{Synthetic Experiments}\label{sec:Simulation}
\subsection{Data Generation Process and Dataset Description}
We simulate one synthetic dataset with non-linear causal maps and sample selection bias to test our proposed framework. The data-generating process (DGP) is as follows:

\begin{equation*}
	{\small
		\begin{aligned}
			\mathcal{Y}_s(D_{s})^{-1}&=c+(1-c)(\mathbb{E}[D]+\sqrt{D_s})\times\\
			&\quad\quad\underset{j=1}{\overset{\frac{n}{2}}{\sum}} \frac{\exp(X^{2j-1}_{s}X^{2j}_s)}{\underset{k=1}{\overset{\frac{n}{2}}{\sum}}\exp(X^{2k-1}_{s}X^{2k}_{s})}\mathbf{B}^{-1}(\alpha_j,\beta_j)+\epsilon_{s},\\
			D_s&=\sigma\biggl(\frac{\exp(\gamma_{w}^{T} \mathbf{X}_{s})}{\underset{w=1}{\overset{r}{\sum}}\exp(\gamma_{w}^{T} \mathbf{X}_{s})}\biggl),
		\end{aligned}
	}
\end{equation*}
where $n$ is an even number which indicates the number of covariates, $\mathcal{Y}^{-1}_s$ is the inverse distribution of unit $s$, $\mathbf{X}_{s}$ is the covariates of unit $s$, and $D_{s}$ is the treatment of unit $s$. $\mathbf{B}^{-1}(\alpha,\beta)$ is the inverse cumulative distribution function (CDF) of beta distribution with the shapes' parameters $\alpha$ and $\beta$. $\sigma$ is a function that maps the features to treatment $D$ such that $D$ takes five treatment levels $\{d^1, d^2, d^3, d^4, d^5\}$. $c$ is the constant that controls the strength of the causal relationship between $D$ and $\mathcal{Y}^{-1}$. $\epsilon$ is the noise that follows $N(0,0.05)$. In our experiments, we set $n=10$. We assume that $X^1, X^2 \sim\mathcal{N}(-2,1), X^3, X^4 \sim\mathcal{N}(-1,1), X^5, X^6 \sim\mathcal{N}(0,1), X^7, X^8 \sim\mathcal{N}(1,1), X^9, X^{10} \sim\mathcal{N}(2,1)$. $5$ inverse beta CDFs are needed, and we set each beta distribution with different parameters to ensure the complexity of the distribution function. For each unit $s$, $100$ observations are sampled from inverse CDF using the inverse transform sampling method. In one experiment, $5,000$ instances are generated.
\subsection{Training Details}
The hyperparameters are tuned using the random search for both models, and we set the hyperparameters as follows: learning rate: 0.003, batch size: 128, number of epochs: 150, dropout: 0.1, weight decay: 0.001. We use Adam as the optimizer. The adaptive learning rate is used for training, and if the test accuracy does not decrease for 10 epochs, the learning rate will decrease by half.

%\subsection{Sensitivity to Hyperparameters}
%We conduct more experiments on various hyperparameters to examine the sensitivity of our model with respect to hyperparameters. Specifically, we set the learning rate as [0.001, 0.003, 0.005, 0.007, 0.01], batch size as [64, 128, 256, 512], and the activation function as [Tanh, ReLU, Softplus, ELU]. The experimental results in Figure \ref{Sensitivity} indicate that our model still outperforms the baselines
%under all parameters, and our model is relatively robust with different hyperparameters. 
% \begin{figure*}[tbh]
	% 	\centering
	% 	\subfigure[Learning Rate]{\includegraphics[width=0.3\columnwidth]{}} \quad
	% 	\subfigure[Activation Function]{\includegraphics[width=0.3\columnwidth]{}} \quad
	% 	\subfigure[Batch Size]{\includegraphics[width=0.3\columnwidth]{}}	\quad
	% 	\caption{Sensitivity on different hyperparameters\label{Sensitivity}}
	% \end{figure*}

\begin{table*}[htb]
	\centering
	\caption{The statistical description of training features. \label{tab:statistical decriptions}}
	\begin{tabular}{@{}cccccccc@{}}
		\toprule
		& mean        & std      & min & 25\%    & 50\%   & 75\%    & max     \\ \midrule
		Age           & 35.211    & 7.272    & 21  & 30      & 34     & 39      & 60      \\
		Gender        & 0.598     & 0.490    & 0   & 0       & 1      & 1       & 1       \\
		Num of orders & 209.801   & 108.096  & 78  & 147     & 180    & 232     & 2187    \\
		Default rate  & 0.357     & 0.479    & 0   & 0       & 0      & 1       & 1       \\
		Credit Line   & 10190.840 & 4516.421 & 0   & 7342    & 10000  & 11707   & 51792   \\
		$Q=10\%$      & 24.467    & 21.315   & 0   & 6       & 23.61  & 35.934  & 182.98  \\
		$Q=20\%$      & 41.589    & 33.711   & 0   & 14.88   & 39     & 59.97   & 779     \\
		$Q=30\%$      & 60.316    & 76.464   & 0   & 27.97   & 57.4   & 88.952  & 3999    \\
		$Q=40\%$      & 80.459    & 94.739   & 0   & 43.979  & 80.722 & 104.151 & 4799    \\
		$Q=50\%$      & 105.422   & 120.121  & 0   & 63.995  & 99.93  & 126.665 & 4998.99 \\
		$Q=60\%$      & 139.409   & 161.367  & 0   & 90.767  & 119.24 & 162.587 & 5298    \\
		$Q=70\%$      & 195.481   & 258.942  & 0   & 105.926 & 155    & 215.935 & 7364    \\
		$Q=80\%$      & 298.639   & 415.584  & 0   & 147.824 & 211.64 & 315.036 & 8380    \\
		$Q=90\%$      & 588.585   & 827.340  & 0   & 237.13  & 365.12 & 599.442 & 15320   \\ \bottomrule
	\end{tabular}
\end{table*}

\section{Empirical Experiments}
\label{sec:Empirical Experiments}
Our data is collected from a large E-commerce platform that introduced the ``Buy now, pay later'' (BNPL) credit service to boost the spending of consumers. The objective of our study was to examine the impact of changes in credit lines on spending distribution. We collect data from 4,043 users on the platform over a period of 24 months, from January 2018 to December 2019. The data included demographic information such as gender, age, and location, as well as shopping behaviors such as the amount paid for each order, the total number of orders, and financial information such as the presence of default records, the total number of loans, and the credit line assigned by the platform. To eliminate the impact of the two promotion seasons in June and November, we selected July, August, September, and October of 2019 as our target research period. All the paid amounts of each order by each user during this period constitute a spending distribution for each user.

The statistical descriptions of the aforementioned features are presented in Table \ref{tab:statistical decriptions}. Specifically, the average age of users in our data is 35, the number of males accounts for 59.8\%, and the number of females accounts for 41.2\%. The mean age of the users in our data was found to be 35, with 59.8\% being male and 41.2\% being female. Additionally, the mean number of orders was 210, indicating a relatively high level of activity among the users in our sample. The presence of default records in 35.7\% of the users highlights the need to consider financial stability. Additionally, with respect to the spending distribution, the mean of the distribution, computed from quantiles 0.1 to 0.9, varies from 24.467 to 588.585. The large variance of spending at each quantile highlights the substantial variability of the distribution of consumption among users, demonstrating the diversity of spending patterns among the individuals in the sample.

\begin{table*}[!ht]
	\centering
	\caption{The results of the empirical experiment. \label{tab:detailed res}}
	\begin{tabular}{cccc} 
		\toprule
		\textbf{Quantiles} & \textbf{Low (0-9,000)} & \textbf{Middle (9,000-15,000)} & \textbf{High (\textgreater{}15,000)}  \\ 
		\midrule
		\textbf{5\%}       & 19.9 (19.6, 20.3)      & 21.3 (21.1, 21.4)              & 22.0 (21.2, 22.7)                     \\
		\textbf{10\%}      & 28.0 (27.8, 28.3)      & 29.9 (29.8, 30.0)              & 30.6 (30.0, 31.2)                       \\
		\textbf{15\%}      & 35.4 (35.1, 35.7)      & 37.4 (37.3, 37.6)              & 39.2 (38.6, 39.9)                     \\
		\textbf{20\%}      & 43.6 (43.4, 43.9)      & 47.4 (47.1, 47.8)              & 48.8 (47.9, 49.5)                     \\
		\textbf{25\%}      & 51.2 (50.9, 51.6)      & 57.2 (56.8, 57.5)              & 57.8 (56.9, 58.5)                     \\
		\textbf{30\%}      & 58.7 (58.3, 59.0)      & 65.5 (65.1, 65.9)              & 67.5 (66.5, 68.4)                     \\
		\textbf{35\%}      & 66.6 (66.2, 67.1)      & 75.7 (75.3, 76.1)              & 78.1 (77.0, 79.0)                     \\
		\textbf{40\%}      & 75.2 (74.7, 75.6)      & 86.8 (86.3, 87.2)              & 91.0 (90.0, 92.1)                       \\
		\textbf{45\%}      & 83.9 (83.4, 84.5)      & 99.6 (98.8, 100.5)             & 104.3 (103.3, 105.4)                  \\
		\textbf{50\%}      & 94.9 (94.3, 95.6)      & 115.8 (114.9, 116.9)           & 122.4 (121.1, 123.8)                  \\
		\textbf{55\%}      & 105.6 (104.9, 106.2)   & 131.1 (130.0, 132.2)           & 142.8 (140.6, 145.4)                  \\
		\textbf{60\%}      & 119.0 (118.2, 119.7)   & 150.8 (149.7, 152.0)           & 170.8 (167.4, 174.7)                  \\
		\textbf{65\%}      & 134.8 (133.8, 135.7)   & 174.0 (172.7, 175.3)           & 206.3 (202.3, 210.9)                  \\
		\textbf{70\%}      & 155.1 (153.7, 156.4)   & 207.0 (205.6, 208.5)           & 256.0 (251.8, 261.5)                  \\
		\textbf{75\%}      & 180.7 (178.8, 182.2)   & 259.0 (256.9, 261.1)           & 327.4 (321.5, 334.5)                  \\
		\textbf{80\%}      & 212.9 (210.8, 214.6)   & 325.6 (323.2, 328.3)           & 433.0 (424.4, 442.7)                  \\
		\textbf{85\%}      & 264.1 (260.3, 266.9)   & 434.4 (431.4, 437.7)           & 615.5 (605.7, 628.6)                  \\
		\textbf{90\%}      & 381.0 (374.1, 386.7)   & 654.5 (650.7, 658.4)           & 1020.3 (1003.8, 1036.9)               \\
		\textbf{95\%}      & 801.6 (784.5, 817.2)   & 1217.4 (1211.2, 1225.0)        & 1770.7 (1731.5, 1803.3)              \\ \bottomrule
	\end{tabular}
\end{table*}
The hyperparameters are tuned using random searching for both models. We set the hyperparameters as follows: learning rate: 0.005, batch size: 128, number of epochs: 150, dropout: 0.1, weight decay: 0.001. We use Adam as the optimizer. The adaptive learning rate is used for training, and if the test accuracy does not decrease for 10 epochs, the learning rate will decrease by half. Here, we give the specific values in Table \ref{tab:detailed res}. Generally, the lower quantile stands for some small amount of spending, such as life necessities, while the higher quantiles represent the larger amount of spending, such as luxury goods or services.

Consistent with prior research, our findings indicate a positive correlation between credit lines and spending, highlighting the stimulating impact of credit on consumption. Additionally, we find that such an effect is heterogeneous across different quantiles. Specifically, when the credit lines increase (e.g., from Low to Middle or from Middle to High), the spending on higher quantiles grows significantly (e.g., increase from 801.57 to 1217.40 or from 1217.40 to 1770.66 at quantile 95\%) while the spending on lower quantiles increases relatively slowly (e.g., increase from 19.94 to 21.27 or from 21.27 to 21.96). This suggests that users tend to increase their spending on luxury goods or services when they are able to access credit.

\section{G. Computation Infrastructure}\label{sec:Infrastructure}
All experiments are run on Dell 7920 with Intel(R) Xeon(R) Gold 6250 CPU at 3.90GHz, and a set of NVIDIA Quadro RTX 6000 GP. All models are implemented in Python 3.8. The versions of the main packages of our code are Pytorch 1.8.1+cu102, Sklearn: 0.23.2, Numpy: 1.19.2, Pandas: 1.1.3, Matplotlib:
3.3.2.

  %%% Uncomment this line and comment out the ``thebibliography'' section below to use the external .bib file (using bibtex) 

\end{document}